\documentclass[12pt]{article}

\usepackage{amsmath,amssymb,amsfonts}
\usepackage{amsthm}
\usepackage{booktabs}
\usepackage{multirow}
\usepackage{array}
\usepackage{url}
\usepackage{graphicx}
\usepackage{xcolor}
\usepackage{float}
\usepackage{placeins}
\usepackage{geometry}
\usepackage{hyperref}
\usepackage{adjustbox}

\newtheorem{theorem}{Theorem}
\newtheorem{lemma}{Lemma}
\newtheorem{proposition}{Proposition}
\newtheorem{definition}{Definition}
\newtheorem{remark}{Remark}
\newtheorem{example}{Example}

\newtheorem{corollary}{Corollary}

\newcommand{\Z}{\mathbb{Z}}

\newcommand{\EJ}{\mathrm{EJ}}
\newcommand{\hex}{\mathcal{H}_t}

\title{Local Fault Repair of Perfect Resource Placements in Eisenstein--Jacobi Networks}

\author{Bader A. Albader\\
\small Department of Computer Science, Faculty of Science, Kuwait University, Kuwait\\
\small \texttt{albader@cs.ku.edu.kw}}

\date{}

\begin{document}

\maketitle

\begin{abstract}
Perfect resource placements in dense Eisenstein--Jacobi networks partition the network into hexagonal radius-$t$ service cells.  The fault-free placement problem is already classified; this paper studies the complementary post-deployment problem of locally repairing such placements after resource failures.  For the dense Eisenstein--Jacobi family generated by $\alpha=n+(n-1)\omega$, with $t=n-1$, we first prove failure-cell locality and candidate locality.  For one failed resource, we prove that one nonfailed replacement cannot cover the failed hexagon and that two replacements always suffice; hence $\rho_{\mathrm{EJ}}(t)=2$ for every $t\ge1$.  Among all minimum-size one-fault repairs, we prove the sharp minimum-overlap formula $\Omega_{\mathrm{EJ}}(t)=t^2$.  The lower bound follows from the three-strip geometry of Eisenstein--Jacobi balls: every two-ball cover of a failed hexagonal cell contains a forced $t\times t$ axial interface region.  We then extend the framework beyond one fault.  For two failed resources, independent repair gives a universal four-replacement upper bound, but unlike the Gaussian case the EJ geometry is not always additive: we give explicit three-replacement constructions for two infinite neighboring displacement families and prove algebraic three-strip constructions for the infinite non-additive neighboring families.  For additive neighboring pairs, we prove two algebraic lower-bound mechanisms: an endpoint-rigidity theorem for pairs with separated opposite axial endpoints, and a diagonal-corridor theorem for the remaining family $D=\pm(g_1+g_2)$.  Thus the closed-form two-fault section separates the non-additive neighboring families from the additive mechanisms without leaving the diagonal corridor as an open case.  For $q$ failed resources, independent canonical repair gives a universal $2q$ upper bound, and this bound is exact whenever failed cells are separated by more than $4t$ in Eisenstein--Jacobi distance.  We also identify and prove infinite dense four-fault and six-fault cluster families: the four-fault cluster has exact repair number four rather than eight, and the six-fault cluster has exact repair number five rather than twelve, showing strong multi-fault subadditivity.  For arbitrary multi-fault repairs, we prove an exact inclusion--exclusion identity for repeated coverage inside the failed region and its low-multiplicity specializations.  A second exact optimization audit over 19,400 multi-fault instances for $2\le t\le12$ and $3\le q\le6$ confirms widespread subadditivity, including $q=6$ instances with saving seven relative to independent repair.  The results show that Eisenstein--Jacobi local repair is not a direct copy of the Gaussian case: the one-fault overlap is quadratic, neighboring two-fault repair can drop from four to three, and dense clustered repairs can reuse replacement balls across several failed hexagonal cells.
\end{abstract}

\noindent\textbf{Keywords:} Eisenstein--Jacobi networks; resource placement; perfect domination; local repair; fault tolerance; hexagonal networks; interconnection networks

\section{Introduction}

Resource placement is a basic abstraction for allocating servers, controllers, memory modules, I/O nodes, or monitoring nodes in a large interconnection network.  A radius-$t$ placement is perfect when every network vertex is within distance $t$ of exactly one resource.  Equivalently, the radius-$t$ balls centered at the resource nodes partition the network.

Perfect resource placement in Gaussian and Eisenstein--Jacobi interconnection networks was classified by Flahive and Bose \cite{FlahiveBose2013}.  That result is fault-free: it determines when a perfect placement exists and describes the corresponding lattice of resource nodes.  The present paper starts after deployment.  It assumes that a classified perfect placement is already active and asks what must be done after a resource node fails.

The operational objective is local repair.  If a resource $r$ fails, the vertices outside its former service cell remain served by the surviving resources.  Thus the post-fault task is not to recompute a new global perfect placement, but to activate a small set of nearby replacement resources that covers the failed cell.  This creates a finite geometric optimization problem: minimize the number of replacements and, among minimum-size repairs, minimize the number of vertices that are served more than once inside the failed cell.

From a systems perspective, constant-size repair rules are useful because a controller does not need to solve a global placement problem after a resource fault.  Once the dense EJ placement is deployed, the controller needs only the failed resource coordinate, the radius $t$, and the local displacement class of any clustered fault pattern.  The replacement centers are then obtained from closed-form coordinate rules or from a small local pattern library.  This supports fast reconfiguration in large interconnection networks where global remapping would be expensive, disruptive, or undesirable during transient resource, controller, memory-module, or service-node failures.

This paper develops the Eisenstein--Jacobi analogue of the Gaussian local-repair framework.  The separation is necessary because the two geometries have different extremal structure.  In the Gaussian case, Lee balls become parity-constrained squares after a coordinate rotation, and the sharp one-fault overlap is linear in $t$.  In the Eisenstein--Jacobi case, balls are discrete hexagons described by three axial strip constraints, and the sharp one-fault overlap is quadratic.

Fault tolerance in graph-based networks has also been studied through complementary robustness parameters, including fault-tolerant metric dimension \cite{PrabhuKlavzar2022}, diagnosability of interconnection networks \cite{ChengMaoQiuShen2022}, and fault-tolerant domination variants such as power domination \cite{GirishSomasundaram2024}.  The broader domination literature contains foundational and survey work on perfect and independent domination \cite{LivingstonStout1990,Klostermeyer2015,GoddardHenning2013}.  Resource placement in tori and related interconnection networks was studied earlier in \cite{BaeBose1997,RamanathanChalasani1995,TzengFeng1996}.  Eisenstein--Jacobi and hexagonal network models have been developed in \cite{MartinezEJ2008,FlahiveBose2010,ThomsonZhou2012,AlbaderBoseFlahive2012}.  The present paper is different in focus: the original resource set is already a perfect dominating set, and the objective is to determine the exact local replacement cost after a deployed resource fails.

The contributions are as follows.
\begin{itemize}
    \item We formulate local fault repair for perfect Eisenstein--Jacobi resource placements as a post-deployment domination-recovery problem.
    \item We prove failure-cell locality and candidate locality: after one resource fails, only its former hexagonal service cell can become uncovered, and replacements farther than $2t$ from the failed resource cannot contribute.
    \item We prove the exact one-fault replacement number $\rho_{\mathrm{EJ}}(t)=2$ for all $t\ge1$.
    \item We prove the exact one-fault minimum-overlap formula $\Omega_{\mathrm{EJ}}(t)=t^2$ among minimum-size repairs.
    \item We identify a canonical family of $3t$ two-center repairs attaining the optimum overlap, and we verify by exhaustive enumeration that these are exactly the optimal pairs for $1\le t\le20$.
    \item We prove a universal four-replacement upper bound for two failed resources and show algebraically that dense EJ repair is non-additive for specific neighboring displacement families: three replacements are both sufficient and necessary for those infinite families.
    \item We add endpoint-rigidity and diagonal-corridor theorems for additive two-fault pairs.  Together these close the symbolic additive mechanisms needed after the explicit non-additive families are separated.
    \item We prove a universal $2q$ upper bound for $q$ failed resources by independent canonical repair and prove exact $2q$ additivity for pairwise $4t$-separated failed cells.
    \item We identify dense four-fault and six-fault cluster families with exact repair numbers four and five, respectively, giving proved savings of four and seven replacements relative to independent repair.
    \item We prove an exact multi-failure overlap identity using replacement multiplicities inside the failed region, together with explicit low-multiplicity specializations such as $P_2-P_3+P_4$.
    \item We include reproducible one-fault and multi-fault audits analogous to the Gaussian study, recomputing coverage and overlap directly from lattice geometry.
\end{itemize}

Table~\ref{tab:main-results} summarizes the formal results.

\begin{table}[H]
\caption{Main results and proof locations.}
\label{tab:main-results}
\centering
\scriptsize
\begin{adjustbox}{max width=\textwidth}
\begin{tabular}{p{0.24\textwidth}p{0.30\textwidth}p{0.28\textwidth}}
\toprule
Setting & Result & Proof method \\
\midrule
One failed resource & Only the failed hexagonal cell can become uncovered; candidates outside radius $2t$ are irrelevant & Perfectness and triangle inequality \\
One failed resource & $\rho_{\mathrm{EJ}}(t)=2$ for all $t\ge1$ & Six extreme vertices and explicit two-center cover \\
One failed resource, minimum overlap & $\Omega_{\mathrm{EJ}}(t)=t^2$ & Three-strip slice lower bound and tight axial-interface construction \\
Canonical one-fault repairs & $3t$ canonical optimum-overlap pairs & Three axial orientations and $t$ interface positions per orientation; exact audit for $t\le20$ \\
Two failed resources & $\rho^{(2)}_{\mathrm{EJ}}(t,F)\le4$ always; non-additive $K=3$ neighboring families exist & Independent repair upper bound and explicit three-ball constructions \\
Two-fault validation & $78$ non-additive and $754$ additive neighboring cases for $2\le t\le20$ & Bounded exact optimizer audit; not used as a universal theorem \\
$q$ failed resources & $\rho^{(q)}_{\mathrm{EJ}}(t,F)\le2q$ & Independent translated canonical repairs \\
Pairwise separated failures & $\rho^{(q)}_{\mathrm{EJ}}(t,F)=2q$ if all failed centers are more than $4t$ apart & Candidate-intersection locality plus one-cell lower bound \\
General multi-fault overlap & $O(R)=\sum_{j\ge2}(-1)^jP_j(R)$ & Vertexwise binomial inclusion--exclusion over multiplicities \\
\bottomrule
\end{tabular}
\end{adjustbox}
\end{table}

\section{Eisenstein--Jacobi Preliminaries}

Let
\begin{equation}
    \omega=\frac{-1+i\sqrt{3}}{2}.
\end{equation}
The Eisenstein integer lattice is
\begin{equation}
    \Z[\omega]=\{x+y\omega:x,y\in\Z\}.
\end{equation}
We use axial coordinates and identify $x+y\omega$ with the pair $(x,y)$.  Two vertices are adjacent when their difference is one of the six unit Eisenstein directions
\begin{equation}
    \pm(1,0),\qquad \pm(0,1),\qquad \pm(1,-1).
\end{equation}
The graph distance from the origin is
\begin{equation}
    d_{\EJ}((0,0),(x,y))=\max\{|x|,|y|,|x+y|\}.
    \label{eq:ej-distance}
\end{equation}
By translation,
\begin{equation}
    d_{\EJ}((a,b),(x,y))=
    \max\{|x-a|,|y-b|,|x+y-a-b|\}.
    \label{eq:ej-distance-translated}
\end{equation}

The radius-$t$ ball centered at $c$ is
\begin{equation}
    B_t(c)=\{u:d_{\EJ}(u,c)\le t\}.
\end{equation}
The central ball is the discrete hexagon
\begin{equation}
    \hex=B_t(0)=\{(x,y): |x|\le t,\ |y|\le t,\ |x+y|\le t\}.
    \label{eq:central-ej-ball}
\end{equation}
It has
\begin{equation}
    |B_t(0)|=3t^2+3t+1
    \label{eq:ej-ball-size}
\end{equation}
vertices.  Figure~\ref{fig:ej-three-strip-ball} shows the three-strip description of this hexagonal ball and labels the six extreme vertices used later in the repair lower bounds.

\begin{figure}[H]
\centering
\includegraphics[width=0.55\linewidth]{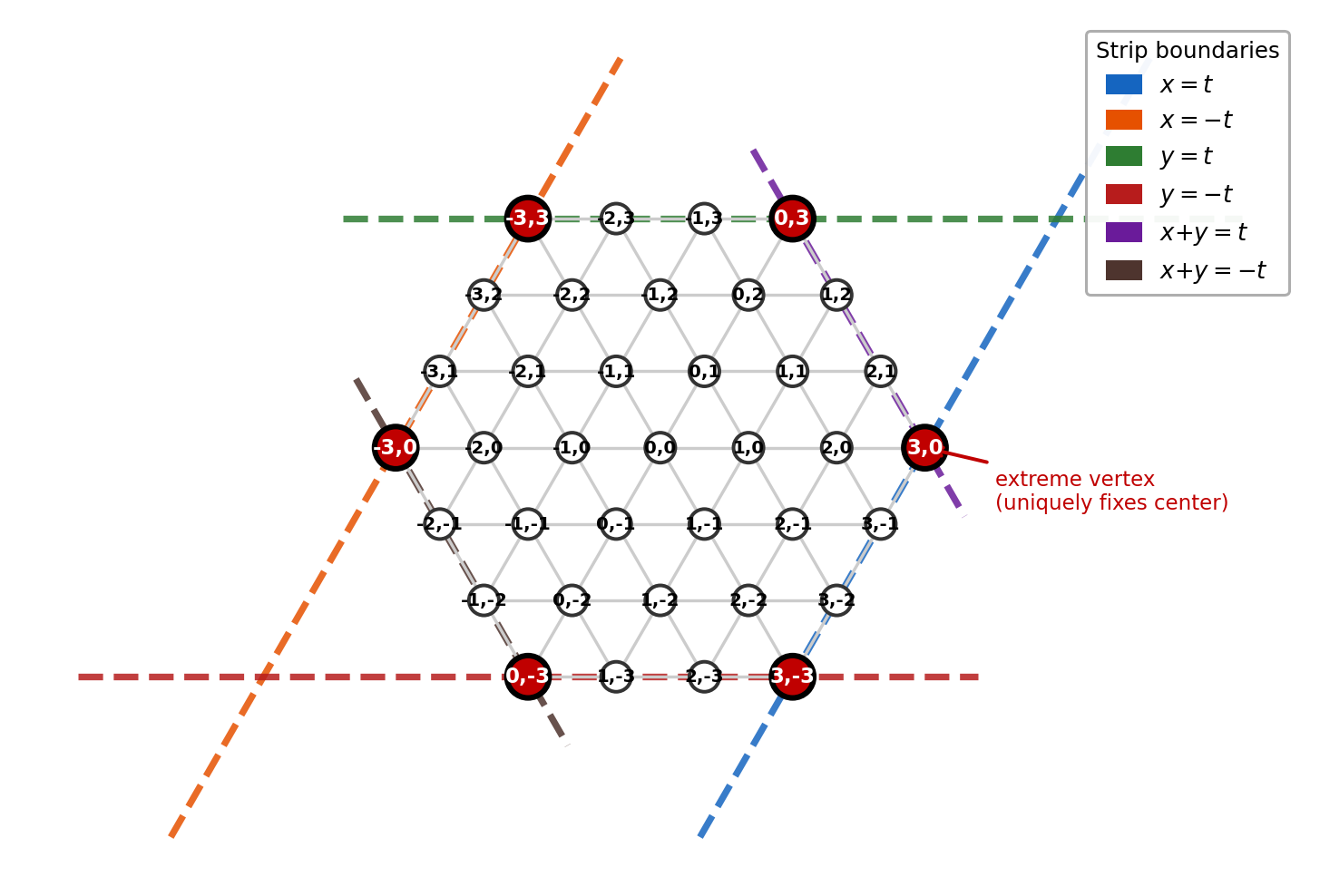}
\caption{EJ radius-$t$ ball as the intersection of three axial strips for $t=3$.  The discrete hexagon $B_t(0)$ is defined by $|x|\le t$, $|y|\le t$, and $|x+y|\le t$; its six extreme vertices determine the ball center and drive the one-fault impossibility proof.}
\label{fig:ej-three-strip-ball}
\end{figure}

For the dense Eisenstein--Jacobi family generated by
\begin{equation}
    \alpha=n+(n-1)\omega,
    \qquad t=n-1,
\end{equation}
the network order is
\begin{equation}
    N=3n^2-3n+1=3t^2+3t+1.
\end{equation}
The integer label associated with the axial coordinate $(x,y)$ is
\begin{equation}
    \phi(x+y\omega)
    \equiv (n-1)x-ny \pmod N.
    \label{eq:ej-labeling}
\end{equation}
The labeling is useful for implementation and for matching numbered EJ-network diagrams.  The proofs below are stated in axial coordinates, where the three-strip structure of the balls is explicit.

\begin{definition}[Perfect EJ $t$-placement]
A resource set $S$ is a perfect EJ $t$-placement if every vertex of the network is within Eisenstein--Jacobi distance $t$ of exactly one resource node in $S$.
\end{definition}

\begin{theorem}[EJ resource-placement classification \cite{FlahiveBose2013}]
Eisenstein--Jacobi networks admit perfect $t$-dominating placements exactly in the canonical divisibility cases.  In the dense case generated by $\alpha=n+(n-1)\omega$, the corresponding placement is a translate of the ideal generated by $\alpha$, with service radius $t=n-1$.
\end{theorem}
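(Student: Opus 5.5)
The classification theorem is cited from Flahive and Bose, so rather than reprove the general divisibility classification, I will verify the dense-case claim directly. The claim is that, for $\alpha=n+(n-1)\omega$ and $t=n-1$, the ideal $\alpha\Z[\omega]$ is a perfect $t$-placement in the quotient network $\Z[\omega]/\alpha\Z[\omega]$. Translates are then automatic, because distance is translation invariant by \eqref{eq:ej-distance-translated}. For the ``exactly the canonical cases'' part I will appeal to \cite{FlahiveBose2013}. The only local ingredient needed there is the counting condition: a perfect placement forces $|B_t|=3t^2+3t+1$ to divide the network order.

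The argument has three steps.

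\textbf{Step 1 (norm).} Compute the network order $N=|\Z[\omega]/(\alpha)|$. This equals the norm $N(\alpha)=a^2-ab+b^2$ with $a=n$ and $b=n-1$, which gives $3n^2-3n+1$. Substituting $t=n-1$ turns this into $3t^2+3t+1=|B_t(0)|$ by \eqref{eq:ej-ball-size}. So a single ball has exactly as many vertices as the network.

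\textbf{Step 2 (injectivity).} Show that the reduction map from $\hex$ to $\Z[\omega]/(\alpha)$ is injective. Equivalently, no nonzero multiple $\beta\alpha$ is a difference of two points of $\hex$. Such a difference lies in $B_{2t}(0)$, so it suffices to show that every nonzero element of the ideal has EJ distance at least $2t+1=2n-1$ from the origin.

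I would do this through the labeling \eqref{eq:ej-labeling}. I will check that $\phi(\alpha)\equiv0$ and $\phi(\alpha\omega)\equiv 0 \pmod N$, so that $\phi$ induces the isomorphism $\Z[\omega]/(\alpha)\cong\Z_N$. I will then show that $\phi$ is injective on $\hex$ by showing that the values $(n-1)x-ny$ for $(x,y)\in\hex$ are pairwise distinct modulo $N$. To see this, sweep $\hex$ row by row in $y$. Each row is a run of consecutive $x$, so it contributes consecutive multiples of $n-1$ offset by $-ny$, and the rows should tile a complete residue system. An alternative is the norm-based bound: every nonzero element $\gamma$ of the ideal has $N(\gamma)\ge N$. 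Since the norm of a point at distance $d$ is at most $d^2$, this gives distance at least $\sqrt{3t^2+3t+1}>t$. That is not enough for $2t+1$, so this bound alone is insufficient. I will therefore use the explicit minimal ideal vectors $\alpha,\alpha\omega,\alpha\omega^2$ and their negatives, all of which have EJ length $2n-1$. I expect the main obstacle to be proving that these are shortest in the EJ metric; I would handle it by a short case check on the three axial coordinates.

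\textbf{Step 3 (conclusion).} Injectivity together with $|\hex|=N$ makes the reduction map a bijection. So every vertex of the network is within distance $t$ of exactly one resource of the ideal, and the radius-$t$ balls around ideal points partition the network. This is exactly the perfect placement in the sense of the definition above.
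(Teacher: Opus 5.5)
\textbf{Verdict: there is a genuine gap.} The two numerical facts your proof rests on are false under the conventions you invoke.

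The paper gives no proof of this statement; it only quotes Flahive--Bose. So deferring the ``exactly'' direction to the citation and checking the dense case directly is a reasonable plan. Its structure is also right: count, then show injectivity of $\hex\to\Z[\omega]/(\alpha)$ via a lower bound $2t+1$ on the length of nonzero ideal elements, then conclude bijection.

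The failures are concrete:
\begin{itemize}
\item In Step~1, $a^2-ab+b^2$ at $(a,b)=(n,n-1)$ equals $n^2-n+1$, not $3n^2-3n+1$. Already for $n=2$, $|2+\omega|^2=3\neq7$.
\item In Step~2, take $\omega=(-1+i\sqrt3)/2$ and $\omega^2=-1-\omega$, the relation behind the paper's unit table. Then $\alpha\omega=(1-n,1)$ and $\alpha\omega^2=(-1,-n)$. Their lengths in the metric \eqref{eq:ej-distance} are $n-1=t$ and $n+1$, not $2n-1$.
\item Hence $\alpha\omega$ lies in $\hex$ together with $0$, and the reduction map is not injective.
\item Your check $\phi(\alpha\omega)\equiv0$ also fails, since $\phi(1-n,1)=-(n^2-n+1)\not\equiv0\pmod N$.
\end{itemize}
Read literally, the ideal $(\alpha)$ has index $n^2-n+1<|\hex|$ for $n\ge2$, so it is not a perfect $t$-placement. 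No ``short case check'' can deliver Step~2. The paper's own generator $g_2=(-t,1)$ is exactly this vector $\alpha\omega$, so the inconsistency comes from the source. Your Step~1 arithmetic conceals it rather than exposing it.

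\textbf{The fix} is to change the convention. The metric \eqref{eq:ej-distance} with units $\pm(1,0),\pm(0,1),\pm(1,-1)$, the order $3n^2-3n+1$, and the labeling \eqref{eq:ej-labeling} all agree once $(x,y)$ is read as $x+y\zeta$. Here $\zeta=1+\omega=e^{i\pi/3}$, $\zeta^2=\zeta-1$, and the norm is $x^2+xy+y^2$, so $\alpha=(n,n-1)$ means $n+(n-1)\zeta$. Equivalently, take the placement to be $\ker\phi=\Z(n,n-1)+\Z(1-n,2n-1)$. With this reading your plan goes through:
\begin{itemize}
\item $N(\alpha)=n^2+n(n-1)+(n-1)^2=3n^2-3n+1$.
\item The associates $\alpha\zeta=(1-n,2n-1)$ and $\alpha\zeta^2=(1-2n,n)$ have length $2n-1$, and $\phi(1-n,2n-1)=-N\equiv0$.
\item The minimum-length bound is a sign check. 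For $v=a(n,n-1)+b(1-n,2n-1)$ one has $x=an-b(n-1)$, $y=a(n-1)+b(2n-1)$ and $x+y=a(2n-1)+bn$. So $x+y\ge2n-1$ if $a\ge1,b\ge0$; $y\ge2n-1$ if $a=0,b\ge1$; $x\ge2n-1$ if $a\ge1,b\le-1$; and $v\mapsto-v$ handles the remaining cases.
\end{itemize}

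Two further points. First, the bijection is really a statement about tiling the whole lattice, and any network generated by a multiple of $\alpha$. In $\Z[\omega]/(\alpha)$ itself the ``placement'' is a single vertex. Second, divisibility of the order by $3t^2+3t+1$ is necessary but not sufficient. Your verification also shows only that the ideal is a perfect placement, not that every perfect placement is a translate of it. Both the ``exactly'' direction and this uniqueness rest on the structural argument in the citation, not on counting.
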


\begin{proposition}[Associate and conjugate generator symmetry]
Let
\begin{equation}
    \alpha_0=n+(n-1)\omega.
\end{equation}
Every dense EJ generator obtained from $\alpha_0$ by multiplication by an Eisenstein unit in
\begin{equation}
    \mathcal U=\{1,-1,\omega,-\omega,\omega^2,-\omega^2\}
\end{equation}
or by first conjugating and then multiplying by a unit has the same local-repair parameters as $\alpha_0$.  In particular, the replacement numbers, overlap values, and dense-cluster identities proved below are invariant under all associate and conjugate companion generator choices.
\end{proposition}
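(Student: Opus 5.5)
The plan is to show that every such generator change is induced by a graph automorphism of the Eisenstein lattice that fixes the origin and preserves $d_{\EJ}$, and that this automorphism carries the perfect placement for one generator onto the placement for the other. All local-repair quantities are defined purely from distances, balls, and the resource set, so they must then agree.

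\textbf{Step 1 (isometries).} Multiplication by a unit $u\in\mathcal U$ is a $\Z$-linear bijection of $\Z[\omega]$. It permutes the six unit directions $\pm1,\pm\omega,\pm\omega^2$, since these are exactly the units. Hence it maps edges to edges and preserves $d_{\EJ}$. Complex conjugation sends $\omega\mapsto\omega^2=-1-\omega$, which in axial coordinates is $(x,y)\mapsto(x-y,-y)$. It also permutes the unit set, so it too preserves $d_{\EJ}$. One can check directly that $\max\{|x-y|,|y|,|x|\}$ equals the expression in \eqref{eq:ej-distance}. Composites of these maps are again isometries, and each of them maps $B_t(c)$ onto $B_t(\sigma(c))$.

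\textbf{Step 2 (placements correspond).} Let $\sigma$ be such a map and $\alpha=\sigma(\alpha_0)$. Both multiplication by a unit and conjugation are ring automorphisms (or unit multiples of one), so $\sigma$ maps the ideal $(\alpha_0)$ onto the ideal $(\sigma(\alpha_0))=(\alpha)$. It also maps cosets of $(\alpha_0)$ to cosets of $(\alpha)$. The network for $\alpha$ is $\Z[\omega]/(\alpha)$, so $\sigma$ descends to an isomorphism of the quotient graphs $\EJ_{\alpha_0}\to\EJ_\alpha$ that preserves distances. By the classification theorem, the dense placement is a translate of the ideal. Therefore $\sigma$ sends the perfect $t$-placement $S_0$ to a perfect $t$-placement $S=\sigma(S_0)$ for $\alpha$, with the same $t=n-1$ because the norm $N(\alpha)=N(\alpha_0)$. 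Any translate can then be absorbed by a translation, which is also an isometry.

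\textbf{Step 3 (transfer of parameters).} Consider a failed set $F$ and a candidate repair $R$ for $(S_0,F)$. The map $\sigma$ sends the failed cells $\bigcup_{f\in F}B_t(f)$ onto $\bigcup B_t(\sigma f)$, sends $R$ to a repair $\sigma(R)$ of the same size, and preserves every coverage multiplicity vertexwise. As a result, repair numbers, overlap values $O(R)$, the multiplicity counts $P_j(R)$, and displacement classes (up to the induced relabeling of $D$) all match bijectively, and so do the cluster identities.

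The step I expect to need the most care is the quotient in Step 2. There I must check that the conjugation map is well defined modulo the ideal, i.e. that $\bar{\ }$ sends $(\alpha_0)$ onto $(\bar\alpha_0)$, which holds since $\overline{\beta\alpha_0}=\bar\beta\bar\alpha_0$. I must also check that the torus distance, a minimum over coset representatives, is preserved. The latter follows because $\sigma$ bijects coset representatives and preserves the lattice distance.
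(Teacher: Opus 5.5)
Your strategy is the same as the paper's. Both proofs exhibit unit multiplication and conjugation as distance-preserving lattice automorphisms, then transport balls, placements, repair sets and multiplicities. Your Steps~2--3 are more explicit than the paper about the ideal and the quotient.

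\textbf{The gap is in Step~1.} Your explicit check is false for the metric the paper actually uses. Under \eqref{eq:ej-distance}, the image $(x-y,-y)$ has distance $\max\{|x-y|,|y|,|x-2y|\}$, not $\max\{|x-y|,|y|,|x|\}$. The map is therefore not an isometry: $(0,1)$, at distance $1$, is sent to $(-1,-1)$, at distance $2$. Your displayed expression and $\max\{|x|,|y|,|x+y|\}$ also already differ at $(1,1)$. Multiplication by $\omega$ fails the same way, since $\omega\cdot(0,1)=(-1,-1)$.

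The cause is a convention clash in the paper's setup, which you inherited:
\begin{itemize}
\item With $\omega=(-1+i\sqrt3)/2$, the units are $\pm(1,0),\pm(0,1),\pm(1,1)$. The Cayley metric is then $\max\{|x|,|y|,|x-y|\}$, for which your expression is exactly the right check, and $N(n+(n-1)\omega)=n^2-n+1$.
\item The adjacency list, \eqref{eq:ej-distance}, and the order $3n^2-3n+1$ all correspond instead to $\omega=(1+i\sqrt3)/2$ with $\omega^2=\omega-1$.
\end{itemize}
So ``it permutes the units, hence preserves $d_{\mathrm{EJ}}$'' is sound for the genuine Cayley graph, but it does not apply to \eqref{eq:ej-distance} in your coordinates.

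The paper's own proof has the identical defect: its table gives $\omega\cdot(1,-1)=(1,2)$, which is not a listed direction. So this is not where you diverge from the paper, but it must be fixed for either argument to hold. The consistent fix is to take $\omega=(1+i\sqrt3)/2$. Then:
\begin{itemize}
\item $\bar\omega=1-\omega$, so conjugation is $(x,y)\mapsto(x+y,-y)$;
\item multiplication by $\omega$ is $(x,y)\mapsto(-y,x+y)$;
\item both images have distance $\max\{|x|,|y|,|x+y|\}$;
\item $\alpha_0$ has norm $3n^2-3n+1$, as the paper states.
\end{itemize}
With that change your argument goes through.

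\textbf{A smaller point on Step~2.} If the host really were $\Z[\omega]/(\alpha)$, the placement would be a single vertex, and every radius-$t$ ball would be the whole network, because the dense network has diameter $t$. The quantities the paper computes, such as $\rho_{\mathrm{EJ}}(t)=2$, would then not be the ones you are transporting. Those quantities live in the infinite lattice, with $S$ a coset of $(\alpha)$. Your transport argument does not depend on the host, so you can run it on $\Z[\omega]$ itself, where no quotient check is needed. Alternatively, use a host $\Z[\omega]/(\beta)$ with $\alpha_0\mid\beta$, which $\sigma$ carries to $\Z[\omega]/(\sigma(\beta))$.

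Note also that $(u\alpha_0)=(\alpha_0)$ for every unit $u$. For associates the placement is literally unchanged, so only conjugation carries any content.
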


\begin{proof}
Write an Eisenstein--Jacobi element as a coordinate pair $(a,b)$ meaning $a+b\omega$.  Multiplication by the six units acts as
\begin{center}
\begin{tabular}{c|c}
unit $u$ & coordinate image of $(a,b)$ for $u(a+b\omega)$ \\
\hline
$1$ & $(a,b)$ \\
$-1$ & $(-a,-b)$ \\
$\omega$ & $(-b,a-b)$ \\
$-\omega$ & $(b,-a+b)$ \\
$\omega^2$ & $(-a+b,-a)$ \\
$-\omega^2$ & $(a-b,a)$
\end{tabular}
\end{center}
Conjugation sends $(a,b)$ to $(a-b,-b)$ because $\overline{\omega}=\omega^2=-1-\omega$.  Applying these maps to $\alpha_0=(n,n-1)$ gives the associate orbit
\begin{equation}
\begin{split}
&(n,n-1),\;(-n,1-n),\;(1-n,1),\;(n-1,-1),\\
&(-1,-n),\;(1,n),
\end{split}
\end{equation}
and applying them to the conjugate generator $\overline{\alpha_0}=(1,1-n)$ gives
\begin{equation}
\begin{split}
&(1,1-n),\;(-1,n-1),\;(n-1,n),\;(1-n,-n),\\
&(-n,-1),\;(n,1).
\end{split}
\end{equation}
These maps are automorphisms of the triangular lattice: they permute the six unit directions $\pm(1,0)$, $\pm(0,1)$, and $\pm(1,-1)$ and therefore preserve the distance formula \eqref{eq:ej-distance}.  They map radius-$t$ EJ balls to radius-$t$ EJ balls, map perfect placement cells to congruent perfect placement cells, and preserve coverage multiplicities.  Since the local repair model is defined only through these metric balls and their incidences, every local-repair statement transfers unchanged to all listed associate or conjugate dense generators.
\end{proof}

\section{Fault-Recovery Model}

Let $S$ be a perfect EJ $t$-placement and let $r\in S$ fail.  A repair set $R\subseteq V\setminus S$ is valid if
\begin{equation}
    B_t(r)\subseteq\bigcup_{q\in R}B_t(q).
\end{equation}
The one-fault repair number is
\begin{equation}
    \rho_{\mathrm{EJ}}(t)=\min\{|R|:R\text{ is a valid repair for }r\}.
\end{equation}
By translation invariance, $\rho_{\mathrm{EJ}}(t)$ does not depend on $r$.

The translated canonical repair algorithm places one two-center repair pair at the resource coordinate $r$:
\begin{equation}
    R^*(r)=\{r-(1,0),\ r+t(1,0)\}.
    \label{eq:translated-algorithm}
\end{equation}
By the two-center cover lemma proved below, $R^*(r)$ is a valid repair for every $r$.

Among minimum-size repair sets, we also measure the unavoidable repeated coverage inside the failed cell.  For a two-replacement repair $R=\{q_1,q_2\}$, define
\begin{equation}
    \operatorname{ov}(R)
    =|B_t(r)\cap B_t(q_1)\cap B_t(q_2)|.
\end{equation}
The minimum-overlap value is
\begin{equation}
    \Omega_{\mathrm{EJ}}(t)=
    \min\{\operatorname{ov}(R): |R|=\rho_{\mathrm{EJ}}(t),\ R\text{ repairs }r\}.
\end{equation}

\section{Locality Theorems}

The locality argument is independent of the detailed shape of the EJ ball.  It uses only perfectness of the original placement and the metric triangle inequality.

\begin{theorem}[Failure-cell locality]
Let $S$ be a perfect EJ $t$-placement and let $r\in S$ fail.  Then the only vertices that can become uncovered are the vertices in $B_t(r)$.
\end{theorem}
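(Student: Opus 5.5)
The plan is to argue by contrapositive directly from the definition of a perfect EJ $t$-placement. Fix a vertex $v$ of the network with $v\notin B_t(r)$, i.e.\ $d_{\EJ}(v,r)>t$. Coverage here means lying within distance $t$ of at least one active resource. Before the failure the active set is $S$; after the failure it is $S\setminus\{r\}$, since no replacements have yet been activated. We will show that $v$ is still covered by $S\setminus\{r\}$.

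The key step is to invoke perfectness. Because $S$ is a perfect EJ $t$-placement, there is exactly one $s\in S$ with $d_{\EJ}(v,s)\le t$, so the balls $B_t(s)$, $s\in S$, partition the vertex set. Since $d_{\EJ}(v,r)>t$, this unique resource $s$ is not $r$. Hence $s\in S\setminus\{r\}$ and $v\in B_t(s)$, so $v$ remains covered after $r$ fails. Taking the contrapositive, every vertex that becomes uncovered must satisfy $d_{\EJ}(v,r)\le t$, that is, $v\in B_t(r)$.

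As a complementary remark, we will also observe that every vertex of $B_t(r)$ does in fact become uncovered. By uniqueness in the perfectness condition, $r$ is the only resource within distance $t$ of such a vertex. So the uncovered set is exactly $B_t(r)$, which sharpens the "can become" phrasing and is what later repair arguments need.

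There is no real obstacle in this argument. The one point requiring care is to use only the existence half of perfectness for vertices outside $B_t(r)$, and uniqueness for the exactness remark. This argument never uses the hexagonal shape from \eqref{eq:central-ej-ball} or the distance formula \eqref{eq:ej-distance-translated}. The triangle inequality is not needed for this theorem; it enters only in the subsequent candidate-locality result.
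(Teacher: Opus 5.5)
Your proof is correct and follows the paper's argument: a vertex outside $B_t(r)$ is covered by a resource other than $r$, and that resource survives the failure. Your added remark is also correct. By uniqueness, every vertex of $B_t(r)$ does become uncovered, which sharpens the statement to an equality.
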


\begin{proof}
Since $S$ is a perfect $t$-placement, every vertex was covered by exactly one resource.  If a vertex $u$ is not in $B_t(r)$, then $r$ was not the resource covering $u$.  Therefore the unique resource that covered $u$ remains active after $r$ is removed.  Only vertices originally covered by $r$, namely the vertices of $B_t(r)$, can become uncovered.
\end{proof}

\begin{theorem}[Candidate locality]
If a candidate replacement $q$ satisfies $d_{\EJ}(q,r)>2t$, then $B_t(q)\cap B_t(r)=\emptyset$.  Hence $q$ cannot contribute to repairing the failed cell.
\end{theorem}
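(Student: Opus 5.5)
The argument is a contrapositive use of the triangle inequality for $d_{\EJ}$. Suppose, for contradiction, that $B_t(q)\cap B_t(r)$ contains a vertex $u$. Then $d_{\EJ}(u,q)\le t$ and $d_{\EJ}(u,r)\le t$. The only input needed is that $d_{\EJ}$ is a metric on $\Z[\omega]$. This holds because it is the shortest-path distance of the graph with the six unit adjacencies. Alternatively, it follows directly from \eqref{eq:ej-distance-translated}, since each of $|x-a|$, $|y-b|$ and $|x+y-a-b|$ is the absolute value of a linear functional applied to a difference, and a maximum of seminorms is subadditive. With the metric in hand, we get
\[
d_{\EJ}(q,r)\le d_{\EJ}(q,u)+d_{\EJ}(u,r)\le 2t,
\]
which contradicts the hypothesis $d_{\EJ}(q,r)>2t$. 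Hence the two balls are disjoint.

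For the second sentence of the statement, we combine this disjointness with the Failure-cell locality theorem. That theorem says the only vertices needing coverage after $r$ fails are those of $B_t(r)$. A candidate $q$ with $d_{\EJ}(q,r)>2t$ covers no vertex of $B_t(r)$. So if $R$ is any valid repair, deleting such a $q$ from $R$ leaves the union restricted to $B_t(r)$ unchanged, and the smaller set is still valid. Such a $q$ also adds nothing to $\operatorname{ov}(R)$. It therefore cannot contribute to any repair, and in particular can never appear in a minimum or overlap-minimizing repair.

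There is no substantive obstacle in this argument. The only point to state carefully is why $d_{\EJ}$ satisfies the triangle inequality. I would justify it through the seminorm description above rather than the graph interpretation, so that the proof stays self-contained in axial coordinates. This also avoids any concern about wraparound in the finite network: on the quotient, distance is the minimum over lattice representatives, and the triangle inequality is inherited by taking minima.
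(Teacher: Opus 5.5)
Your proof is correct and is the paper's argument: assume a common vertex $u$, apply the triangle inequality to get $d_{\EJ}(q,r)\le 2t$, and contradict the hypothesis. Your extra justification of the triangle inequality (the max of absolute values of linear functionals in the translated distance formula) and your deletion argument for the second sentence are sound, and they spell out points the paper leaves implicit.
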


\begin{proof}
If $u\in B_t(q)\cap B_t(r)$, then
\begin{equation}
    d_{\EJ}(q,r)\le d_{\EJ}(q,u)+d_{\EJ}(u,r)\le2t,
\end{equation}
contradicting $d_{\EJ}(q,r)>2t$.
\end{proof}

\section{Exact One-Fault Repair Number}

By translation invariance, assume that the failed resource is the origin.  Thus the failed cell is the central hexagon $\hex=B_t(0)$.

\begin{figure}[H]
\centering
\includegraphics[width=0.5\linewidth]{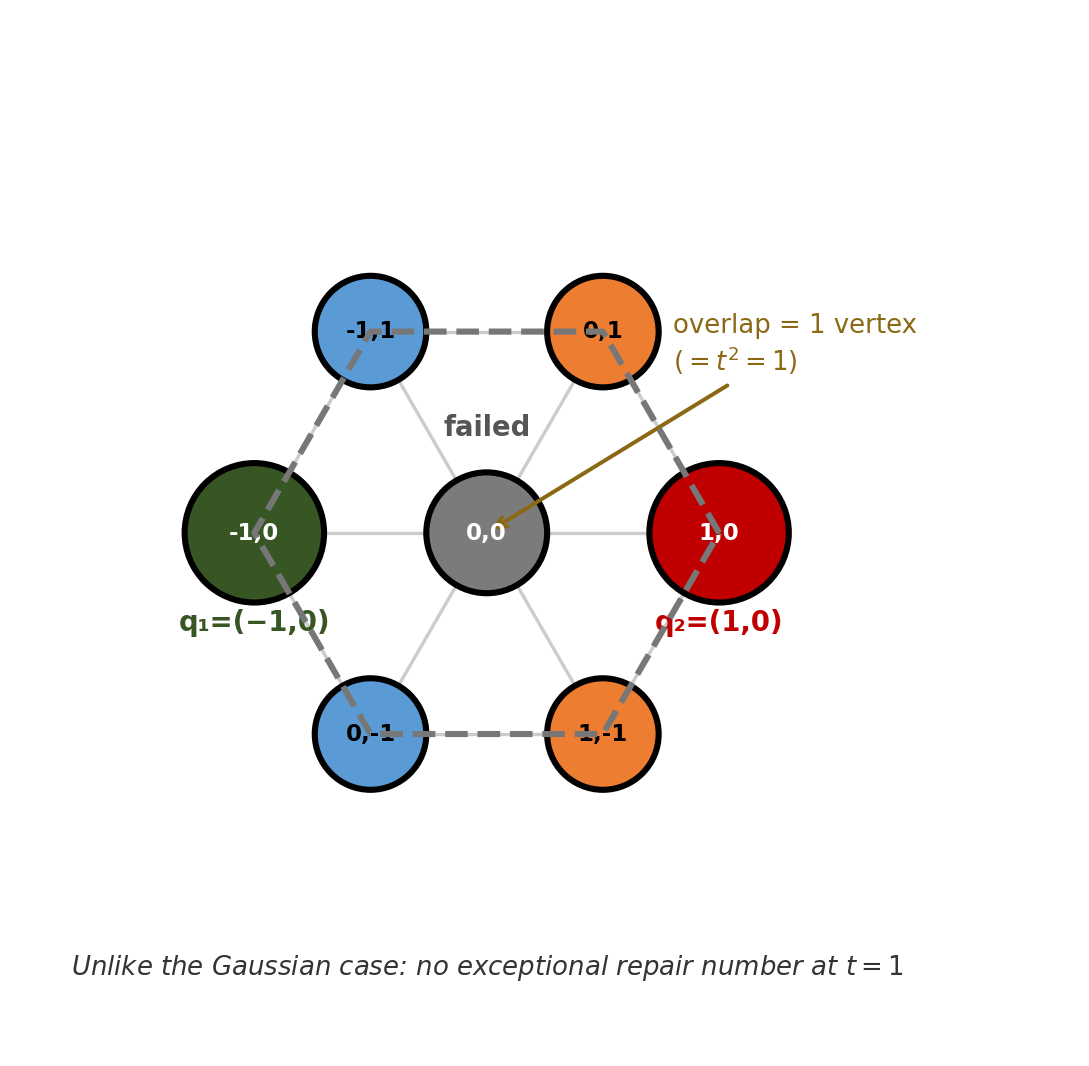}
\caption{The small case $t=1$.  Unlike the Gaussian case, EJ has no exceptional repair number: two replacement balls cover the seven-vertex failed hexagon, and their overlap contains exactly one vertex.}
\label{fig:ej-t1}
\end{figure}

\begin{lemma}[A single nonfailed replacement is impossible]
For every $t\ge1$, no vertex $q\ne(0,0)$ satisfies
\begin{equation}
    B_t(0)\subseteq B_t(q).
\end{equation}
\end{lemma}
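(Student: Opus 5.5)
We argue by contradiction using the six extreme vertices of the failed hexagon, as Figure~\ref{fig:ej-three-strip-ball} suggests. These vertices are
\[
\pm(t,0),\qquad \pm(0,t),\qquad \pm(t,-t).
\]
Suppose $q=(a,b)$ satisfies $B_t(0)\subseteq B_t(q)$. Then each of the six extreme vertices is within distance $t$ of $q$, and by \eqref{eq:ej-distance-translated} each such containment gives three linear inequalities. From each vertex we keep only one of them: the inequality on the axial coordinate in which that vertex attains the extreme value $\pm t$.

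\textbf{First axis.} The pair $(t,0)$ and $(-t,0)$ controls the first coordinate. The condition $|t-a|\le t$ gives $a\ge0$, and $|-t-a|\le t$ gives $a\le0$. Hence $a=0$.

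\textbf{Second axis.} In the same way, $(0,t)$ and $(0,-t)$ give $|t-b|\le t$ and $|-t-b|\le t$. These force $b=0$.

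Therefore $q=(0,0)$, which contradicts $q\ne(0,0)$. The third pair $\pm(t,-t)$, which constrains $a+b$, is not even needed, although it gives the same conclusion $a+b=0$ consistently.

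The only point to check is that for $t\ge1$ each extreme vertex really lies in $B_t(0)$. This is immediate from \eqref{eq:central-ej-ball}: for example, $(t,-t)$ has $|x|=t$, $|y|=t$, and $|x+y|=0$. No other step needs care. The argument is in effect the statement that a discrete hexagon of radius $t$ determines its center, because two opposite extreme vertices pin down each axial coordinate.
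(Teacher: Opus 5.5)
Your proof is correct and uses the same idea as the paper: opposite extreme vertices of the failed hexagon pin down each axial coordinate of any ball that contains them. You derive $a=0$ and $b=0$ directly from the containment inequalities $|t-a|\le t$, $|t+a|\le t$ (and likewise for $b$), which makes the paper's preliminary cardinality step $B_t(0)=B_t(q)$ and its third pair $\pm(t,-t)$ unnecessary.
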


\begin{proof}
The two balls have the same finite cardinality.  Therefore containment would imply $B_t(0)=B_t(q)$.  The center of an EJ ball is determined uniquely by its six extreme vertices
\begin{equation}
    (t,0),\quad (0,t),\quad (-t,t),\quad (-t,0),\quad (0,-t),\quad (t,-t).
\end{equation}
Equivalently, every pair of opposite extreme vertices has midpoint equal to the center.  To see this directly, the opposite pair $(t,0)$ and $(-t,0)$ forces the $x$-midline of the ball to be $x=0$; the opposite pair $(0,t)$ and $(0,-t)$ forces the $y$-midline to be $y=0$; and the pair $(-t,t)$ and $(t,-t)$ forces the $(x+y)$-midline to be $x+y=0$.  These three midlines meet only at $(0,0)$.  Equality of the two balls therefore forces $q=(0,0)$, which is unavailable because the resource at the origin has failed.
\end{proof}

\begin{lemma}[Two-center EJ cover]
For every $t\ge1$,
\begin{equation}
    B_t(0)\subseteq B_t((-1,0))\cup B_t((t,0)).
    \label{eq:two-center-cover}
\end{equation}
\end{lemma}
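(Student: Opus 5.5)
The plan is a direct case split on the $x$-coordinate of a point $(x,y)\in B_t(0)$. We check membership in one of the two balls using the translated distance formula \eqref{eq:ej-distance-translated}. Throughout, the hexagon constraints $|x|\le t$, $|y|\le t$, $|x+y|\le t$ are available.

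\emph{Case $x\le t-1$.} Here we aim to show $(x,y)\in B_t((-1,0))$. The three quantities to bound are $|x+1|$, $|y|$, and $|x+y+1|$.
\begin{itemize}
\item Since $-t\le x\le t-1$, we get $-t+1\le x+1\le t$, so $|x+1|\le t$.
\item The bound $|y|\le t$ is inherited directly from the hexagon.
\item For $x+y+1$: the hexagon gives $x+y+1\ge -t+1$, so only the upper bound $x+y+1\le t$ needs care.
\end{itemize}
That upper bound fails exactly when $x+y=t$. So this case splits into a subcase $x+y\le t-1$, which lands in $B_t((-1,0))$, and the residual line $x+y=t$ with $x\le t-1$. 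That residual line must be sent to the other center.

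\emph{Case $x=t$, or $x+y=t$.} Here we show $(x,y)\in B_t((t,0))$, so we need $|x-t|\le t$, $|y|\le t$, and $|x+y-t|\le t$. The working assumption is $x\ge 0$.
\begin{itemize}
\item If $x=t$, then $y\in[-t,0]$ by the hexagon constraints. This gives $|x-t|=0$, $|y|\le t$, and $|x+y-t|=|y|\le t$.
\item If $x+y=t$, then $|x+y-t|=0$ and $|y|\le t$. Also $y\le t$ forces $x\ge 0$, and $x\le t$ gives $|x-t|\le t$.
\end{itemize}

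Combining the two cases: every point with $x\le t-1$ and $x+y\le t-1$ lies in the first ball. Every point with $x=t$ or $x+y=t$ lies in the second ball. These cases exhaust $B_t(0)$, which gives \eqref{eq:two-center-cover}. Both centers are nonzero for $t\ge1$, so they are legitimate replacements.

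The main obstacle is organizing the split so that the boundary line $x+y=t$ is not missed. The naive split on $x$ alone fails for the first ball precisely along that diagonal edge of the hexagon. The fix is to use the two boundary edges $x=t$ and $x+y=t$ of the hexagon as the region assigned to the second center. A quick sanity check is $t=1$: the second ball should cover $(1,0)$, $(0,1)$, and $(1,-1)$, and the first ball the remaining four vertices.
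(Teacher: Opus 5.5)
Your proof is correct and follows essentially the same argument as the paper. Points with $x\le t-1$ and $x+y\le t-1$ are assigned to $B_t((-1,0))$, and the two boundary edges $x=t$ and $x+y=t$ are assigned to $B_t((t,0))$, with the same three strip checks (the phrase ``working assumption $x\ge 0$'' is unnecessary, since you derive $x\ge0$ in each subcase anyway).
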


\begin{proof}
Let $(x,y)\in B_t(0)$, so
\begin{equation}
    |x|\le t,\qquad |y|\le t,\qquad |x+y|\le t.
    \label{eq:central-ineqs}
\end{equation}
If $x\le t-1$ and $x+y\le t-1$, then
\begin{equation}
    |x+1|\le t,
    \qquad |y|\le t,
    \qquad |x+y+1|\le t,
\end{equation}
because $x\ge -t$ and $x+y\ge -t$.  Thus $(x,y)\in B_t((-1,0))$.

It remains to consider vertices for which $x=t$ or $x+y=t$.  If $x=t$, then \eqref{eq:central-ineqs} gives $-t\le y\le0$.  Hence
\begin{equation}
    |x-t|=0,
    \qquad |y|\le t,
    \qquad |x+y-t|=|y|\le t,
\end{equation}
so $(x,y)\in B_t((t,0))$.  If $x+y=t$, then \eqref{eq:central-ineqs} gives $0\le x\le t$, and therefore
\begin{equation}
    |x-t|\le t,
    \qquad |y|=|t-x|\le t,
    \qquad |x+y-t|=0.
\end{equation}
Thus $(x,y)\in B_t((t,0))$.  The cases exhaust $B_t(0)$.
\end{proof}

\begin{figure}[H]
\centering
\includegraphics[width=0.6\linewidth]{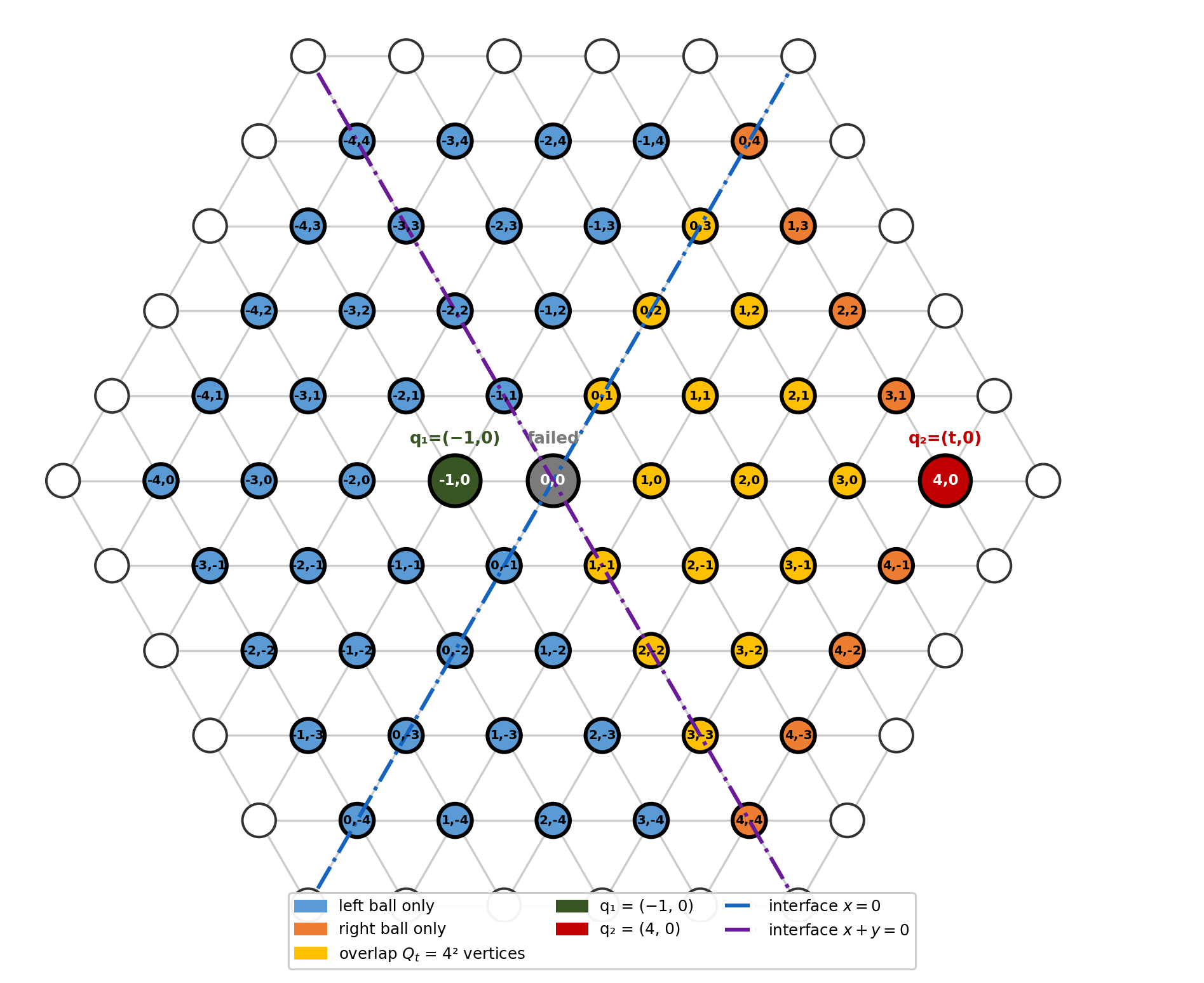}
\caption{Explicit two-center one-fault repair for $t=4$.  Blue vertices are covered only by the left replacement, orange vertices only by the right replacement, and yellow vertices form the overlap region $Q_t$.}
\label{fig:ej-two-center-cover}
\end{figure}

\begin{example}[Worked $t=4$ one-fault repair]
In Fig.~\ref{fig:ej-two-center-cover}, the failed cell is $B_4(0)$ and the repair centers are $(-1,0)$ and $(4,0)$.  The two balls split the hexagon along the axial interface $x=0$ and $x+y=0$; their common part is the highlighted region $Q_4$, containing $4^2=16$ vertices.  This example is the concrete instance of Lemma~2 and of the sharp overlap formula proved later.
\end{example}

\begin{lemma}[Explicit $t=1$ verification]
For $t=1$, the failed EJ cell has seven vertices, two replacements are sufficient, and one replacement is impossible.  Hence the general formula $\rho_{\mathrm{EJ}}(1)=2$ has no small-radius exception.
\end{lemma}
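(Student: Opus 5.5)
The plan is to specialize the two preceding lemmas to $t=1$ and then confirm the claims by a short explicit enumeration, so the small case is checked directly and not only by the general argument. First, list the failed cell: by \eqref{eq:central-ej-ball} with $t=1$,
\begin{equation*}
    B_1(0)=\{(0,0),(\pm1,0),(0,\pm1),(1,-1),(-1,1)\},
\end{equation*}
which has $3+3+1=7$ vertices, in agreement with \eqref{eq:ej-ball-size}.

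For sufficiency, I will take the canonical pair $R^*(0)=\{(-1,0),(1,0)\}$ from \eqref{eq:translated-algorithm}. I will check each of the seven vertices against \eqref{eq:ej-distance-translated}.
\begin{itemize}
    \item $B_1((-1,0))$ contains $(0,0)$, $(-1,0)$, $(0,1)$, $(-1,1)$, and $(0,-1)$; the last holds because its difference from $(-1,0)$ is $(1,-1)$, a unit direction.
    \item $B_1((1,0))$ contains $(1,0)$ and $(1,-1)$, together with $(0,0)$, $(0,1)$ and $(0,-1)$; for $(0,1)$ the difference from $(1,0)$ is $(-1,1)$, a unit direction.
\end{itemize}
So all seven vertices are covered. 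The common part inside the failed cell is $\{(0,0),(0,1),(0,-1)\}$, so this pair has overlap three, not one. That does not affect the repair-number claim. If one wants the single-vertex overlap of Fig.~\ref{fig:ej-t1}, the pair $\{(-1,0),(1,1)\}$ is worth testing, but overlap is not part of this lemma. In any case the validity of the cover is already given by Lemma~2 at $t=1$, so this enumeration is only a sanity check. Neither $(\pm1,0)$ is a surviving resource: for $n=2$ the resource lattice is the ideal generated by $(2,1)$, and its nonzero points have norm at least $7$.

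For impossibility, I will invoke Lemma~1 at $t=1$. Alternatively, a direct check is easy. Any $q$ with $B_1(0)\subseteq B_1(q)$ must satisfy $d_{\EJ}(q,u)\le1$ for all six neighbors $u$ of the origin. The neighbors $(1,0)$ and $(-1,0)$ already force $q$ to have $x$-coordinate $0$. Then $(0,1)$ and $(0,-1)$ force $y=0$, so $q=(0,0)$, which is the failed resource.

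Combining the two directions gives $\rho_{\EJ}(1)=2$. The only delicate step is the overlap remark: the canonical pair gives overlap three, so a figure claiming overlap one needs a different pair. I would resolve that when proving $\Omega_{\EJ}(1)=1$ later, not here.
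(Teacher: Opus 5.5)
Your route is the paper's own. You list $B_1(0)$, specialize Lemma~2 to the pair $\{(-1,0),(1,0)\}$, and invoke Lemma~1. Your direct check, in which $(\pm1,0)$ force $x=0$ and $(0,\pm1)$ then force $y=0$, is a correct hands-on version of Lemma~1. So the lemma as stated is proved.

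\textbf{Errors in the enumeration.} The problem is your explicit enumeration, which you then use to cast doubt on the paper. Two of the memberships you list are false:
\begin{itemize}
\item $(0,1)\notin B_1((-1,0))$, since the difference is $(1,1)$ and $d_{\EJ}=\max\{1,1,2\}=2$;
\item $(0,-1)\notin B_1((1,0))$, since the difference is $(-1,-1)$, again at distance $2$.
\end{itemize}
The correct traces are
\begin{gather*}
B_1((-1,0))\cap B_1(0)=\{(0,0),(-1,0),(-1,1),(0,-1)\},\\
B_1((1,0))\cap B_1(0)=\{(0,0),(1,0),(0,1),(1,-1)\}.
\end{gather*}
Their union is still the whole cell, so your coverage conclusion survives; you did correctly place $(0,1)$ in the right ball and $(0,-1)$ in the left one. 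But their intersection is only $\{(0,0)\}$.

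Hence the canonical pair has overlap $1=t^2$. This is exactly what the worked $t=1$ example, Fig.~\ref{fig:ej-t1}, Lemma~4, and the region $Q_t$ in \eqref{eq:ej-overlap-rectangle} with $t=1$ all say. There is no inconsistency to resolve, and your closing paragraph should be dropped. Your proposed alternative $\{(-1,0),(1,1)\}$ is not even a valid repair, since $(1,-1)$ is at distance $2$ from both centers.

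\textbf{The non-resource check.} Checking that $(\pm1,0)\notin S$ is a useful addition the paper omits. However, your norm argument depends on which convention for $\omega$ is in force: with the paper's literal $\omega=(-1+i\sqrt{3})/2$, the norm of $2+\omega$ is $3$, not $7$. A convention-free justification is perfectness itself. Both centers lie in $B_1(0)$, and a resource $s\ne0$ with $d_{\EJ}(s,0)\le t$ would be within distance $t$ of two resources, namely $s$ and $0$.
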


\begin{figure}[H]
\centering
\includegraphics[width=0.65\linewidth]{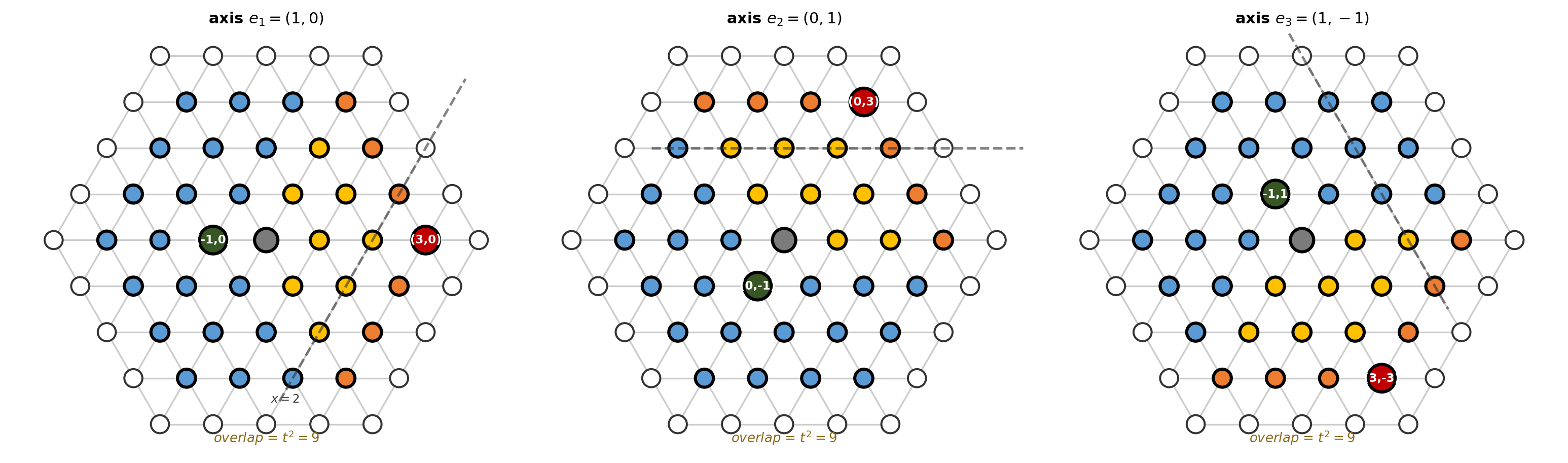}
\caption{Three axial orientations of the canonical optimum repair family.  For each of the three EJ axes and each interface position $a=1,\ldots,t$, the pair $\{-ae_i,(t+1-a)e_i\}$ covers the failed cell and attains overlap $t^2$.}
\label{fig:ej-canonical-family}
\end{figure}

\begin{proof}
The failed cell is
\begin{equation}
B_1(0)=\{(0,0),(1,0),(0,1),(-1,1),(-1,0),(0,-1),(1,-1)\}.
\end{equation}
The pair $\{(-1,0),(1,0)\}$ is the specialization of Lemma~2 and covers all seven vertices.  Lemma~1 rules out one replacement.  Therefore $\rho_{\mathrm{EJ}}(1)=2$.
\end{proof}

\begin{example}[Worked $t=1$ repair]
For $t=1$, the repair pair $\{(-1,0),(1,0)\}$ covers the seven vertices of $B_1(0)$ shown in Fig.~\ref{fig:ej-t1}.  The only repeated vertex is the origin, so the overlap value is $1=t^2$.  Thus the small case agrees with both main one-fault formulas, $\rho_{\mathrm{EJ}}(1)=2$ and $\Omega_{\mathrm{EJ}}(1)=1$.
\end{example}

\begin{theorem}[Exact one-fault EJ repair number]
For every $t\ge1$,
\begin{equation}
    \rho_{\mathrm{EJ}}(t)=2.
\end{equation}
\end{theorem}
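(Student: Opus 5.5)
The plan is to combine Lemma~1 and Lemma~2 into matching bounds, after first reducing to the failed cell at the origin by translation invariance; as noted in the fault-recovery model, $\rho_{\mathrm{EJ}}(t)$ does not depend on $r$.

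For the upper bound, I will invoke Lemma~2. It shows that $R=\{(-1,0),(t,0)\}$ covers $B_t(0)$. It then remains to check that $R$ is an admissible repair set, meaning $R\subseteq V\setminus S$. The center $(-1,0)$ is at distance $1\le t$ from the origin. The center $(t,0)$ is at distance $t$ from the origin. By perfectness, any two distinct resources are at distance at least $2t+1$, since otherwise some vertex on a geodesic between them would be within $t$ of both. Hence neither replacement center is a resource, and both are nonfailed vertices. This gives $\rho_{\mathrm{EJ}}(t)\le 2$.

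For the lower bound, I will exclude repairs of size $0$ and size $1$. The empty set fails because $B_t(0)\neq\emptyset$. A single replacement $q$ would require $B_t(0)\subseteq B_t(q)$ with $q\in V\setminus S$, so in particular $q\neq(0,0)$. Lemma~1 rules this out: equal cardinalities force equality of the two balls, and the six extreme vertices then force $q=(0,0)$. Therefore $\rho_{\mathrm{EJ}}(t)\ge2$, and combining the two bounds gives equality for all $t\ge1$. The case $t=1$ is already covered by Lemma~3, which needs no separate treatment.

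The only subtle step is the admissibility of the replacement centers, namely confirming they lie in $V\setminus S$. Inside a finite network quotient I would also need that the ball $B_t(0)$ embeds without wraparound, with all $3t^2+3t+1$ vertices distinct. This holds because $N=3t^2+3t+1$ equals the ball size, so the ball is exactly one period cell. I would make this explicit using the packing separation of $2t+1$ between distinct resources.
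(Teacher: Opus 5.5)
Your core argument is the paper's: Lemma~2 supplies the cover by $\{(-1,0),(t,0)\}$ for the upper bound, and Lemma~1 excludes a single nonfailed replacement for the lower bound. Your admissibility check is a small but real improvement. The paper never verifies that these two centers lie in $V\setminus S$, and your argument via the $2t+1$ separation forced by perfectness is correct.

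Your final paragraph, however, is wrong and should be replaced. Suppose the network really has order exactly $N=3t^2+3t+1$. Then $B_t(0)$ being one period cell means \emph{every} translate $B_t(q)$ is a complete residue system modulo $\alpha$, so the quotient has diameter $t$. The perfect placement is then a single resource, and any nonfailed vertex $q$ satisfies $B_t(q)=V\supseteq B_t(0)$ in the quotient metric. There $\rho=1$ and the theorem is false. Lemma~1's extreme-vertex argument fails in that setting precisely because containment in the quotient no longer lifts to containment of lattice hexagons. So ``the ball is exactly one period cell'' is the reason the argument breaks, not the reason it survives.

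The statement, like all of the paper's axial-coordinate proofs, must be read in the infinite lattice $\mathbb{Z}[\omega]$ with resource set the ideal $(\alpha)$. Alternatively, it holds in a quotient by a multiple of $\alpha$ whose period lattice is large enough that the quotient map is injective on the radius-$3t$ ball around the failed resource. That injectivity suffices to lift $B_t(0)\subseteq B_t(q)$ for every candidate $q$ within distance $2t$, after which Lemmas~1 and~2 transfer unchanged. Wraparound is governed by the network's period lattice, not by the $2t+1$ spacing of resources. In the order-$N$ network that spacing is vacuous, since there is only one resource, so it cannot justify the embedding.
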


\begin{proof}
The previous impossibility lemma rules out one replacement.  The two-center cover in Lemma~2 gives a valid repair with two replacements.
\end{proof}

\section{Canonical Optimal Repair Family}

The explicit pair in Lemma~2 is one member of a larger axial family.  Let the three unoriented EJ axes be represented by
\begin{equation}
    e_1=(1,0),\qquad e_2=(0,1),\qquad e_3=(1,-1).
\end{equation}
For each $a\in\{1,\ldots,t\}$ and each axis direction $e_i$, define
\begin{equation}
    R_{i,a}=\{-ae_i,\ (t+1-a)e_i\}.
    \label{eq:canonical-repair-family}
\end{equation}
There are $3t$ such unordered pairs.  Figure~\ref{fig:ej-canonical-family} illustrates the three axial orientations and the way the interface position changes with $a$.

\begin{proposition}[Canonical optimal pairs]
Every pair $R_{i,a}$ in \eqref{eq:canonical-repair-family} covers $B_t(0)$ and has overlap exactly $t^2$ inside $B_t(0)$.
\end{proposition}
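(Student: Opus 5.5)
The plan is to reduce to the single axis $e_1=(1,0)$ via the automorphisms from the associate/conjugate proposition, and then verify coverage and compute the overlap directly in the three-strip coordinates.

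\emph{Symmetry reduction.} Multiplication by $\omega$ acts by $(a,b)\mapsto(-b,a-b)$. It sends $e_1\mapsto(0,1)=e_2$ and $e_2\mapsto(-1,-1)$. Since $R_{i,a}$ is an unordered pair, changing the sign of the axis turns $\{-ae,(t+1-a)e\}$ into $\{ae,-(t+1-a)e\}=R_{i,t+1-a}$, which is still in the family. Applying $\omega$ twice gives $e_1\mapsto(-1,-1)\mapsto(1,-1)=e_3$. These maps fix the origin, preserve $d_{\EJ}$, and fix $B_t(0)$ setwise. They therefore carry coverage and overlap counts of $R_{1,a}$ to those of $R_{2,\cdot}$ and $R_{3,\cdot}$, so it suffices to treat $e_1$ for every $a\in\{1,\dots,t\}$.

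\emph{Coverage for $e_1$.} Put $q_1=(-a,0)$ and $q_2=(b,0)$ with $b=t+1-a$. For $(x,y)\in B_t(0)$, membership in $B_t(q_1)$ means $|x+a|\le t$, $|y|\le t$, $|x+y+a|\le t$. Given $x\ge-t$, $x+y\ge -t$ and $|y|\le t$, this holds exactly when $x\le t-a$ and $x+y\le t-a$. Similarly, membership in $B_t(q_2)$ holds exactly when $x\ge b-t=1-a$ and $x+y\ge 1-a$. So $B_t(q_1)$ covers the points with $\max(x,x+y)\le t-a$, and $B_t(q_2)$ covers those with $\min(x,x+y)\ge 1-a$. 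Suppose a point of the hexagon is covered by neither. Then one of $x,x+y$ is at least $t-a+1$ and one is at most $-a$. They must be different coordinates, so $|y|=|(x+y)-x|\ge t+1$, contradicting $|y|\le t$. Hence coverage holds, generalizing Lemma~2, which is the case $a=1$.

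\emph{Overlap count.} The overlap is the set of hexagon points with $1-a\le x\le t-a$ and $1-a\le x+y\le t-a$. Writing $u=x$ and $v=x+y$, both range independently over an interval of length $t$ (so $t$ integer values each). It remains to check that every such pair automatically satisfies the hexagon constraints: $|u|\le t$ and $|v|\le t$ hold since $1-a\ge 1-t$ and $t-a\le t-1$, and $|y|=|v-u|\le t-1\le t$. Since $(x,y)\mapsto(x,x+y)$ is a bijection of $\Z^2$, the overlap has exactly $t\cdot t=t^2$ points.

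\emph{Main obstacle.} The only delicate point is the bookkeeping in the symmetry step: confirming that the unit maps send the family $\{R_{i,a}\}$ onto itself, including the reindexing $a\mapsto t+1-a$ under axis reversal. If that becomes messy, the fallback is to redo the short strip computation directly for $e_2$ and $e_3$. By the symmetric roles of the three constraints $|x|,|y|,|x+y|$, each axis $e_i$ leaves one coordinate among $x,y,x+y$ unchanged, namely the one with coefficient difference zero along $e_i$, and the same argument then goes through with the other two strip coordinates in place of $u,v$.
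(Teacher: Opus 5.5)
Your strip computation for $e_1$ is correct and is the paper's argument, stated a little more carefully. Your overlap region $1-a\le x\le t-a$, $1-a\le x+y\le t-a$ is right for every $a$. The paper's displayed $Q_{t,a}$ keeps the $a=1$ bounds $-x\le y\le t-1-x$, although its count $t^2$ is unaffected.

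\textbf{The error is in the symmetry step.} The map $(a,b)\mapsto(-b,a-b)$ is multiplication by $\omega$ in $\Z[\omega]$. It is \emph{not} an isometry of the metric the paper actually uses, $d_{\mathrm{EJ}}(x,y)=\max\{|x|,|y|,|x+y|\}$. For example, it sends the neighbor $(0,1)\in B_1(0)$ to $(-1,-1)$, which is at distance $2$. So it does not fix $B_t(0)$, and the transfer of coverage and overlap counts fails as written. Your own computation already showed the warning sign: $(-1,-1)$ is not one of the six directions $\pm(1,0),\pm(0,1),\pm(1,-1)$. The next step, $(-1,-1)\mapsto(1,-1)$, is also an arithmetic slip; your map sends $(-1,-1)$ to $(1,0)$.

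The source of the problem is a convention clash in the paper itself. Adjacency via $\pm(1,-1)$ corresponds to coordinates $x+y\rho$ with $\rho=e^{i\pi/3}$, not $x+y\omega$. The unit table in the associate-symmetry proposition therefore should not be used to justify isometries here.

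\textbf{Two ways to repair it.} Both are short.
\begin{itemize}
\item Use the rotation $\phi(x,y)=(-y,x+y)$. It permutes the three quantities $|x|,|y|,|x+y|$, so it preserves $d_{\mathrm{EJ}}$, fixes $B_t(0)$, and sends $B_t(q)$ to $B_t(\phi(q))$. It maps $R_{1,a}$ to $R_{2,a}$. It maps $R_{2,a}$ to $\{ae_3,-(t+1-a)e_3\}=R_{3,t+1-a}$. Your observation about the reindexing $a\mapsto t+1-a$ then completes the reduction. The map $T(x,y)=(y,-x-y)$ that the paper uses later would serve equally well.
\item Use your fallback of a direct computation, tracking signs.
  \begin{itemize}
  \item For $e_2$, take $(u,v)=(y,x+y)$. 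The complementary constraint is then $|v-u|=|x|\le t$.
  \item For $e_3$, take $(u,v)=(x,-y)$, with $u-v=x+y$. The sign change is needed because $x$ and $y$ move in opposite directions along $e_3$.
  \end{itemize}
  In each case your two-line argument goes through verbatim. Coverage follows from the contradiction $|u-v|\ge t+1$. The overlap has exactly $t\cdot t$ points, since every pair $(u,v)$ in the two length-$t$ ranges satisfies all hexagon constraints.
\end{itemize}
With either fix the proof is complete.
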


\begin{proof}
It suffices to prove the claim for $e_1=(1,0)$, since the EJ hexagon is invariant under the dihedral symmetries permuting the three axes.  Thus consider
\begin{equation}
    R_{1,a}=\{(-a,0),(t+1-a,0)\},\qquad 1\le a\le t.
\end{equation}
The same argument as Lemma~2 splits the failed hexagon between the two supporting boundaries $x=t-a+1$ and $x+y=t-a+1$.  The first ball covers the vertices satisfying
\begin{equation}
    x\le t-a,
    \qquad x+y\le t-a,
\end{equation}
and the second ball covers all remaining failed-cell vertices with $x\ge t-a+1$ or $x+y\ge t-a+1$.  Therefore the union covers $B_t(0)$.

The common part inside the failed cell is
\begin{equation}
    Q_{t,a}=\{(x,y): 1-a\le x\le t-a,
    \ -x\le y\le t-1-x\}.
    \label{eq:shifted-overlap-region}
\end{equation}
For each of the $t$ admissible $x$-values, the interval for $y$ contains exactly $t$ lattice points.  Thus $|Q_{t,a}|=t^2$.  Symmetry gives the same result for $e_2$ and $e_3$.
\end{proof}

\section{Exact Minimum Overlap for One Fault}

The repair number alone does not measure efficiency.  A two-center repair may cover the failed hexagon, but it necessarily creates repeated coverage inside the failed cell.  We now determine the minimum possible repeated coverage.

For the explicit repair $R^*=\{(-1,0),(t,0)\}$, the overlap inside $B_t(0)$ is the axial rectangle
\begin{equation}
    Q_t=\{(x,y):0\le x\le t-1,\ -x\le y\le t-1-x\}.
    \label{eq:ej-overlap-rectangle}
\end{equation}
For each $x\in\{0,1,\ldots,t-1\}$ there are exactly $t$ admissible values of $y$, so $|Q_t|=t^2$.  Figure~\ref{fig:ej-forced-overlap} shows these $t$ parallel slices explicitly; it is the geometric picture behind the lower-bound proof.

\begin{figure}[H]
\centering
\includegraphics[width=0.6\linewidth]{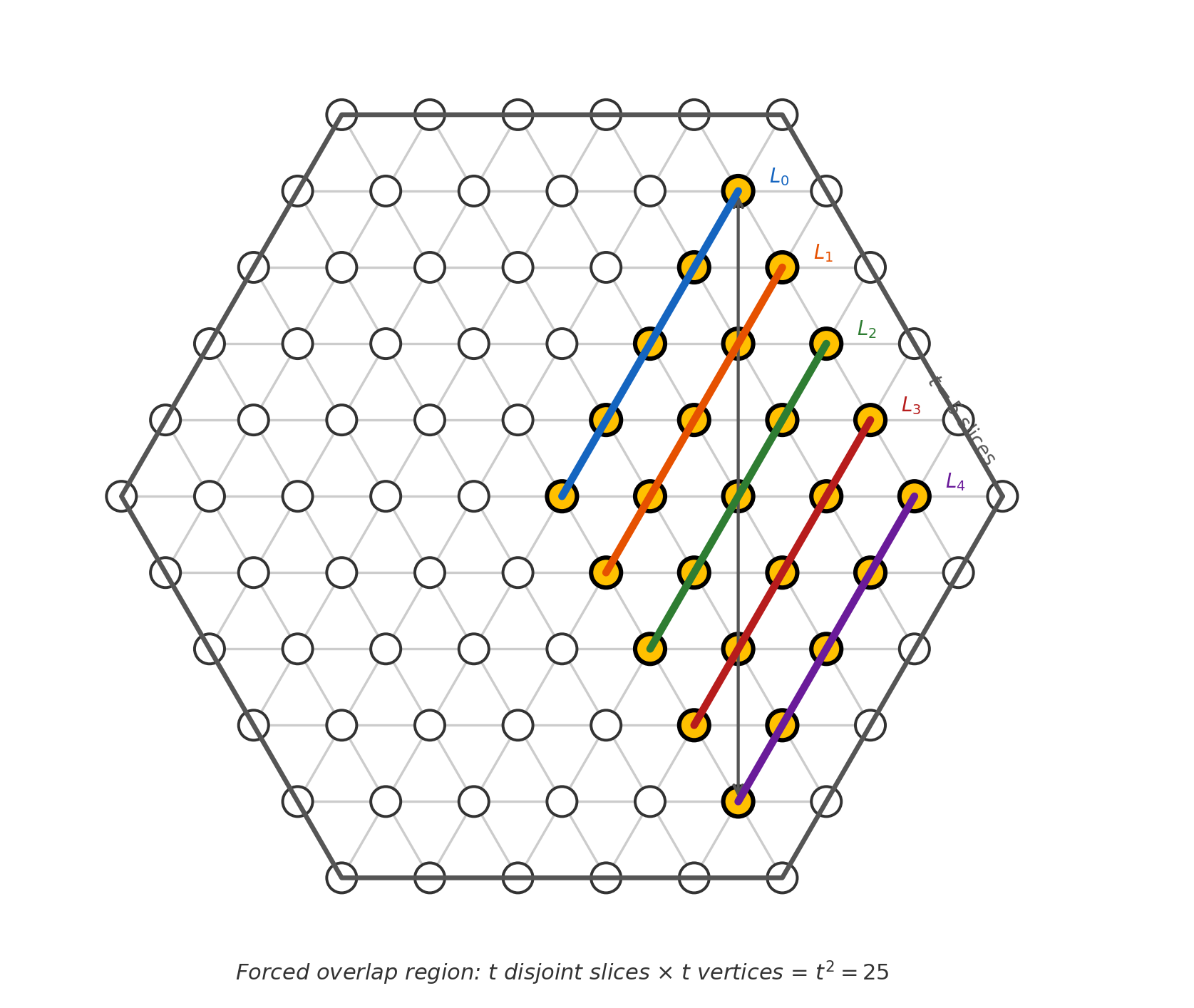}
\caption{Forced axial-interface overlap.  Any two-ball cover has a transition region containing $t$ disjoint axial slices, each with $t$ vertices.  This is the geometric source of the sharp quadratic overlap $\Omega_{\mathrm{EJ}}(t)=t^2$.}
\label{fig:ej-forced-overlap}
\end{figure}

\begin{lemma}[Overlap attained by the explicit repair]
The repair set
\begin{equation}
    R^*=\{(-1,0),(t,0)\}
\end{equation}
has overlap
\begin{equation}
    |B_t(0)\cap B_t((-1,0))\cap B_t((t,0))|=t^2.
\end{equation}
\end{lemma}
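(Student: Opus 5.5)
The plan is to compute the triple intersection directly from the three-strip descriptions of the balls and show that it coincides with the axial region $Q_t$ of \eqref{eq:ej-overlap-rectangle}. By \eqref{eq:ej-distance-translated}, we write
\begin{align*}
B_t((-1,0))&=\{(x,y): -t-1\le x\le t-1,\ |y|\le t,\ -t-1\le x+y\le t-1\},\\
B_t((t,0))&=\{(x,y): 0\le x\le 2t,\ |y|\le t,\ 0\le x+y\le 2t\}.
\end{align*}
Intersecting these with $B_t(0)=\{|x|\le t,\ |y|\le t,\ |x+y|\le t\}$, the $x$-constraints combine to $0\le x\le t-1$, and the $(x+y)$-constraints combine to $0\le x+y\le t-1$. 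The $y$-constraint $|y|\le t$ is common to all three balls.

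The key step is to check that $|y|\le t$ is redundant once the other two constraints hold. From $0\le x\le t-1$ and $0\le x+y\le t-1$ we get $y\ge -x\ge -(t-1)$ and $y\le t-1-x\le t-1$. So the triple intersection equals
\[
\{(x,y):0\le x\le t-1,\ -x\le y\le t-1-x\},
\]
which is exactly $Q_t$.

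Finally, we count the points. For each of the $t$ values $x\in\{0,\dots,t-1\}$, the admissible $y$ form the integer interval $[-x,\,t-1-x]$, which has length $t-1$ and hence contains exactly $t$ points. Summing over $x$ gives $t\cdot t=t^2$. Alternatively, the lemma is the case $a=1$ of the proposition on canonical optimal pairs, since $R_{1,1}=R^*$ and $Q_{t,1}=Q_t$.

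There is no real obstacle here. The only care needed is to intersect the strip bounds correctly: the lower bounds come from the ball centered at $(t,0)$ and the upper bounds from the ball centered at $(-1,0)$. One then verifies that neither the $y$-strip nor the failed cell's own bounds cut the region further.
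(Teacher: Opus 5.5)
Your proof is correct and follows the paper's route: write out the three strip systems, observe that the $x$- and $(x+y)$-bounds tighten to $[0,t-1]$ while $|y|\le t$ becomes redundant, identify the result with $Q_t$, and count $t$ points on each of the $t$ slices $x=0,\dots,t-1$. You spell out the reduction step, including why the $y$-strip is redundant, which the paper's proof only asserts.
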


\begin{proof}
A vertex $(x,y)\in B_t(0)$ lies in both replacement balls if and only if
\begin{align}
    |x+1|&\le t, & |y|&\le t, & |x+y+1|&\le t, \\
    |x-t|&\le t, & |y|&\le t, & |x+y-t|&\le t.
\end{align}
Together with the central inequalities, these conditions reduce to \eqref{eq:ej-overlap-rectangle}.  Counting gives $t$ choices of $x$ and $t$ choices of $y$ for each $x$, hence $t^2$ vertices.
\end{proof}

\begin{lemma}[Forced axial-interface overlap]
Let $A$ and $B$ be two nonzero EJ vertices such that
\begin{equation}
    B_t(0)\subseteq B_t(A)\cup B_t(B).
    \label{eq:arbitrary-two-cover}
\end{equation}
Then
\begin{equation}
    |B_t(0)\cap B_t(A)\cap B_t(B)|\ge t^2.
\end{equation}
\end{lemma}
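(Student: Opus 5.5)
The plan is to reduce the overlap bound to a size count inside the failed cell and then minimize over the admissible pairs of centers. Since $B_t(0)\subseteq B_t(A)\cup B_t(B)$, inclusion--exclusion restricted to $\hex$ gives
\begin{equation*}
|\hex\cap B_t(A)\cap B_t(B)| = f(A)+f(B)-|\hex|,
\qquad f(C):=|\hex\cap B_t(C)|,
\end{equation*}
with $|\hex|=3t^2+3t+1$. So it suffices to show $f(A)+f(B)\ge 3t^2+3t+1+t^2$ for every admissible pair. The construction $R^*$ of the lemma on the explicit repair should give equality, which is a useful sanity check.

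\textbf{Step 1: each ball takes one extreme vertex from each opposite pair.} I first constrain where $A$ and $B$ can lie, using the six extreme vertices from the proof of the single-replacement impossibility lemma. For each opposite pair, say $(t,0)$ and $(-t,0)$, the centers whose radius-$t$ ball contains both satisfy $|x-t|\le t$ and $|x+t|\le t$, hence $x=0$. The $x+y$ constraints then give $|y-t|\le t$ and $|y+t|\le t$, hence $y=0$. The same holds for the other two opposite pairs by the axis symmetry used in the canonical-pairs proposition. Therefore a nonzero center covers at most one vertex from each opposite pair. Since $A$ and $B$ together cover all six extreme vertices, each ball covers exactly one vertex of each opposite pair, so three extreme vertices each.

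\textbf{Step 2: the triples are consecutive.} Next I would show that each such triple must be three consecutive extreme vertices. If instead $B_t(A)$ contained an alternating triple, such as $(t,0),(-t,t),(0,-t)$, the three strip inequalities for $A$ should again pin $A$ to the origin. This is the step I would check most carefully. Up to the dihedral symmetry of $\hex$, I may then assume
\begin{itemize}
\item $B_t(A)$ contains $(-t,0),(-t,t),(0,-t)$, forcing $A=(a_1,a_2)$ with $a_1\le 0$ and $a_1+a_2\le 0$;
\item $B_t(B)$ contains the complementary three, forcing the reverse sign constraints.
\end{itemize}

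\textbf{Step 3: minimize the sizes.} On this region I compute $f(C)$ in closed form, since $\hex\cap B_t(C)$ is again an intersection of three axial strips. Its points are
\begin{equation*}
\max(-t,c_1-t)\le x\le \min(t,c_1+t),
\end{equation*}
with the analogous bounds on $y$ from $c_2$ and on $x+y$ from $c_1+c_2$. For each fixed $x$ the admissible $y$ form an interval, so $f(C)$ becomes an explicit piecewise-quadratic function of $(c_1,c_2)$. I would then minimize $f(A)+f(B)$ subject to two constraints: the sign constraints of Step 2, and the requirement that the two balls actually cover the interior, not only the extreme vertices.

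\textbf{Main obstacle and fallback.} The main difficulty is that the extreme-vertex constraints alone do not enforce coverage. I therefore expect to need a slice version of the count. Consider the $t$ axial lines $x=c$ that pass through the interface region, where the line $\{x=c\}\cap\hex$ is an interval $I_c$. On each such line, $B_t(A)\cap I_c$ and $B_t(B)\cap I_c$ are intervals: one contains the left end of $I_c$ and the other the right end, and together they cover $I_c$. The overlap on that line is therefore
\begin{equation*}
|B_t(A)\cap I_c|+|B_t(B)\cap I_c|-|I_c|.
\end{equation*}
I would try to show this is at least $t$ for $t$ distinct values of $c$. The idea is that along a line, a radius-$t$ ball meets a segment of length up to $2t+1$ in an interval of length at least $t+1$ whenever it covers the corresponding endpoint from Step 2. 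If this per-line bound holds, summing over the $t$ disjoint slices gives the $t\times t$ interface region and the bound $t^2$.
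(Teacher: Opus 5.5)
\textbf{There is a genuine gap.} Your Steps 1 and 2 are correct. The alternating triples do force the center to $0$, and the rotations act transitively on the six consecutive triples. But the argument stops exactly where the content of the lemma lies. Step 3 as stated (minimize $f(A)+f(B)$ subject to the sign constraints and to coverage) is a restatement of the lemma, and the fallback does not close it, for three reasons.
\begin{enumerate}
\item The per-line heuristic is numerically too weak. On a line with $|I_c|=2t+1$, two covering intervals of length at least $t+1$ only force an overlap of $2(t+1)-(2t+1)=1$, not $t$.
\item Per-line coverage on the lines met by both balls carries no information. In the normalized position described below, the two intervals cover every such $I_c$ whenever $a+b\le 2t+1$, which always holds. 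This includes pairs of centers that do not cover $B_t(0)$ at all. The inequality that makes each slice overlap reach $t$ must come from vertices on lines that only one ball reaches.
\item The $t$ lines cannot be chosen independently of $A$ and $B$. For $t=3$ the canonical repair $\{(-2,0),(2,0)\}$ has overlap exactly $\{-1\le x\le 1,\ -1\le x+y\le 1\}$, so slices at $x=0,1,2$ would fail.
\end{enumerate}

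\textbf{The missing idea.} The consecutive triple pins each center to an axis, and your Step 2 does not extract this: it keeps only $a_1\le 0$ and $a_1+a_2\le 0$. The fix runs as follows.
\begin{enumerate}
\item The two outer vertices $(-t,t)$ and $(0,-t)$ of $A$'s triple differ by $2t$ in the $y$-coordinate. So $|t-a_2|\le t$ and $|t+a_2|\le t$, which gives $a_2=0$.
\item Together with $(-t,0)$, $(0,-t)$ and $A\ne 0$, this yields $A=(-a,0)$ with $1\le a\le t$. Symmetrically, $B=(b,0)$ with $1\le b\le t$.
\item Inside $B_t(0)$ the two balls are then exactly $\{x\le t-a,\ x+y\le t-a\}$ and $\{x\ge b-t,\ x+y\ge b-t\}$.
\item The vertex $(t-a+1,-t)$ lies in $B_t(0)$ but not in $B_t(A)$, so it must lie in $B_t(B)$. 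That means $1-a\ge b-t$, i.e.\ $a+b\le t+1$.
\item Hence the block $\{b-t\le x\le b-1,\ b-t\le x+y\le b-1\}$ lies in $B_t(0)\cap B_t(A)\cap B_t(B)$; all its points satisfy $|x|,|y|,|x+y|\le t-1$. It consists of $t$ vertical slices of $t$ vertices each, which gives the bound $t^2$.
\end{enumerate}
With this in hand, either your inclusion--exclusion count or your slice count finishes in one line.

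\textbf{Relation to the paper.} The paper's proof takes the same slice route, but it places the slices at the fixed positions $0\le x\le t-1$, $0\le x+y\le t-1$. Those slices lie in the overlap only when $\min(a,b)=1$. The symmetries preserving the split can only swap $a$ and $b$, so the $t=3$ example above shows this case cannot always be reached. The slices must be positioned relative to the centers, as in step 5 above.
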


\begin{proof}
Write the central hexagon as the intersection of the three strips
\begin{equation}
    -t\le x\le t,
    \qquad -t\le y\le t,
    \qquad -t\le x+y\le t.
    \label{eq:three-strips}
\end{equation}
A translated EJ ball is obtained by shifting the same three strips.  Hence, on every line parallel to one of the three axial directions, its intersection with that line is an interval of consecutive lattice vertices.

Consider the six extreme vertices of $B_t(0)$ in cyclic order, grouped into the three opposite pairs.  A translated radius-$t$ hexagon distinct from $B_t(0)$ cannot contain an opposite pair of these extremes: such a pair fixes the corresponding strip midline, and the three opposite pairs together fix the center at the origin.  Hence each replacement ball can contain at most one extreme from each opposite pair, and therefore at most three extreme vertices.  Since the two replacement balls together cover all six extremes, each ball contains exactly three extremes, one from each opposite pair.  The boundary intersection of a translated hexagon with the central hexagon is connected in cyclic order: on each of the six boundary sides, the three strip inequalities defining the translated ball reduce to consecutive integer intervals, and adjacent nonempty side intervals meet only at shared boundary vertices.  Thus the split is a complementary $3+3$ split.  By a dihedral symmetry of the EJ hexagon and, if necessary, by interchanging $A$ and $B$, the unique transition between the two triples may be placed in the canonical sector bounded by the supporting lines $x=0$ and $x+y=t-1$.  Thus only this canonical $3+3$ transition needs to be analyzed.

For $i=0,1,\ldots,t-1$, define the axial slice
\begin{equation}
    L_i=\{(i,y):-i\le y\le t-1-i\}.
    \label{eq:overlap-slices}
\end{equation}
Each $L_i$ has exactly $t$ vertices, and the slices are pairwise disjoint.  We now spell out the endpoint calculation.  For a center $C=(c_1,c_2)$, the intersection of $B_t(C)$ with the vertical line $x=i$ is the integer interval
\begin{equation}
\begin{split}
I_C(i)=&[c_2-t,c_2+t]\\
&\cap[c_1+c_2-i-t,c_1+c_2-i+t],
\end{split}
\label{eq:vertical-ball-interval}
\end{equation}
provided $|i-c_1|\le t$, and is empty otherwise.  The slice $L_i$ itself is the interval
\begin{equation}
    J_i=[-i,t-1-i].
\end{equation}
We now derive the endpoint ordering explicitly.  The normalized split places the lower transition side on the supporting line $x=0$ and the upper transition side on $x+y=t-1$.  Thus, for each $0\le i\le t-1$, the lower endpoint
\begin{equation}
    \ell_i=(i,-i)\qquad (x=i,\ x+y=0)
\end{equation}
is the first vertex of the transition slice met from the lower arc, while the upper endpoint
\begin{equation}
    u_i=(i,t-1-i)\qquad (x=i,\ x+y=t-1)
\end{equation}
is the last vertex met before the upper arc.  In the normalized order, $B_t(B)$ contains the lower-side transition boundary and $B_t(A)$ contains the upper-side transition boundary.  Hence
\begin{equation}
    -i\in I_B(i),
    \qquad t-1-i\in I_A(i).
    \label{eq:transition-endpoints-contained}
\end{equation}
The remaining endpoint inequalities follow from the shifted strip equations, not from a separate geometric assertion.  Write $A=(a_1,a_2)$ and $B=(b_1,b_2)$.  Since $u_i\in B_t(A)$, the two interval inequalities in \eqref{eq:vertical-ball-interval} give
\begin{equation}
    a_2-t\le t-1-i\le a_2+t,
    \qquad
    a_1+a_2-i-t\le t-1-i\le a_1+a_2-i+t .
    \label{eq:A-upper-contained}
\end{equation}
In the normalized $3+3$ transition, the $A$-side triple occupies the opposite side of the same vertical slice, so the supporting line $x=0$ also places the lower endpoint of the slice in the $A$-side interval.  Thus the shifted strips of $B_t(A)$ give the pair of inequalities
\begin{equation}
    a_2-t\le -i,
    \qquad
    a_1+a_2-i-t\le -i .
    \label{eq:A-left-endpoint}
\end{equation}
Equations \eqref{eq:A-upper-contained} and \eqref{eq:A-left-endpoint} say exactly that
\begin{equation}
    [-i,t-1-i]\subseteq I_A(i).
    \label{eq:Ji-in-A}
\end{equation}
Similarly, since $\ell_i\in B_t(B)$ and the opposite transition side is supported by $x+y=t-1$, substituting $\ell_i$ and $u_i$ into the shifted $y$- and $(x+y)$-strip inequalities gives
\begin{equation}
    b_2-t\le -i\le b_2+t,
    \qquad
    b_1+b_2-i-t\le -i\le b_1+b_2-i+t,
    \label{eq:B-lower-contained}
\end{equation}
plus the right-endpoint inequalities
\begin{equation}
    b_2+t\ge t-1-i,
    \qquad
    b_1+b_2-i+t\ge t-1-i .
    \label{eq:B-right-endpoint}
\end{equation}
Therefore
\begin{equation}
    [-i,t-1-i]\subseteq I_B(i).
\end{equation}
Since $J_i=[-i,t-1-i]\subseteq I_A(i)\cap I_B(i)$, every vertex of $L_i$ lies in both $B_t(A)$ and $B_t(B)$.  Summing over $i=0,\ldots,t-1$ gives at least $t\cdot t=t^2$ vertices in the triple overlap.
\end{proof}

\begin{theorem}[Exact minimum overlap]
For every $t\ge1$,
\begin{equation}
    \Omega_{\mathrm{EJ}}(t)=t^2.
\end{equation}
\end{theorem}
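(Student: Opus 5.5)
The equality $\Omega_{\mathrm{EJ}}(t)=t^2$ will come from matching an upper bound and a lower bound. Both are already available from the preceding lemmas, so the remaining work is to connect them through the definition of $\Omega_{\mathrm{EJ}}(t)$. By translation invariance we place the failed resource at the origin, so the failed cell is $\hex=B_t(0)$. By the exact one-fault repair number theorem, $\rho_{\mathrm{EJ}}(t)=2$. Hence $\Omega_{\mathrm{EJ}}(t)$ is the minimum of $\operatorname{ov}(R)$ over two-element sets $R=\{q_1,q_2\}$ of nonfailed vertices with $B_t(0)\subseteq B_t(q_1)\cup B_t(q_2)$.

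For the upper bound, we exhibit a minimum-size repair with overlap exactly $t^2$. Take $R^*=\{(-1,0),(t,0)\}$. It is a valid repair by the two-center cover lemma (Lemma~2), and its size is $2=\rho_{\mathrm{EJ}}(t)$. We also check that neither center is the failed origin, since $t\ge1$. The lemma on the overlap attained by the explicit repair gives $\operatorname{ov}(R^*)=t^2$, so $\Omega_{\mathrm{EJ}}(t)\le t^2$. Alternatively, any canonical pair $R_{i,a}$ would serve, by the proposition on canonical optimal pairs.

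For the lower bound, let $R=\{A,B\}$ be an arbitrary repair of size $\rho_{\mathrm{EJ}}(t)=2$. The forced axial-interface overlap lemma requires $A$ and $B$ to be nonzero, and this holds because the origin is the failed resource and $R\subseteq V\setminus S$. The lemma also requires $A\ne B$. This holds because $R$ is a two-element set, and in any case a single ball cannot cover $B_t(0)$ by Lemma~1. The covering hypothesis \eqref{eq:arbitrary-two-cover} is exactly the validity of $R$. The lemma therefore yields $\operatorname{ov}(R)\ge t^2$. Taking the minimum over all such $R$ gives $\Omega_{\mathrm{EJ}}(t)\ge t^2$, and combining the two bounds gives equality.

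The substantive step is the lower bound, which rests entirely on the forced-overlap lemma. In this proof the only remaining care is checking that its hypotheses match the definition of a minimum-size repair. Specifically, the replacements must avoid the failed origin, and the two centers must be distinct so that the $3+3$ extreme-vertex split argument applies. Once both are confirmed, the theorem follows immediately.
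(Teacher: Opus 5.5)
Your proposal is correct and is essentially the paper's proof: the upper bound is the explicit repair $R^*=\{(-1,0),(t,0)\}$ with overlap $t^2$ from the overlap-attained lemma, and the lower bound is the forced axial-interface overlap lemma applied to an arbitrary minimum-size repair. Your hypothesis checks (nonzero and distinct centers) only make explicit what the paper uses tacitly. For full membership $R^*\subseteq V\setminus S$ you should also note that perfectness places every other resource at distance at least $2t+1$ from the origin, whereas $(-1,0)$ and $(t,0)$ lie within distance $t$ of it.
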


\begin{proof}
The forced overlap lemma gives the lower bound.  The explicit repair $R^*$ attains $t^2$.
\end{proof}

\begin{corollary}[Translated overlap lower bound]
For every resource center $r$ and every valid two-center repair of $B_t(r)$,
\begin{equation}
    \operatorname{ov}(R)\ge t^2.
\end{equation}
\end{corollary}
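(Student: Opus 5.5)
The corollary follows from the Forced axial-interface overlap lemma by translation invariance. I will reduce the translated statement to the case where the failed resource sits at the origin, apply the lemma there, and then translate back. Everything rests on the fact that the EJ metric is translation invariant, which is visible in \eqref{eq:ej-distance-translated}.

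Fix a resource center $r$ and a valid two-center repair $R=\{q_1,q_2\}$ of $B_t(r)$. Since $R\subseteq V\setminus S$ and $r\in S$, we have $q_1,q_2\ne r$. Define the translation $\tau(u)=u-r$. By \eqref{eq:ej-distance-translated}, $d_{\EJ}(\tau u,\tau v)=d_{\EJ}(u,v)$ for all $u,v$, so $\tau(B_t(c))=B_t(c-r)$ for every center $c$.

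Set $A=q_1-r$ and $B=q_2-r$. Both are nonzero because $q_i\ne r$. Applying $\tau$ to the validity condition $B_t(r)\subseteq B_t(q_1)\cup B_t(q_2)$ gives
\[
B_t(0)\subseteq B_t(A)\cup B_t(B),
\]
which is exactly hypothesis \eqref{eq:arbitrary-two-cover}. The Forced axial-interface overlap lemma therefore yields
\[
|B_t(0)\cap B_t(A)\cap B_t(B)|\ge t^2 .
\]

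The translation $\tau$ is a bijection, and it maps $B_t(r)\cap B_t(q_1)\cap B_t(q_2)$ onto $B_t(0)\cap B_t(A)\cap B_t(B)$. The two triple intersections therefore have the same cardinality, and so $\operatorname{ov}(R)\ge t^2$.

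The proof has no real obstacle. The only points to check are the following.
\begin{itemize}
\item The lemma's nonzero hypothesis is satisfied. This holds because replacements are drawn from $V\setminus S$, so neither can coincide with the failed resource $r$.
\item Translation really is a metric automorphism, so cardinalities are preserved. This is immediate from \eqref{eq:ej-distance-translated}.
\end{itemize}

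In the finite torus network, translations modulo $\alpha$ are also automorphisms, and the balls are embedded isomorphically because $N=|B_t(0)|$. The same argument therefore applies there.
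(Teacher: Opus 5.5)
Your proof is correct and is the same translation reduction the paper uses. Your citation is in fact the right one: the paper's one-line proof says to translate Proposition~2, but that proposition only gives the matching upper bound for the canonical pairs, whereas the lower bound comes from the forced axial-interface overlap lemma, whose nonzero-center hypothesis you correctly check via $q_1,q_2\ne r$.

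Drop the closing torus remark, though. In a network of order exactly $|B_t(0)|$, every radius-$t$ ball is the entire vertex set, so lattice intersection patterns do not transfer there; the bound holds only trivially, with overlap $N$. The statement should be read in the infinite lattice, which is where the paper's arguments actually take place.
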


\begin{proof}
Translate Proposition~2 by $r$.
\end{proof}

\section{Two-Fault Repair: Additivity and Non-Additivity}

The one-fault theorem immediately gives a four-replacement upper bound for two failed resources.  Let
\begin{equation}
    F=\{r_1,r_2\}\subseteq S
\end{equation}
be the failed pair and let
\begin{equation}
    U(F)=B_t(r_1)\cup B_t(r_2)
    \label{eq:failed-region-two}
\end{equation}
be the two-cell failed region.  A repair set $R$ is valid if
\begin{equation}
    U(F)\subseteq\bigcup_{z\in R}B_t(z).
\end{equation}
For a fixed failed pair $F$, write $\rho_{\mathrm{EJ}}^{(2)}(t,F)$ for the minimum number of replacements needed to cover $U(F)$.

\begin{theorem}[Two-fault upper bound]
For every pair of failed resource centers $F=\{r_1,r_2\}$,
\begin{equation}
    \rho_{\mathrm{EJ}}^{(2)}(t,F)\le4.
\end{equation}
\end{theorem}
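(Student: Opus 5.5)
The plan is to use independent canonical repair: take the union of the two translated one-fault repairs. Concretely, set
\begin{equation*}
    R=R^*(r_1)\cup R^*(r_2)=\{r_1-(1,0),\ r_1+t(1,0),\ r_2-(1,0),\ r_2+t(1,0)\},
\end{equation*}
using the translated canonical repair algorithm \eqref{eq:translated-algorithm}. Then $|R|\le 4$; the set may have fewer elements if two of these centers coincide, and that only helps.

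\textbf{Coverage.} Translate the two-center cover of Lemma~2 by $r_j$. Since $d_{\EJ}$ is translation invariant, this gives
\begin{equation*}
B_t(r_j)\subseteq B_t(r_j-(1,0))\cup B_t(r_j+t(1,0)), \qquad j=1,2 .
\end{equation*}
Taking the union over $j=1,2$ shows that $U(F)=B_t(r_1)\cup B_t(r_2)$ is contained in $\bigcup_{z\in R}B_t(z)$.

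\textbf{Admissibility.} The one point needing care is that the repair model requires $R\subseteq V\setminus S$, so replacement centers must not be resource nodes. In the dense placement, resources form a translate of the ideal $(\alpha)$, and distinct resources are at distance at least $2t+1$ apart, because their radius-$t$ balls are disjoint.
\begin{itemize}
\item The center $r_j-(1,0)$ is at distance $1$ from $r_j$, so it is not a resource (for $t\ge1$).
\item The center $r_j+t(1,0)$ is at distance $t<2t+1$ from $r_j$, so it is not a resource either.
\end{itemize}
Hence no replacement is a surviving resource. It is not a failed resource either, since each center is nonzero distance from its own $r_j$ and within distance $t$ of it.

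\textbf{Conclusion.} $R$ is a valid repair of $U(F)$ with $|R|\le4$, so $\rho^{(2)}_{\EJ}(t,F)\le4$. The only step needing attention is the admissibility check, and the minimum-distance property of a perfect placement settles it directly.
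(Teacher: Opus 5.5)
Your proposal is correct and is essentially the paper's proof: take the union of the two translated canonical pairs $\{r_i-(1,0),\,r_i+t(1,0)\}$, which covers $U(F)$ with at most four centers. Two of your steps are tighter than the paper's version: you cite the translated Lemma~2 for coverage, where the paper cites Corollary~1 (an overlap bound), and you explicitly check that these centers avoid $S$ using the $2t+1$ separation forced by perfectness, a check the paper omits.
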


\begin{proof}
For each failed resource $r_i$, choose one translated canonical repair pair from \eqref{eq:translated-algorithm}; for instance,
\begin{equation}
    R_i=\{r_i-(1,0),\ r_i+t(1,0)\}.
\end{equation}
By Corollary~1, $R_i$ covers $B_t(r_i)$.  Hence $R_1\cup R_2$ covers $B_t(r_1)\cup B_t(r_2)$ and uses at most four replacement centers.
\end{proof}

In the Gaussian local-repair paper, the corresponding two-fault value is always additive for $t\ge2$.  The EJ case is different.  Neighboring EJ hexagons can share enough boundary structure that one replacement ball can participate in repairing both failed cells, reducing the two-fault repair number from four to three.

\begin{figure}[H]
\centering
\includegraphics[width=0.6\linewidth]{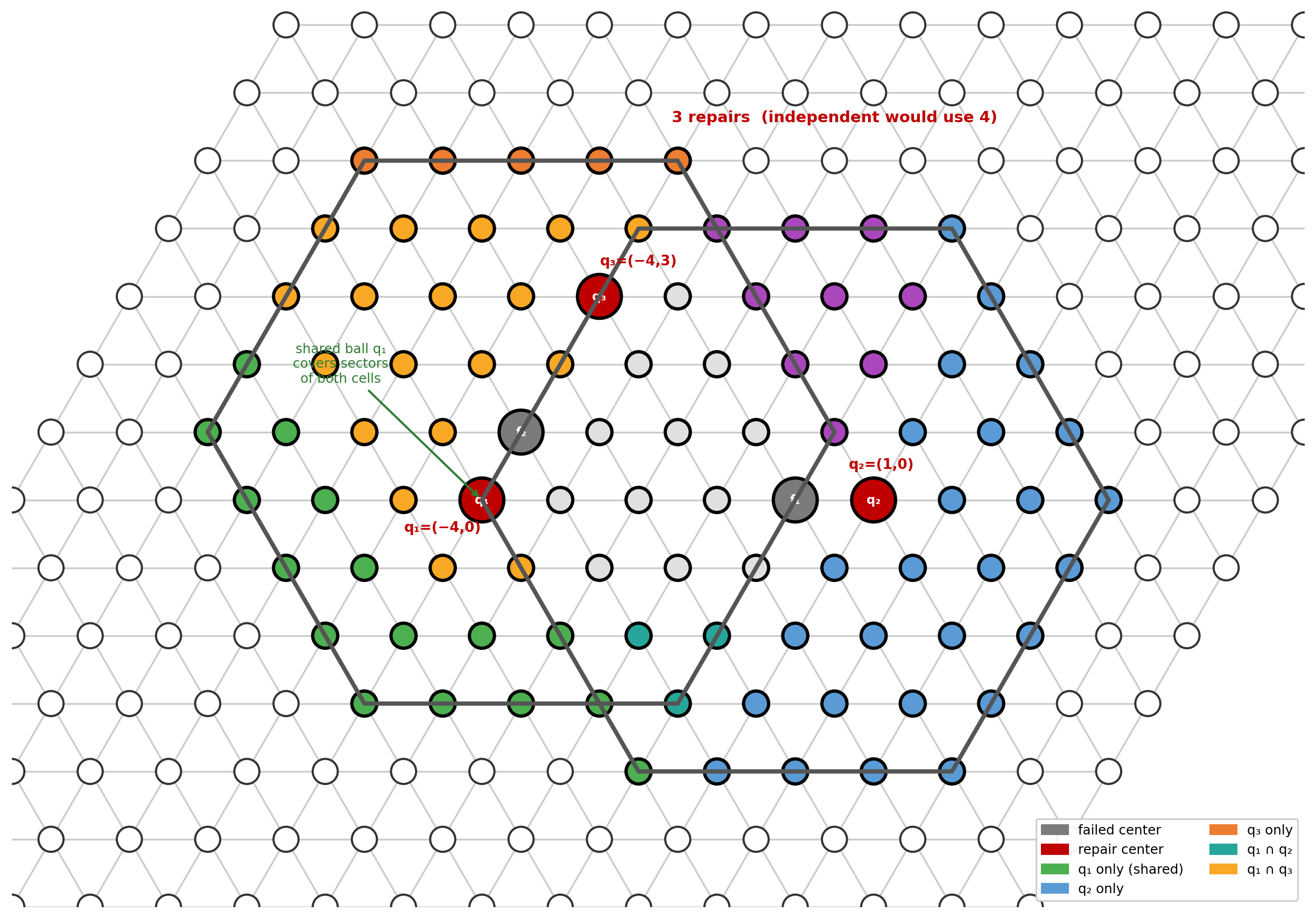}
\caption{Non-additive two-fault repair for the neighboring family $D=g_2=(-t,1)$ at $t=4$.  Three replacement balls cover the union of the two failed hexagons, so the repair is strictly better than independent four-ball repair.}
\label{fig:ej-twofault-three-cover}
\end{figure}

The dense EJ resource lattice associated with $\alpha=(t+1)+t\omega$ is generated in axial coordinates by
\begin{equation}
    g_1=(t+1,t),\qquad g_2=(-t,1).
    \label{eq:ej-resource-generators}
\end{equation}
Thus a neighboring displacement can be written as
\begin{equation}
    D=m g_1+n g_2.
    \label{eq:two-fault-displacement}
\end{equation}
By translation, it suffices to take $r_1=(0,0)$ and $r_2=D$.

\begin{lemma}[First non-additive neighboring family]
Let $t\ge2$ and let the two failed centers be
\begin{equation}
    r_1=(0,0),\qquad r_2=(-t,1)=g_2.
\end{equation}
Then the three centers
\begin{equation}
    R=\{(-t,0),\ (1,0),\ (-t,3)\}
    \label{eq:k3-family-a}
\end{equation}
cover $B_t(r_1)\cup B_t(r_2)$.  By reflection, the displacement $(t,-1)=-g_2$ also has a three-replacement repair.
\end{lemma}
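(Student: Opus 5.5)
The plan is a direct verification by the three-strip description of EJ balls, in the same style as the proof of Lemma~2. We treat the two failed cells separately. For each cell we first cover ``almost all'' of it with one replacement ball. We then identify the uncovered remainder as a union of boundary segments lying on single supporting lines, and cover those segments with another replacement ball. We assume $t\ge2$ throughout, since this hypothesis is used once, in Step~3.

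\emph{Step 1: the cell $B_t(0)$.} The center $(1,0)$ is a unit shift of the origin. Comparing the strip inequalities $|x-1|\le t$, $|y|\le t$, $|x+y-1|\le t$ with those of $B_t(0)$, the only failed-cell vertices it can miss are those with $x=-t$ or $x+y=-t$. On the segment $x=-t$ we have $0\le y\le t$. On the segment $x+y=-t$ we have $-t\le x\le 0$. For both segments we substitute into the strips of $B_t((-t,0))$: one strip becomes an equality, and the other two reduce to $|y|\le t$ or $|x+t|\le t$, which hold. This is exactly the mirror of the case analysis in Lemma~2.

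\emph{Step 2: the cell $B_t((-t,1))$, main part.} We use the center $(-t,0)$, which is a unit shift of $r_2=(-t,1)$. The strips of $B_t((-t,1))$ are $|x+t|\le t$, $|y-1|\le t$, $|x+y+t-1|\le t$, and those of $B_t((-t,0))$ differ only by the unit shifts. Hence the uncovered vertices lie on the two top boundary lines: the segment $y=t+1$ with $-2t\le x\le -t$, and the segment $x+y=1$ with $-t\le x\le 0$. To get these ranges, we intersect each boundary line with the remaining two strips.

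\emph{Step 3: the top boundary.} This is the only place where the new center $(-t,3)$ and the hypothesis $t\ge2$ enter, and it is the step I expect to need care. On $y=t+1$ with $x+2t\in[0,t]$, we check $|x+t|\le t$, $|y-3|=t-2\le t$, and $|x+y+t-3|=|x+2t-2|\le t$. The last inequality uses $x+2t-2\ge -2\ge -t$, which is where $t\ge2$ is needed. On $x+y=1$, the vertices with $-t+1\le x\le 0$ are covered by $(1,0)$, since there $|x-1|\le t$, $|y|=|1-x|\le t$, and $x+y-1=0$. The single corner $(-t,t+1)$ fails $|x-1|\le t$, but it also lies on $y=t+1$ and is therefore already covered by $(-t,3)$. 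Combining Steps~1--3 shows that $R$ covers $B_t(r_1)\cup B_t(r_2)$.

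\emph{Step 4: the reflected displacement.} The map $(x,y)\mapsto(-x,-y)$ is multiplication by the unit $-1$. It preserves $d_{\EJ}$, as in the associate-symmetry proposition, and it fixes the origin. It sends $g_2$ to $-g_2=(t,-1)$ and sends $R$ to $\{(t,0),(-1,0),(t,-3)\}$. Therefore these three centers repair the pair $\{(0,0),(t,-1)\}$.

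Finally, we note that none of the three centers is a resource: each is a nonzero displacement shorter than any nonzero resource-lattice vector, so the repair is admissible.
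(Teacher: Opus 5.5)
Your covering argument is correct and is essentially the paper's proof. Both use the same three-strip bookkeeping: each failed cell is split into a bulk part plus boundary segments lying on single supporting lines. The differences are cosmetic. In the first cell, the paper lets $B_t((-t,0))$ take the part $x\le 0$, $x+y\le 0$ and lets $B_t((1,0))$ take the rest. In the second cell, it puts the whole segment $x+y=1$ into $B_t((-t,3))$, since there $y=1-x\ge 1\ge 3-t$. Hence $B_t((-t,1))\subseteq B_t((-t,0))\cup B_t((-t,3))$, and neither $(1,0)$ nor your separate handling of the corner $(-t,t+1)$ is needed.

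The one thing to fix is your closing admissibility sentence. The paper does not attempt this check, and your justification is false in this setup. Here $r_2=g_2=(-t,1)$ is itself a nonzero resource-lattice vector with $d_{\mathrm{EJ}}(g_2)=\max\{t,1,t-1\}=t$. Meanwhile $d_{\mathrm{EJ}}((-t,0))=t$ and $d_{\mathrm{EJ}}((-t,3))=\max\{t,3\}$, so these centers are not shorter than every nonzero lattice vector. Your heuristic presumes a minimum distance of $2t+1$, as in a genuine perfect $t$-code, and that is incompatible with two resources at distance $t$.

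The conclusion is still true and follows by direct computation. Writing $c=mg_1+ng_2$ with $g_1=(t+1,t)$ gives $c_1+tc_2=m(t^2+t+1)$. So $c=(c_1,c_2)$ lies in $\mathbb{Z}g_1+\mathbb{Z}g_2$ if and only if $t^2+t+1$ divides $c_1+tc_2$. For $(-t,0)$, $(1,0)$, $(-t,3)$ this quantity is $-t$, $1$, $2t$, and none of these is divisible by $t^2+t+1$. The reflected centers give the negatives of these values, so they are not resources either.
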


\begin{proof}
A vertex $z=(x,y)$ lies in a ball $B_t(c_1,c_2)$ exactly when
\begin{equation}
    |x-c_1|\le t,\qquad
    |y-c_2|\le t,\qquad
    |x+y-c_1-c_2|\le t.
    \label{eq:three-strip-membership}
\end{equation}
Put
\begin{equation}
 R_1=(-t,0),\qquad R_2=(1,0),\qquad R_3=(-t,3).
\end{equation}
We verify the two failed cells separately by explicit strip intervals.  If $z\in B_t(0,0)$, then $-t\le x,y,x+y\le t$.  If $x\le0$ and $x+y\le0$, then $z\in B_t(R_1)$.  Otherwise $x\ge1$ or $x+y\ge1$.  In the first case, the inequalities $y\ge-t$ and $x+y\ge1-t$ follow from $z\in B_t(0,0)$; in the second case, $x\ge1-t$ follows similarly.  Thus in either case $z$ satisfies
\begin{equation}
1-t\le x\le1+t,\qquad -t\le y\le t,\qquad 1-t\le x+y\le1+t,
\end{equation}
so $z\in B_t(R_2)$.  Hence $B_t(0,0)\subseteq B_t(R_1)\cup B_t(R_2)$.

Now let $z\in B_t(-t,1)$.  Then
\begin{equation}
-2t\le x\le0,\qquad 1-t\le y\le1+t,\qquad 1-2t\le x+y\le1.
\end{equation}
If $y\le t$ and $x+y\le0$, then $z\in B_t(R_1)$.  Otherwise $y=t+1$ or $x+y=1$.  If $y=t+1$, then $x+y\le1$ gives $x\le -t$, and hence $x+y\ge1-t\ge3-2t$ for $t\ge2$.  If $x+y=1$, then $y=1-x\ge1\ge3-t$.  In both cases
\begin{equation}
-2t\le x\le0,\qquad 3-t\le y\le3+t,\qquad 3-2t\le x+y\le3,
\end{equation}
which is exactly the strip system for $B_t(R_3)$.  Therefore $B_t(-t,1)\subseteq B_t(R_1)\cup B_t(R_3)$.  The two containments prove the claimed three-ball cover.  The reflected displacement follows from $(x,y)\mapsto(-x,-y)$.
\end{proof}

\begin{lemma}[Second non-additive neighboring family]
Let $t\ge3$ and let the two failed centers be
\begin{equation}
    r_1=(0,0),\qquad r_2=(1-t,t+2)=g_1+2g_2.
\end{equation}
Then the three centers
\begin{equation}
    R=\{(-t,t+3),\ (0,-1),\ (0,3)\}
    \label{eq:k3-family-b}
\end{equation}
cover $B_t(r_1)\cup B_t(r_2)$.  By reflection, the displacement $(t-1,-t-2)=-(g_1+2g_2)$ also has a three-replacement repair.
\end{lemma}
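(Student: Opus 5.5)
The plan mirrors the proof of the first non-additive family: reduce membership to the three-strip inequalities \eqref{eq:three-strip-membership}, and check the two failed cells separately. Put $P=(-t,t+3)$, $Q=(0,-1)$, $W=(0,3)$. Their balls are the strip systems
\begin{equation*}
\begin{split}
B_t(P):&\ -2t\le x\le 0,\quad 3\le y\le 2t+3,\quad 3-t\le x+y\le 3+t,\\
B_t(Q):&\ -t\le x\le t,\quad -1-t\le y\le t-1,\quad -1-t\le x+y\le t-1,\\
B_t(W):&\ -t\le x\le t,\quad 3-t\le y\le 3+t,\quad 3-t\le x+y\le 3+t.
\end{split}
\end{equation*}
The second failed cell $B_t(r_2)$, with $r_2=(1-t,t+2)$ and coordinate sum $3$, is
\begin{equation*}
1-2t\le x\le 1,\qquad 2\le y\le 2t+2,\qquad 3-t\le x+y\le 3+t .
\end{equation*}
The intended split is that $Q$ and $W$ repair $B_t(0)$, while $P$ and $W$ repair $B_t(r_2)$. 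Thus $W$ is the shared ball that saves one replacement.

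For $B_t(0)$, the main case is a point with $y\le t-1$ and $x+y\le t-1$. The lower bounds $y\ge -t$ and $x+y\ge -t$ coming from $B_t(0)$ place it in $B_t(Q)$. The remaining points lie on the two upper boundary sides $y=t$ or $x+y=t$.
\begin{itemize}
\item If $y=t$, the central inequalities give $-t\le x\le 0$ and $0\le x+y\le t$. Membership in $B_t(W)$ then needs $t\ge 3-t$ and $0\ge 3-t$.
\item If $x+y=t$, then $0\le x\le t$ and $0\le y\le t$. Membership in $B_t(W)$ needs $0\ge 3-t$.
\end{itemize}
Both requirements hold exactly when $t\ge3$, which is where the hypothesis enters.

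For $B_t(r_2)$, the main case is a point with $x\le 0$ and $y\ge 3$; it lies in $B_t(P)$, because $1-2t\ge -2t$, $2t+2\le 2t+3$, and the $(x+y)$-strips coincide. The remaining points satisfy $x=1$ or $y=2$.
\begin{itemize}
\item If $x=1$, the $(x+y)$-strip together with $y\ge2$ gives $2\le y\le t+2$ and $3\le x+y\le t+3$. These lie inside the $W$-strips.
\item If $y=2$, the $(x+y)$-strip gives $1-t\le x\le 1$ and $3-t\le x+y\le 3$. Again these lie inside the $W$-strips, since $2\ge 3-t$.
\end{itemize}
This proves $B_t(r_1)\cup B_t(r_2)\subseteq B_t(P)\cup B_t(Q)\cup B_t(W)$. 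The reflected displacement $-(g_1+2g_2)$ follows by applying $(x,y)\mapsto(-x,-y)$, exactly as in the first family.

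The main obstacle is bookkeeping rather than ideas. One must make sure the case split is exhaustive, that is, that every point of each cell not in the main ball lies on one of the two named boundary sides. One must also check that every inequality involving $3-t$ uses only $t\ge3$. If replacements must also avoid the surviving resource lattice generated by $g_1,g_2$, I would add a short check that $P$, $Q$, $W$ are not integer combinations of $g_1,g_2$ (e.g., using the labeling $\phi$ modulo $N$). If not, a small translation of a center along an interface would be tried.
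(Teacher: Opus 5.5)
Your proposal is correct and follows the paper's proof essentially line by line. It uses the same strip systems and the same case split: $Q$ and $W$ cover $B_t(0)$, with the leftover points on the sides $y=t$ and $x+y=t$; $P$ and $W$ cover $B_t(r_2)$, with the leftover points on $x=1$ and $y=2$; the reflection $(x,y)\mapsto(-x,-y)$ is handled the same way. Your side check that the centers avoid the resource lattice is immediate by perfectness, since each center lies within distance $t$ of a failed center without being it ($d_{\mathrm{EJ}}(Q,0)=1$, $d_{\mathrm{EJ}}(W,0)=3\le t$, $d_{\mathrm{EJ}}(P,r_2)=1$), so no translation fallback is needed.
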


\begin{proof}
Let
\begin{equation}
 S_1=(-t,t+3),\qquad S_2=(0,-1),\qquad S_3=(0,3).
\end{equation}
Again we use the three-strip membership test.  First let $z=(x,y)\in B_t(0,0)$.  If $y\le t-1$ and $x+y\le t-1$, then $z\in B_t(S_2)$.  Otherwise $y=t$ or $x+y=t$.  If $y=t$, then $x\ge -t$ gives $x+y\ge0\ge3-t$ for $t\ge3$.  If $x+y=t$, then $x\le t$ gives $y\ge0\ge3-t$.  Thus in either case
\begin{equation}
-t\le x\le t,\qquad 3-t\le y\le t+3,\qquad 3-t\le x+y\le t+3,
\end{equation}
so $z\in B_t(S_3)$.  Hence $B_t(0,0)\subseteq B_t(S_2)\cup B_t(S_3)$.

Now let $z\in B_t(1-t,t+2)$.  The failed-cell strips are
\begin{equation}
1-2t\le x\le1,\qquad 2\le y\le2t+2,\qquad 3-t\le x+y\le3+t.
\end{equation}
If $x\le0$ and $y\ge3$, then $z\in B_t(S_1)$, since the $x$- and $y$-ranges and the full $x+y$-range agree with the strips of $B_t(S_1)$ on this part.  Otherwise $x=1$ or $y=2$.  If $x=1$, then $x+y\le3+t$ gives $y\le t+2\le t+3$, and $y\ge2\ge3-t$.  If $y=2$, then $x+y\ge3-t$ gives $x\ge1-t\ge -t$.  In either case $z$ satisfies the three strips of $B_t(S_3)$.  Hence
\begin{equation}
B_t(1-t,t+2)\subseteq B_t(S_1)\cup B_t(S_3).
\end{equation}
Together the two containments prove the cover by the three centers in \eqref{eq:k3-family-b}.  The reflected case follows from $(x,y)\mapsto(-x,-y)$.
\end{proof}

\begin{figure}[H]
\centering
\includegraphics[width=0.6\linewidth]{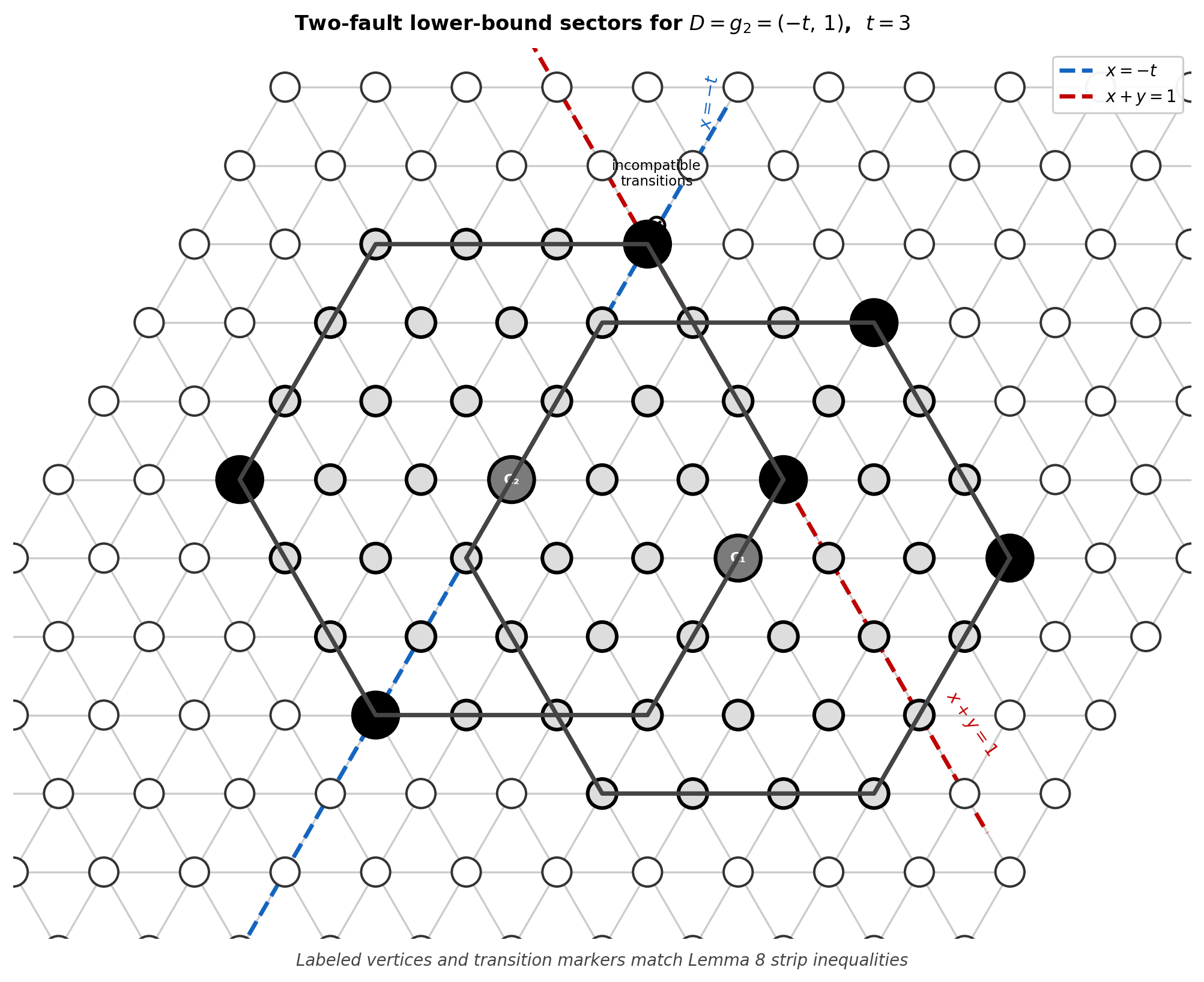}
\caption{Extremal sectors used by the two-fault two-ball obstruction.  The labels mark the axial transition sectors used in Lemma~8; any two-ball cover would have to satisfy incompatible strip inequalities on the two adjacent failed cells.}
\label{fig:ej-twofault-lower-bound}
\end{figure}

Figure~\ref{fig:ej-twofault-lower-bound} marks the incompatible extremal sectors used in the two-ball impossibility argument; the labels in the figure match the transition inequalities used below.

\begin{lemma}[Two-ball impossibility for the non-additive families]
For the two infinite non-additive displacement families $D=\pm g_2$ and $D=\pm(g_1+2g_2)$, no two replacement balls cover $B_t(0)\cup B_t(D)$.  Consequently the three-ball constructions in Lemmas~6 and~7 are optimal.
\end{lemma}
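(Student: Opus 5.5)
The plan is a counting-and-moments argument rather than a case analysis of extremal sectors. It uses only that the two failed cells are disjoint, which holds for the stated families $D=\pm g_2$ and $D=\pm(g_1+2g_2)$ and in fact for any two distinct resource centers.

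Put $N=3t^2+3t+1$ and $U=B_t(0)\cup B_t(D)$, where $D$ is a nonzero element of the resource lattice generated by $g_1,g_2$ in \eqref{eq:ej-resource-generators}. The first step is to show that $|U|=2N$. Since the original placement is perfect, $B_t(0)$ and $B_t(D)$ are disjoint, and each has $N$ vertices by \eqref{eq:ej-ball-size}. Now suppose two centers $A,B$ satisfy $U\subseteq B_t(A)\cup B_t(B)$. The right-hand side has at most $2N$ vertices, so equality holds throughout: $B_t(A)$ and $B_t(B)$ are disjoint and
\[
B_t(A)\sqcup B_t(B)=B_t(0)\sqcup B_t(D).
\]
This is an identity of finite vertex sets, so we may sum any function of the vertices over both sides.

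The second step extracts the centers from low-order moments, viewing each vertex $z$ as a vector in $\mathbb{R}^2$ via axial coordinates. Every ball $B_t(c)$ is the translate $c+\hex$, and $\hex$ is centrally symmetric. Two consequences follow for any center $c$:
\[
\sum_{z\in B_t(c)} z = Nc,
\qquad
\sum_{z\in B_t(c)} z z^{\mathsf T} = N\,cc^{\mathsf T}+M,
\]
where $M=\sum_{h\in\hex}hh^{\mathsf T}$ does not depend on $c$. The cross terms $c\,h^{\mathsf T}$ vanish because $\sum_{h\in\hex}h=0$. Summing the displayed set identity with these two formulas gives
\[
A+B=D, \qquad AA^{\mathsf T}+BB^{\mathsf T}=DD^{\mathsf T}.
\]
Substituting $D=A+B$ into the second relation yields $AB^{\mathsf T}+BA^{\mathsf T}=0$.

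The third step is the linear-algebra conclusion, which is the only point needing care. If $A\neq0$ and $B\neq0$, the symmetric matrix $AB^{\mathsf T}+BA^{\mathsf T}$ has quadratic form $v\mapsto 2(A\cdot v)(B\cdot v)$. Choose $v$ avoiding the two lines $A^{\perp}$ and $B^{\perp}$; the form is then nonzero at $v$, so the matrix is nonzero. Hence $A=0$ or $B=0$, and then the other center equals $D$. But $0$ and $D$ are the failed resource centers, and repair centers must lie in $V\setminus S$, so this is a contradiction. Therefore no two replacement balls cover $B_t(0)\cup B_t(D)$.

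Combined with the three-ball constructions of Lemmas~6 and~7, this gives exactly three for the families $\pm g_2$ and $\pm(g_1+2g_2)$. The argument also shows that no pair of distinct failed cells is ever repairable with two balls. This is consistent with the stated results: the additive families need four, which is a separate and stronger claim proved later. I expect no serious obstacle; the main things to verify are the central symmetry of $\hex$ and the exact disjointness forced by the cardinality count.
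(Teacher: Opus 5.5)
The step that fails is the first one, $|U|=2N$. You assert that $B_t(0)$ and $B_t(D)$ are disjoint for the stated families. With the paper's explicit generators $g_1=(t+1,t)$ and $g_2=(-t,1)$ this is false.

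For the first family, $d_{\mathrm{EJ}}(0,g_2)=\max\{t,1,t-1\}=t$, so the second failed center itself lies in $B_t(0)$. For the second, $d_{\mathrm{EJ}}(0,g_1+2g_2)=d_{\mathrm{EJ}}(0,(1-t,t+2))=t+2\le 2t$, so those two cells meet as well.

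The cause is an inconsistency in the paper's setup: $\det(g_1,g_2)=t^2+t+1<3t^2+3t+1=|B_t(0)|$, so $\langle g_1,g_2\rangle$ cannot index a perfect placement for the metric $\max\{|x|,|y|,|x+y|\}$. Still, the lemma, and the covers in Lemmas 6 and 7 whose optimality it asserts, are about these concrete sets. For them $|U|<2N$, the cardinality squeeze yields no tiling, and the moment computation never starts.

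What your argument does prove is the abstract statement. If $0\ne D$ are two centers of a genuine perfect placement, equal cardinalities force $B_t(A)\sqcup B_t(B)=B_t(0)\sqcup B_t(D)$, and the first two moments force $\{A,B\}=\{0,D\}$. That is correct and cleaner than the paper's sector analysis. It is also stronger, since it rules out two-ball repair for every pair of distinct failed cells at once.

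To prove the lemma as stated, you need a local argument that does not use disjointness. The paper attempts this with its extreme-vertex and transition-strip inequalities, plus the automorphism $T$ for the second family. A clean version:

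For $D=(-t,1)$, the ball $B_t(B)$ covering $(-2t,1)$ has $b_1\le -t$ and meets only $x\le 0$. So the other ball $B_t(A)$ must contain $(t,0)$ and $(t,-t)$, which gives $a_1\ge 0$ and $a_2\le 0\le a_1+a_2$. Since a ball with center $B\ne D$ cannot contain both $(-t,1+t)$ and $(-t,1-t)$, the ball $B_t(A)$ must contain one of them. That forces $a_1\le 0$, hence $a_1=0$, then $a_2=0$, so $A=(0,0)$.

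For $D=(1-t,t+2)$, the ball covering $(1-t,2t+2)$ has second center coordinate at least $t+2$, so it misses the row $y=0$. The other ball then contains both $(t,0)$ and $(-t,0)$, and is therefore $B_t(0)$.

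In both cases a failed center is forced, and the map $z\mapsto -z$ handles $-g_2$ and $-(g_1+2g_2)$.
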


\begin{proof}
It is enough to prove the two positive representatives, because the negative representatives follow by the isometry $(x,y)\mapsto(-x,-y)$.  Suppose first that $D=g_2=(-t,1)$ and that two translated radius-$t$ balls cover $B_t(0)\cup B_t(D)$.  Restricting the cover to $B_t(0)$, the six extreme vertices force the two balls to occupy complementary axial arcs; no available replacement ball can contain an opposite pair of extremes, because such a pair fixes the center at the failed origin.  Thus the two balls determine a single transition strip for $B_t(0)$.  Restricting the same two balls to $B_t(D)$ forces another transition strip.  In the coordinates of \eqref{eq:three-strip-membership}, the first transition requires the shared strip to satisfy $x\ge -t+1$ on the lower endpoint of the adjacent cell, whereas the second requires $x\le -t$ on the corresponding endpoint of the central cell.  These inequalities are incompatible.  Equivalently, adding the two supporting-line inequalities gives $0\le -1$.

For $D=g_1+2g_2=(1-t,t+2)$ we do not invoke an unverified symmetry reduction.  Apply the axial automorphism
\begin{equation}
    T(x,y)=(y,-x-y),
\end{equation}
which permutes the three strip coordinates by
\begin{equation}
    (x,y,x+y)\mapsto (y,-x-y,-x)
\end{equation}
and therefore preserves $d_{\mathrm{EJ}}$.  The image of the second failed center is computed explicitly as
\begin{equation}
    T(1-t,t+2)=(t+2,-3),
    \qquad T(0,0)=(0,0).
\end{equation}
Thus $T$ maps the pair $B_t(0)$ and $B_t(1-t,t+2)$ to the pair $B_t(0)$ and $B_t(t+2,-3)$.  The supporting line $x=-t$ of the original central hexagon is sent by $T$ to the line $X+Y=t$ in the image coordinates $(X,Y)=T(x,y)$, since $X+Y=-x$; this is one of the canonical supporting lines of $B_t(0)$.  Similarly, the adjacent supporting line $x+y=3-t$ of the translated cell is sent to $Y=t-3$, because $Y=-x-y$.  Therefore, in the permuted strip coordinates $(y,-x-y,-x)$, this image has the same adjacent-supporting-strip configuration as the $D=g_2$ case: one transition side requires the common transition coordinate to be at most $-t$, while the other requires the same coordinate to be at least $-t+1$.  Written back in the original coordinates these two inequalities are
\begin{equation}
    x\le -t,\qquad x\ge -t+1,
\end{equation}
for the same transition line.  They cannot hold simultaneously.  Thus no two-ball cover exists for $D=g_1+2g_2$ either, and the reflected representatives follow by symmetry.
\end{proof}

\begin{theorem}[Exact EJ two-fault non-additivity]
\label{thm:nonadditive-twofault}
For the four displacement families
\begin{equation}
    D\in\{g_2,-g_2,g_1+2g_2,-g_1-2g_2\},
\end{equation}
with $t\ge2$ for $\pm g_2$ and $t\ge3$ for $\pm(g_1+2g_2)$,
\begin{equation}
    \rho_{\mathrm{EJ}}^{(2)}(t,\{0,D\})=3.
\end{equation}
Consequently, dense EJ two-fault repair is not universally additive.
\end{theorem}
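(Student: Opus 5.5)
The plan is to combine the upper-bound constructions with the two-ball impossibility lemma, treating each of the four displacement families separately but reducing the negative ones to the positive ones by the isometry $(x,y)\mapsto(-x,-y)$. This map fixes the origin, sends $B_t(D)$ to $B_t(-D)$, and preserves all ball incidences. So it suffices to prove $\rho_{\mathrm{EJ}}^{(2)}(t,\{0,D\})=3$ for $D=g_2$ with $t\ge2$ and for $D=g_1+2g_2$ with $t\ge3$.

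For the upper bound $\rho^{(2)}_{\mathrm{EJ}}\le3$, I will invoke Lemma~6 for $D=g_2$ and Lemma~7 for $D=g_1+2g_2$. These give explicit three-center covers of $B_t(0)\cup B_t(D)$ in exactly the stated ranges of $t$. One point must be checked: the replacement centers should be nonfailed vertices, i.e.\ not in $S$. For instance, $(1,0)$, $(-t,0)$, $(-t,3)$, $(0,-1)$, $(0,3)$ and $(-t,t+3)$ each lie at distance at most $2t$ from a resource but are not themselves lattice points of $\langle g_1,g_2\rangle$. I will verify this by a short check: writing $(a,b)=mg_1+ng_2$ forces integrality conditions that fail for these small offsets. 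This check is the only routine bookkeeping beyond the cited lemmas.

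For the lower bound $\rho^{(2)}_{\mathrm{EJ}}\ge3$, I will first note that one ball cannot suffice. A single ball covering $B_t(0)\cup B_t(D)$ would in particular contain $B_t(0)$, and Lemma~1 rules this out for any nonfailed center. Two balls are then excluded by Lemma~8, which treats precisely $D=g_2$ and, via the automorphism $T$, $D=g_1+2g_2$.

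Hence the minimum equals $3$ in all four families. Since independent repair would use $4$ centers, this exhibits non-additivity. The main obstacle, if one does not simply cite it, is the two-ball impossibility in Lemma~8. There, the forced $3+3$ extreme-vertex split of each failed hexagon, taken from the proof of Lemma~5, must be shown to produce incompatible transition inequalities ($x\le -t$ versus $x\ge -t+1$) on the shared strip. I would rely on that lemma as stated rather than reprove it.
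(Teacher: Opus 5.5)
Your proposal is correct and matches the paper's proof. The upper bound comes from the explicit three-ball covers of Lemmas~6 and~7 together with their images under $(x,y)\mapsto(-x,-y)$, and the lower bound comes from the two-ball impossibility of Lemma~8. Your two extra steps are harmless bookkeeping that the paper leaves implicit: excluding a single ball via Lemma~1 (which Lemma~8 already covers, since a one-ball cover would give a two-ball cover) and checking that the replacement centers avoid $\langle g_1,g_2\rangle$.
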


\begin{proof}
The upper bound three follows from the explicit covers in Lemmas~6 and~7 and their reflected images.  The lower bound three follows from Lemma~8, which rules out any two-ball cover for the same four families.  Thus the exact value is three.  Independent one-fault repair would use four centers, so these families are genuinely non-additive.
\end{proof}

\begin{remark}[Why this differs from the Gaussian case]
The three-replacement families occur because two neighboring EJ hexagons can be cut into three axial sectors that are themselves contained in three translated radius-$t$ hexagons.  The analogous Gaussian failed cells are parity-constrained squares in rotated coordinates; the square-corner obstruction prevents this kind of three-sector cover.  Thus the EJ two-fault result is not a repetition of the Gaussian additivity theorem, but a genuine hexagonal phenomenon.
\end{remark}

\begin{lemma}[Axial endpoint rigidity]
\label{lem:axial-endpoint-rigidity}
Let
\begin{equation}
    e\in\{(1,0),(0,1),(1,-1)\}
\end{equation}
be one of the three axial directions of the EJ hexagon, and let two failed cells have centers $0$ and $D$.  Suppose that
\begin{equation}
\label{eq:endpoint-rigidity-condition}
    d_{\mathrm{EJ}}(D)>2t,\qquad
    d_{\mathrm{EJ}}(D+2te)>2t,\qquad
    d_{\mathrm{EJ}}(D-2te)>2t .
\end{equation}
Then every repair of $B_t(0)\cup B_t(D)$ requires at least four replacement balls.
\end{lemma}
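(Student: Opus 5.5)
The plan is a pigeonhole argument on opposite extreme vertices along the distinguished axis $e$. I will not try to analyse the full geometry of a hypothetical three-ball cover.

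\textbf{Step 1: opposite extremes cannot share a ball.} For a center $C$, I will show that no available replacement ball contains both $C+te$ and $C-te$. Take $e=(1,0)$; the other two axes follow from the dihedral symmetry already used in the proof of the proposition on canonical optimal pairs. Let $c=(c_1,c_2)$ be a center with $d_{\EJ}(c,(t,0))\le t$ and $d_{\EJ}(c,(-t,0))\le t$.
\begin{itemize}
\item The $x$-strip inequalities $|c_1-t|\le t$ and $|c_1+t|\le t$ force $c_1=0$.
\item With $c_1=0$, the $(x+y)$-strip inequalities become $|c_2-t|\le t$ and $|c_2+t|\le t$, which force $c_2=0$.
\end{itemize}
So the only ball of radius $t$ containing both $\pm te$ is $B_t(0)$, and its center is the failed resource. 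Translating, the only ball containing both $D\pm te$ is centered at $D$, which has also failed. This is the opposite-extreme argument from the single-replacement impossibility lemma (Lemma~1), specialised to one opposite pair. It uses only two of the three strips.

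\textbf{Step 2: pigeonhole on three balls.} Suppose, for contradiction, that $R$ is a repair with $|R|\le 3$. Pick $P,P'\in R$ whose balls contain $te$ and $-te$, and $Q,Q'\in R$ whose balls contain $D+te$ and $D-te$. By Step~1, $P\ne P'$ and $Q\ne Q'$. Two $2$-element subsets of a set of at most three elements must intersect. Hence some center $z\in R$ has a ball containing a point $\tau te$ and a point $D+\sigma te$, for some signs $\sigma,\tau\in\{\pm1\}$.

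\textbf{Step 3: contradiction with the hypotheses.} By the triangle inequality, exactly as in the candidate-locality theorem,
\begin{equation*}
d_{\EJ}\bigl(D+(\sigma-\tau)te\bigr)\le d_{\EJ}(D+\sigma te,z)+d_{\EJ}(z,\tau te)\le 2t .
\end{equation*}
Since $\sigma-\tau\in\{0,2,-2\}$, this says that one of $D$, $D+2te$, $D-2te$ has EJ norm at most $2t$. That contradicts \eqref{eq:endpoint-rigidity-condition}, so every repair uses at least four balls.

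The only step I expect to need care is Step~1. It must be checked for each of the three axial directions, or reduced to $e=(1,0)$ by a symmetry that preserves both the set $\{0,D\}$ and the form of \eqref{eq:endpoint-rigidity-condition}. The pigeonhole and triangle-inequality steps are routine.
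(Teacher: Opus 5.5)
Your proposal is correct and follows the paper's proof essentially step for step. The same $x$- and $(x+y)$-strip computation shows that only the failed balls $B_t(0)$ and $B_t(D)$ contain an opposite pair $\pm te$ or $D\pm te$. The same cross-difference argument then shows that no ball meets one endpoint of each cell, since $D$ and $D\pm 2te$ all have norm $>2t$ while a ball has diameter $2t$. Your pigeonhole on $|R|\le 3$ is just an explicit form of the paper's ``four endpoint demands need four distinct balls.''

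The closing worry in your last paragraph is unnecessary. Step~1 concerns a single ball, so any distance-preserving automorphism fixing $0$ and sending $e$ to $(1,0)$ suffices, and it need not preserve $\{0,D\}$.
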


\begin{proof}
Consider the two opposite axial endpoints of the first cell,
\begin{equation}
    p^-=-te,\qquad p^+=te,
\end{equation}
and the corresponding two endpoints of the second cell,
\begin{equation}
    q^-=D-te,
    \qquad q^+=D+te .
\end{equation}
A radius-$t$ EJ ball containing both $p^-$ and $p^+$ must be centered at the midpoint $0$.  To see this algebraically, write $e=(1,0)$; the other two axial directions follow by the coordinate permutations preserving $d_{\mathrm{EJ}}$.  If a center $c=(a,b)$ satisfies
\begin{equation}
    d_{\mathrm{EJ}}(c,(t,0))\le t,
    \qquad
    d_{\mathrm{EJ}}(c,(-t,0))\le t,
\end{equation}
then the two $x$-strip inequalities give
\begin{equation}
    |a-t|\le t,
    \qquad |a+t|\le t,
\end{equation}
which force $a=0$.  With $a=0$, the two $(x+y)$-strip inequalities give
\begin{equation}
    |b-t|\le t,
    \qquad |b+t|\le t,
\end{equation}
which force $b=0$.  Hence the only radius-$t$ ball containing the opposite endpoints of the first failed cell is the failed ball centered at $0$, which is unavailable.  The same argument after translation shows that the only radius-$t$ ball containing both $q^-$ and $q^+$ is the failed ball centered at $D$.

It remains to exclude a single replacement ball covering one endpoint from each failed cell.  The three possible cross differences between an endpoint of the first pair and an endpoint of the second pair are
\begin{equation}
    q^\tau-p^\sigma=D+(\tau-\sigma)te,
    \qquad \sigma,\tau\in\{-1,1\},
\end{equation}
so they are $D$, $D+2te$, or $D-2te$.  By \eqref{eq:endpoint-rigidity-condition}, each has EJ distance strictly larger than $2t$.  Since every radius-$t$ ball has diameter at most $2t$, no replacement ball can contain any such cross pair.  Thus the four endpoint demands must be served by four distinct replacement balls.  Therefore every repair has size at least four.
\end{proof}

\begin{theorem}[Endpoint-rigid additive two-fault pairs]
\label{thm:endpoint-rigid-additivity}
If a two-fault displacement $D$ satisfies the endpoint-rigidity condition \eqref{eq:endpoint-rigidity-condition} for at least one axial direction $e\in\{(1,0),(0,1),(1,-1)\}$, then
\begin{equation}
    \rho_{\mathrm{EJ}}^{(2)}(t,\{0,D\})=4.
\end{equation}
\end{theorem}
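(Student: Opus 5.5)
The proof is a sandwich between the two bounds already established. For the upper bound I would invoke the two-fault upper bound theorem. It places the translated canonical pair $R^*(r_i)=\{r_i-(1,0),\,r_i+t(1,0)\}$ at each failed center $r_1=0$ and $r_2=D$. By the two-center cover lemma (Lemma~2) and its translate (Corollary~1), the union covers $B_t(0)\cup B_t(D)$ with at most four centers, so $\rho_{\mathrm{EJ}}^{(2)}(t,\{0,D\})\le4$.

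For the lower bound, fix an axial direction $e$ for which \eqref{eq:endpoint-rigidity-condition} holds and apply Lemma~\ref{lem:axial-endpoint-rigidity} directly. The lemma shows that the four endpoints $\pm te$ and $D\pm te$ must be covered by four pairwise distinct replacement balls. No replacement ball can contain an opposite endpoint pair of one cell, because that would force its center to be the failed center. Nor can it contain a cross pair, since all cross differences have EJ distance exceeding the ball diameter $2t$. Hence every valid repair has size at least four.

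Two small points need checking.
\begin{itemize}
\item The replacement centers must lie in $V\setminus S$, or at least avoid the failed centers $0$ and $D$. The lemma's exclusion of the failed balls uses only that these two centers are unavailable, so the lower bound holds for any admissible $R$.
\item The canonical centers $r_i-(1,0)$ and $r_i+t(1,0)$ might coincide with other resource nodes. This does not affect the counting bound $|R_1\cup R_2|\le4$. It is also harmless for admissibility, since a center lying in $S$ but not failed is simply an active resource already covering its own cell.
\end{itemize}

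The main obstacle is only the consistency of the model, namely that the upper bound theorem and the rigidity lemma use the same notion of valid repair. Given the earlier results, the theorem follows by combining them to get $\rho_{\mathrm{EJ}}^{(2)}(t,\{0,D\})=4$.
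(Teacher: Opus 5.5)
Your proposal is correct and matches the paper's proof: the lower bound comes from the axial endpoint-rigidity lemma, whose four endpoints $\pm te$, $D\pm te$ admit no shared available ball, and the upper bound comes from the independent canonical two-cell repair of the two-fault upper-bound theorem. One correction to your second bullet: the canonical centers $r_i-(1,0)$ and $r_i+(t,0)$ can never lie in $S$, because a resource $s\ne r_i$ within distance $t$ of $r_i$ would lie in $B_t(s)\cap B_t(r_i)$, contradicting perfectness, so admissibility under $R\subseteq V\setminus S$ holds outright and does not need your ``already active'' argument.
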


\begin{proof}
Lemma~\ref{lem:axial-endpoint-rigidity} gives the lower bound four.  The independent two-cell repair of Theorem~6 gives the upper bound four.  Hence the exact value is four.
\end{proof}

In the bounded neighboring audit, this endpoint-rigidity test covers the additive families whose resource-lattice displacement has no narrow diagonal corridor.  More explicitly, for the tested neighboring displacement families, all stable additive classes except $D=\pm(g_1+g_2)$ satisfy \eqref{eq:endpoint-rigidity-condition} for at least one of the three axial directions; the exceptional diagonal-corridor class is handled separately below.

\begin{lemma}[Diagonal-corridor obstruction]
\label{lem:diagonal-corridor-obstruction}
Let $t\ge3$ and let the two failed centers be
\begin{equation}
    r_1=(0,0),\qquad r_2=g_1+g_2=(1,t+1).
\end{equation}
Then three radius-$t$ replacement balls cannot cover
\begin{equation}
    B_t(0,0)\cup B_t(1,t+1).
\end{equation}
The same conclusion holds for the reflected displacement $-(g_1+g_2)=(-1,-t-1)$.
\end{lemma}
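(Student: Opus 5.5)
The plan is a finite witness-set argument. I would choose finitely many vertices of $B_t(0,0)\cup B_t(1,t+1)$ that must be covered, and show that no three admissible radius-$t$ balls can absorb all of them. The natural witnesses are the twelve extreme vertices of the two failed hexagons:
\[
(t,0),\ (0,t),\ (-t,t),\ (-t,0),\ (0,-t),\ (t,-t)
\]
for the first cell, and their translates by $D=(1,t+1)$ for the second, namely
\[
(t+1,t+1),\ (1,2t+1),\ (1-t,2t+1),\ (1-t,t+1),\ (1,1),\ (t+1,1).
\]
If the twelve extremes turn out not to suffice, I would add one or two vertices on the interface between the two cells.

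The basic tool is a feasibility criterion for a witness class $W$ to lie in a single ball $B_t(a,b)$. The three strip conditions \eqref{eq:three-strip-membership} require
\[
a\in[\max x-t,\ \min x+t],\qquad b\in[\max y-t,\ \min y+t],\qquad a+b\in[\max(x+y)-t,\ \min(x+y)+t],
\]
with the maxima and minima taken over $W$. Feasibility is therefore a small system of interval inequalities. By the rigidity computation in Lemma~\ref{lem:axial-endpoint-rigidity}, any class containing an opposite pair of extremes of one cell forces the center to be $0$ or $D$, and both of these have failed. So each replacement ball contains at most three extremes of each cell, one from each opposite pair.

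Next I would tabulate the cross-distances between extremes of different cells. A ball has diameter at most $2t$, so any cross pair at EJ distance more than $2t$ cannot share a ball. For example,
\[
d_{\mathrm{EJ}}\bigl((0,-t),(1,2t+1)\bigr)=3t+2>2t .
\]
Using this table, the aim is to bound the number of extremes any single admissible ball can contain. I expect the bound to be four, or at most a small number concentrated near the common boundary of the two cells. Then $3\cdot 4=12$ forces an exact partition of the twelve extremes into three classes of four.

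The proof then finishes by case analysis on these tight configurations. I would reduce the cases using the point reflection $z\mapsto D-z$, which is an EJ isometry exchanging the two failed cells, together with any dihedral symmetry fixing the pair $\{0,D\}$. For each surviving partition, I would either exhibit a class whose interval system is infeasible, or show that the forced centers leave an explicit interface vertex uncovered, for instance a vertex near $(0,t)$ or $(1,1)$ on the shared side. The hypothesis $t\ge3$ should enter exactly here, through strict inequalities such as $3-t\le0$. The reflected displacement $-(g_1+g_2)$ follows from the isometry $(x,y)\mapsto(-x,-y)$.

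The main obstacle is the middle step: proving the per-ball cap on extremes and classifying the tight partitions. The two cells share a long boundary, so several cross pairs lie within distance $2t$ of each other, and the crude diameter bound alone will not exclude them. I would first try to work directly with the three interval systems, solving for the admissible center region of each candidate class of four extremes. If a tight class turns out to be feasible, I would add the interface witness vertex that its forced center misses.
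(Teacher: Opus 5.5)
Your framework (extremes as witnesses, the interval feasibility test, rigidity of opposite extremes, the cross-distance table) is sound. However, the step you single out as the main obstacle fails as planned: the per-ball cap on your twelve extremes is five, not four. The corridor-facing extremes $(0,t)$ and $(1,1)$ lie within distance $t+2$ of every extreme of the other cell, so one admissible ball can absorb five witnesses. For instance, $B_t((t,0))$, whose center is neither $0$ nor $D$, contains $(t,0)$, $(0,t)$, $(t,-t)$, $(1,1)$ and $(t+1,1)$; all strip differences are at most $t$. Its image $B_t((1-t,t+1))$ under $z\mapsto D-z$ also contains five. So $3\cdot 5=15\ge 12$, and counting gives no contradiction. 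There is no forced partition into three classes of four, and patching ``tight classes of four'' with interface witnesses does not address the configurations that actually occur.

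The missing idea, which is exactly the paper's proof, is to \emph{discard} $(0,t)$ and $(1,1)$. On the remaining ten extremes, the only cross pairs at distance at most $2t$ are:
$(-t,0)$ and $(-t,t)$ with $(1-t,t+1)$;
$(-t,t)$ with $(1-t,2t+1)$;
$(t,-t)$ with $(t+1,1)$;
and $(t,0)$ with $(1-t,t+1)$, $(t+1,1)$, $(t+1,t+1)$.
All other cross pairs are at distance at least $2t+1$. Hence no admissible ball contains four of the ten points:
\begin{itemize}
\item four points from one cell include an opposite pair, which forces a failed center;
\item a $3+1$ or $1+3$ split needs a point with three cross-neighbours, and the only such points, $(1-t,t+1)$ and $(t,0)$, have neighbour sets containing the opposite pair $(\pm t,0)$, respectively $D\pm(t,0)$;
\item a $2+2$ split needs two non-opposite points of one cell with two common cross-neighbours, which never occurs.
\end{itemize}
Then $3\cdot 3=9<10$ finishes the proof with no case analysis. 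The paper's own sentence on mixed sets omits the $3+1$ splits, so include this check.

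If you prefer to keep all twelve points, replace counting by structure. The balls covering $(0,-t)$ and $(1,2t+1)$ are distinct, and each can contain only the corridor extreme of the other cell. By rigidity, the third ball must then contain at least one point from each of the opposite pairs $\{(\pm t,0)\}$, $\{\pm(t,-t)\}$, $\{D\pm(t,0)\}$, $\{D\pm(t,-t)\}$. That is a $2+2$ set, which the same table excludes.
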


\begin{proof}
We prove the positive displacement; the reflected case follows from the isometry $(x,y)\mapsto(-x,-y)$.  Put
\begin{equation}
    D=(1,t+1),\qquad U=B_t(0,0)\cup B_t(D).
\end{equation}
The proof uses a finite boundary-demand set, but the argument is purely algebraic: each demand point is a closed-form function of $t$, and all exclusions are verified by the three strip inequalities defining $d_{\rm EJ}$.

Consider the ten boundary points
\begin{align}
\mathcal P_0={}&\{(-t,0),(-t,t),(0,-t),(t,-t),(t,0)\},\label{eq:diag-demand-P0}\\
\mathcal P_D={}&\{(1-t,t+1),(1-t,2t+1),(1,2t+1),(t+1,1),(t+1,t+1)\},\label{eq:diag-demand-PD}
\end{align}
and let $\mathcal P=\mathcal P_0\cup\mathcal P_D$.  The first five points lie on the boundary of $B_t(0,0)$, and the second five are the corresponding five boundary points of $B_t(D)$.  The omitted extreme points, $(0,t)$ and $(1,1)$, are exactly the two extremes facing the narrow diagonal corridor.  Thus $\mathcal P\subseteq U$.

We claim that every nonfailed radius-$t$ ball contains at most three points of $\mathcal P$.  Let $c=(a,b)$ be a center with $c\ne(0,0)$ and $c\ne D$.  If $B_t(c)$ contains four points of $\mathcal P_0$, then the three strip coordinates of $c$ are forced to be zero.  Indeed, any four of the five points in \eqref{eq:diag-demand-P0} include two points on one pair of opposite supporting sides and one point on a second pair; applying the inequalities
\begin{equation}
    |x-a|\le t,
    \qquad |y-b|\le t,
    \qquad |x+y-a-b|\le t
    \label{eq:diag-basic-strips}
\end{equation}
to those boundary points gives successively $a=0$, $b=0$, and $a+b=0$.  Hence the only ball containing four of $\mathcal P_0$ is the failed ball $B_t(0,0)$, which is unavailable.  Translating the same calculation by $D$ shows that the only ball containing four of $\mathcal P_D$ is the failed ball $B_t(D)$, also unavailable.

It remains to rule out mixed four-point coverage.  A mixed four-point subset contains at least two points from one of the two five-point sets and at least two points from the other.  Up to reversing the two failed cells and reflecting across the diagonal corridor, the possible two-point boundary types are represented by the following three strip separations:
\begin{equation}
\begin{array}{c|c}
\text{boundary type} & \text{forced separation between the two pairs}\\ \hline
x\text{-parallel pair} & |(t+1)-(-t)|=2t+1,\\[1mm]
y\text{-parallel pair} & |(2t+1)-(-t)|=3t+1,\\[1mm]
(x+y)\text{-parallel pair} & |(2t+2)-0|=2t+2.
\end{array}
\label{eq:diag-mixed-separations}
\end{equation}
Each displayed separation is strictly larger than $2t$.  Since any radius-$t$ EJ ball has diameter at most $2t$ in every strip coordinate, no available radius-$t$ ball can contain two boundary points of one failed cell and two boundary points of the other.  Therefore an available radius-$t$ ball contains at most three points of $\mathcal P$.

If three replacement balls covered $U$, then they would cover all ten points of $\mathcal P$.  But each available replacement ball covers at most three of these points, so three balls cover at most nine points of $\mathcal P$, a contradiction.  Hence at least four replacement balls are necessary for $D=g_1+g_2$.  Reflection gives the same lower bound for $D=-(g_1+g_2)$.
\end{proof}

\begin{theorem}[Diagonal-corridor additivity]
\label{thm:diagonal-corridor-additivity}
For $t\ge3$ and
\begin{equation}
    D=\pm(g_1+g_2)=\pm(1,t+1),
\end{equation}
we have
\begin{equation}
    \rho_{\mathrm{EJ}}^{(2)}(t,\{0,D\})=4.
\end{equation}
\end{theorem}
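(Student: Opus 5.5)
The proof follows the same two-sided pattern as Theorem~\ref{thm:endpoint-rigid-additivity}: an upper bound from independent repair and a lower bound from the preceding obstruction lemma.

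For the upper bound, Theorem~6 (the two-fault upper bound) applies to every failed pair, in particular to $F=\{0,D\}$ with $D=\pm(1,t+1)$. Concretely, take the translated canonical pairs $\{-(1,0),\,t(1,0)\}$ and $\{D-(1,0),\,D+t(1,0)\}$. By Lemma~2 and its translate (Corollary~1), their union covers $B_t(0)\cup B_t(D)$ and uses at most four centers. One detail must be checked: the replacement centers must be nonfailed. For $D=(1,t+1)$ the four centers are $(-1,0)$, $(t,0)$, $(0,t+1)$ and $(t+1,t+1)$. Each is distinct from both failed centers and lies at EJ distance at most $2t+1$ from a resource, so none coincides with a resource point, since resources are $2t+1$-separated lattice points generated by $g_1,g_2$. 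If any doubt remains, the paper's convention in Theorem~6 is simply to invoke it as stated.

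For the lower bound, Lemma~\ref{lem:diagonal-corridor-obstruction} shows that for $t\ge3$ no three radius-$t$ replacement balls cover $U=B_t(0)\cup B_t(D)$, for either sign of $D$. We also need to rule out covers with fewer balls. Any cover by one or two balls can be padded with arbitrary extra nonfailed balls to a cover by three balls, because extra balls do not destroy coverage. So the nonexistence of a three-ball cover implies the nonexistence of any cover of size at most three, and hence $\rho^{(2)}_{\mathrm{EJ}}(t,\{0,D\})\ge4$.

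Combining the two bounds gives the exact value $4$. The reflected case $D=-(g_1+g_2)$ is handled either directly by the reflected clause of the lemma or through the isometry $(x,y)\mapsto(-x,-y)$, which maps the failed pair $\{0,D\}$ to $\{0,-D\}$ and preserves the repair number. The main obstacle is not in this theorem: all the substance lies in Lemma~\ref{lem:diagonal-corridor-obstruction}. The only point requiring care here is the monotonicity argument that lifts "no three-ball cover" to "no cover with at most three balls", which the padding argument settles.
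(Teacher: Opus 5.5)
Your proposal is correct and is essentially the paper's proof: the upper bound comes from independent two-cell repair (Theorem~6), the lower bound comes from Lemma~\ref{lem:diagonal-corridor-obstruction}, and your padding remark is a valid way to pass from ``no three-ball cover'' to ``no cover of size at most three''. One small fix concerns your side check that the four centers are nonfailed. Lying within distance $2t+1$ of a resource does not exclude being a resource, and in these coordinates the failed centers themselves are not $(2t+1)$-separated, since $d_{\mathrm{EJ}}(0,D)=t+2$. Justify it directly instead: $(t,0)=ag_1+bg_2$ forces $a(t^2+t+1)=t$, and $(-1,0)=ag_1+bg_2$ forces $a(t^2+t+1)=-1$, so none of $(-1,0),(t,0),(0,t+1),(t+1,t+1)$ lies in the resource lattice.
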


\begin{proof}
The lower bound four follows from Lemma~\ref{lem:diagonal-corridor-obstruction}.  The upper bound four follows from independent two-cell repair.  Hence the exact value is four.
\end{proof}

\begin{remark}[Scope of the two-fault classification]
The two-fault section uses four symbolic mechanisms: explicit three-ball covers for the stable non-additive families, two-ball impossibility for those same families, endpoint rigidity for additive pairs with separated opposite axial endpoints, and the diagonal-corridor obstruction for $D=\pm(g_1+g_2)$.  These theorems do not assert that every neighboring displacement is captured by a single closed formula; the bounded audit records the remaining tested neighboring cases and the small-radius exceptions.
\end{remark}

\begin{lemma}[No shared candidate for distant failed cells]
\label{lem:no-shared-distant-cells}
Let $r_i$ and $r_j$ be two failed resource centers.  If
\begin{equation}
    d_{\mathrm{EJ}}(r_i,r_j)>4t,
\end{equation}
then no radius-$t$ replacement ball can intersect both $B_t(r_i)$ and $B_t(r_j)$.
\end{lemma}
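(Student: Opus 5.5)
The argument is a contradiction built from three applications of the triangle inequality for $d_{\EJ}$, in the same way the candidate-locality theorem was proved. Suppose some replacement center $z$ has a ball $B_t(z)$ that meets both $B_t(r_i)$ and $B_t(r_j)$. Pick witnesses $u\in B_t(z)\cap B_t(r_i)$ and $v\in B_t(z)\cap B_t(r_j)$. Then
\begin{equation*}
d_{\EJ}(r_i,r_j)\le d_{\EJ}(r_i,u)+d_{\EJ}(u,z)+d_{\EJ}(z,v)+d_{\EJ}(v,r_j)\le t+t+t+t=4t,
\end{equation*}
which contradicts the hypothesis $d_{\EJ}(r_i,r_j)>4t$. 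Hence no such $z$ exists.

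The same bound can be reached in two steps by applying the candidate-locality theorem to each failed cell. A center $z$ whose ball meets $B_t(r_i)$ must satisfy $d_{\EJ}(z,r_i)\le 2t$, and likewise $d_{\EJ}(z,r_j)\le 2t$. Adding these gives $d_{\EJ}(r_i,r_j)\le 4t$, the same contradiction. I would present this second version, since it reuses Theorem~2 directly and matches the ``candidate-intersection locality'' description in Table~\ref{tab:main-results}.

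The only point needing care is that $d_{\EJ}$ really is a metric, so the triangle inequality holds. It is the graph distance on the triangular lattice, so this is immediate, and it also follows from the closed form \eqref{eq:ej-distance-translated} because each of $|x|$, $|y|$, $|x+y|$ is subadditive. Nothing about the lattice of resources or the perfectness of $S$ is used. The conclusion is a purely metric statement about radius-$t$ balls, so it also applies to candidates that happen to lie in $S$.
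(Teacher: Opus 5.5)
Your proposal is correct and matches the paper: your first version is exactly the paper's proof, the four-term triangle-inequality chain $d_{\mathrm{EJ}}(r_i,r_j)\le d_{\mathrm{EJ}}(r_i,u)+d_{\mathrm{EJ}}(u,z)+d_{\mathrm{EJ}}(z,v)+d_{\mathrm{EJ}}(v,r_j)\le 4t$ through witnesses $u$ and $v$. The second version you prefer is the same chain, packaged as two applications of the contrapositive of the candidate-locality theorem, so either presentation is fine.
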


\begin{proof}
Suppose that a replacement center $z$ has
\begin{equation}
    B_t(z)\cap B_t(r_i)\ne\emptyset,
    \qquad
    B_t(z)\cap B_t(r_j)\ne\emptyset.
\end{equation}
Choose vertices $u\in B_t(z)\cap B_t(r_i)$ and $v\in B_t(z)\cap B_t(r_j)$.  Then
\begin{align}
    d_{\mathrm{EJ}}(r_i,r_j)
    &\le d_{\mathrm{EJ}}(r_i,u)+d_{\mathrm{EJ}}(u,z)
       +d_{\mathrm{EJ}}(z,v)+d_{\mathrm{EJ}}(v,r_j) \\
    &\le 4t,
\end{align}
contradicting the hypothesis.
\end{proof}

\begin{corollary}[Separated two-fault repair]
If two failed resources $r_1,r_2$ satisfy $d_{\mathrm{EJ}}(r_1,r_2)>4t$, then
\begin{equation}
    \rho_{\mathrm{EJ}}^{(2)}(t,\{r_1,r_2\})=4.
\end{equation}
\end{corollary}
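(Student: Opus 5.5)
The plan is to sandwich $\rho_{\mathrm{EJ}}^{(2)}(t,\{r_1,r_2\})$ between two bounds of four.

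\textbf{Upper bound.} Theorem~6 (the two-fault upper bound) gives $\rho_{\mathrm{EJ}}^{(2)}(t,\{r_1,r_2\})\le 4$ for every failed pair. Its translated canonical repairs $R_1$ and $R_2$ use at most four centers. The separation hypothesis is not needed here.

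\textbf{Lower bound.} Let $R$ be any valid repair of $U(F)=B_t(r_1)\cup B_t(r_2)$. I would split $R$ according to which failed cell each ball actually meets:
\begin{equation}
    R_i=\{z\in R:\ B_t(z)\cap B_t(r_i)\ne\emptyset\},\qquad i=1,2.
\end{equation}
Lemma~\ref{lem:no-shared-distant-cells} applies because $d_{\mathrm{EJ}}(r_1,r_2)>4t$, so no replacement ball meets both cells. Hence $R_1\cap R_2=\emptyset$.

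Every vertex of $B_t(r_i)$ is covered by some ball of $R$, and that ball meets $B_t(r_i)$, so it lies in $R_i$. Thus $R_i$ is a valid one-fault repair of $B_t(r_i)$.

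By translation invariance, Theorem~3 (the exact one-fault repair number) gives $|R_i|\ge\rho_{\mathrm{EJ}}(t)=2$. Concretely, Lemma~1 excludes a single nonfailed ball, and the empty set cannot cover a nonempty cell.

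One small point needs checking: the one-fault lemma requires replacement centers different from the failed resource. Replacements are drawn from $V\setminus S$, so they avoid $r_i$. Any center equal to the other failed resource $r_j$ is unavailable as well, so the same exclusion applies.

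Combining the two pieces gives $|R|\ge|R_1|+|R_2|\ge 4$.

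\textbf{Expected difficulty.} There is no genuine obstacle. The only care needed is the disjointness step, which Lemma~\ref{lem:no-shared-distant-cells} supplies directly, and the observation that restricting $R$ to $R_i$ still covers the $i$th cell. Together with the upper bound, this gives exactly four.
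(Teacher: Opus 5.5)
Your proof is correct and is essentially the paper's argument. The upper bound is Theorem~6, and the lower bound combines Lemma~\ref{lem:no-shared-distant-cells} (no replacement ball meets both cells) with the one-fault bound $\rho_{\mathrm{EJ}}(t)=2$ applied to each cell; your explicit split of $R$ into disjoint subfamilies $R_1,R_2$, each a valid one-fault repair of its own cell, just spells out the counting step the paper states in one line.
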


\begin{proof}
The upper bound is Theorem~6.  For the lower bound, Lemma~\ref{lem:no-shared-distant-cells} shows that every replacement ball can intersect at most one failed cell.  Each failed cell requires at least two replacements by Theorem~3.  Hence the two failed cells require at least four replacement balls.
\end{proof}

\section{Multi-Fault Repair Bounds and Dense-Cluster Subadditivity}

Let
\begin{equation}
    F=\{r_1,r_2,\ldots,r_q\}\subseteq S
\end{equation}
be the set of failed resources and let
\begin{equation}
    U(F)=\bigcup_{r\in F} B_t(r)
    \label{eq:failed-region}
\end{equation}
be the failed region.  A repair set $R$ is valid if
\begin{equation}
    U(F)\subseteq \bigcup_{z\in R}B_t(z).
\end{equation}
For a fixed failed set $F$, write $\rho_{\mathrm{EJ}}^{(q)}(t,F)$ for the minimum number of replacements needed to cover $U(F)$.

\begin{theorem}[Independent $q$-fault upper bound]
For every failed set $F$ with $|F|=q$,
\begin{equation}
    \rho_{\mathrm{EJ}}^{(q)}(t,F)\le 2q.
\end{equation}
\end{theorem}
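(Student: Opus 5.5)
The plan is to mirror the two-fault upper bound (Theorem~6) using the translated canonical repair. For each failed center $r_i\in F$, I will take the translated pair
\begin{equation*}
    R_i=R^*(r_i)=\{r_i-(1,0),\ r_i+t(1,0)\}
\end{equation*}
from \eqref{eq:translated-algorithm}, and set $R=\bigcup_{i=1}^q R_i$. The first step is to check that $R_i$ covers $B_t(r_i)$. This follows by translating Lemma~2: the map $u\mapsto u+r_i$ is an isometry of $d_{\EJ}$ by \eqref{eq:ej-distance-translated}, so it sends $B_t(0)$ to $B_t(r_i)$ and the balls centered at $(-1,0)$ and $(t,0)$ to those centered at $r_i-(1,0)$ and $r_i+t(1,0)$.

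The second step is the union argument. The inclusions $B_t(r_i)\subseteq \bigcup_{z\in R_i}B_t(z)\subseteq\bigcup_{z\in R}B_t(z)$ hold for each $i$, so taking the union over $i$ gives $U(F)\subseteq\bigcup_{z\in R}B_t(z)$. Since $|R|\le\sum_i|R_i|=2q$, this gives the bound $\rho^{(q)}_{\EJ}(t,F)\le 2q$. The inequality $|R|\le 2q$ rather than equality simply absorbs any coincidences among the $R_i$.

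The one point needing care is admissibility, since the model requires replacements to be nonfailed vertices. If repair sets must lie in $V\setminus S$, I must show $r_i-(1,0)$ and $r_i+t(1,0)$ are not resource nodes. Each is at EJ distance $1$ or $t$ from $r_i$, hence at distance at most $t$. Perfectness says every vertex lies within $t$ of exactly one resource, and that resource is $r_i$ here. A second resource at distance $0$ from one of these vertices would cover it twice, which is a contradiction. So both vertices lie in $B_t(r_i)\setminus\{r_i\}$ and not in $S$, and in particular they are not failed centers.

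I expect this admissibility check to be the only real obstacle, and it reduces directly to the perfect-placement definition. If the paper's convention only excludes failed centers, the same argument covers it a fortiori.
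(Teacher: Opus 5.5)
Your proposal is correct and is the paper's argument: you take the translated canonical pair $R^*(r)=\{r-(1,0),\,r+t(1,0)\}$ for each failed center, note that the union covers $U(F)$, and bound $|R|\le 2q$. Justifying the covering step by translating Lemma~2 is the right reference (the paper cites Corollary~1, which is the overlap lower bound, not a covering statement), and your perfectness check that these centers lie in $V\setminus S$ is a correct detail the paper leaves implicit.
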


\begin{proof}
For each failed resource $r\in F$, choose one translated canonical repair pair from \eqref{eq:translated-algorithm}, for example
\begin{equation}
    R_r=\{r-(1,0),\ r+t(1,0)\}.
\end{equation}
By Corollary~1, $R_r$ covers $B_t(r)$.  Therefore
\begin{equation}
    R=\bigcup_{r\in F}R_r
\end{equation}
covers $\bigcup_{r\in F}B_t(r)=U(F)$.  Since $|R|\le\sum_{r\in F}|R_r|=2q$, the claimed upper bound follows.
\end{proof}

\begin{theorem}[Exact additivity for separated failures]
Let $F=\{r_1,\ldots,r_q\}$ be a failed resource set satisfying
\begin{equation}
    d_{\mathrm{EJ}}(r_i,r_j)>4t
    \qquad\text{for all }i\ne j.
\end{equation}
Then
\begin{equation}
    \rho_{\mathrm{EJ}}^{(q)}(t,F)=2q.
\end{equation}
\end{theorem}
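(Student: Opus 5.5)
The proof has two halves: an upper bound, which is already available, and a lower bound obtained from a counting argument over an assignment of balls to failed cells.

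\emph{Upper bound.} The independent $q$-fault upper bound theorem gives $\rho_{\mathrm{EJ}}^{(q)}(t,F)\le 2q$ for every failed set, so nothing new is needed here.

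\emph{Lower bound.} Let $R$ be any valid repair of $U(F)$. For each $i$, define
\begin{equation}
    R_i=\{z\in R: B_t(z)\cap B_t(r_i)\neq\emptyset\}.
\end{equation}
The hypothesis gives $d_{\mathrm{EJ}}(r_i,r_j)>4t$ for all $i\ne j$. Lemma~\ref{lem:no-shared-distant-cells} then says that no replacement ball meets two distinct failed cells. Hence the sets $R_1,\ldots,R_q$ are pairwise disjoint subsets of $R$, and
\begin{equation}
    |R|\ge\sum_{i=1}^q|R_i|.
\end{equation}

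Next I would show that each $R_i$ is itself a valid one-fault repair of $B_t(r_i)$. Every vertex of $B_t(r_i)\subseteq U(F)$ is covered by some $B_t(z)$ with $z\in R$. Any such $z$ has $B_t(z)$ meeting $B_t(r_i)$, so $z\in R_i$. Therefore
\begin{equation}
    B_t(r_i)\subseteq\bigcup_{z\in R_i}B_t(z).
\end{equation}

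One check is needed before invoking the one-fault theorem: the elements of $R_i$ must be admissible replacements in the one-fault sense. In particular $r_i$ itself is unavailable, since it has failed and replacements are drawn from $V\setminus S$.

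The one-fault analysis, that is, the impossibility lemma behind $\rho_{\mathrm{EJ}}(t)=2$ together with translation invariance, then gives $|R_i|\ge 2$. Summing over $i$ yields $|R|\ge 2q$, and combined with the upper bound this proves $\rho_{\mathrm{EJ}}^{(q)}(t,F)=2q$.

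The only point needing care is the admissibility check. I would make explicit that the lower bound in Lemma~1 only uses that $q\neq r_i$, since the center $r_i$ is excluded, and the repair model excludes every point of $S$. Everything else is a direct generalization of the separated two-fault corollary, with the pairwise $4t$-separation used for each pair in the disjointness step.
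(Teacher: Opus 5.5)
Your proof is correct and follows the paper's route. The upper bound comes from the independent $2q$ repair theorem. The lower bound combines Lemma~\ref{lem:no-shared-distant-cells} (each replacement ball meets at most one failed cell) with the one-fault bound $\rho_{\mathrm{EJ}}(t)=2$, applied cell by cell. Your restriction to the disjoint sets $R_i$ and your check that $r_i\notin R_i$ spell out what the paper's one-line appeal to Theorem~3 leaves implicit.
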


\begin{proof}
The upper bound is the independent repair theorem.  For the lower bound, Lemma~\ref{lem:no-shared-distant-cells} shows that every replacement ball can intersect at most one failed cell.  Each individual failed cell requires at least two replacements by Theorem~3.  Hence the $q$ failed cells require at least $2q$ replacement balls in total.  The lower and upper bounds agree.
\end{proof}

The separated theorem describes the additive regime.  The clustered regime is more interesting.  
The clustered regime also contains small closed-form families whose optima can be proved directly.  The next theorem gives a four-fault algebraic subadditivity result before the six-fault saving theorem and shows that the phenomenon is not confined to a single large example.

\begin{theorem}[Exact dense four-fault cluster]
Let
\begin{equation}
\begin{split}
C_t=\{&(-t,1),\ (-1,-t-1),\ (0,0),\ (t-1,-t-2)\}.
\end{split}
\label{eq:four-fault-cluster}
\end{equation}
For every $t\ge3$,
\begin{equation}
    \rho_{\mathrm{EJ}}^{(4)}(t,C_t)=4.
\end{equation}
Thus four dense failed resource cells can be repaired with four replacements rather than the independent bound $2q=8$.
\end{theorem}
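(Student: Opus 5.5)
The lower bound comes from cardinality, and the upper bound from an explicit four-center construction checked strip by strip, as in Lemmas~6 and~7.

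\textbf{Lower bound.} The four centers of $C_t$ are resource-lattice points, as the following identities show:
\begin{equation}
(-t,1)=g_2,\qquad (-1,-t-1)=-(g_1+g_2),\qquad (t-1,-t-2)=-(g_1+2g_2).
\end{equation}
Because $S$ is a perfect placement, the four failed cells are pairwise disjoint. Hence
\begin{equation}
|U(C_t)|=4(3t^2+3t+1).
\end{equation}
Every replacement ball has exactly $3t^2+3t+1$ vertices by \eqref{eq:ej-ball-size}. So three balls cover at most $3(3t^2+3t+1)<|U(C_t)|$ vertices, and $\rho^{(4)}_{\mathrm{EJ}}(t,C_t)\ge4$. (If one insists on a geometric argument, the extreme-vertex reasoning of Lemma~8 could be reused, but counting already suffices.)

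\textbf{Upper bound.} I would exhibit four centers $z_1,\dots,z_4$, each an affine function of $t$, such that every failed cell is covered by exactly two of them. Each $z_j$ should then serve two adjacent failed cells, in the same way the shared center $(-t,0)$ serves both cells in Lemma~6. The natural combinatorial model is a $4$-cycle: cells $c_1,c_2,c_3,c_4$ and balls $z_{12},z_{23},z_{34},z_{41}$, with $z_{ij}$ centered near the junction of $c_i$ and $c_j$.

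The earlier lemmas supply local templates. For the pair $\{0,g_2\}$, Lemma~6 gives $\{(-t,0),(1,0),(-t,3)\}$. For the pair $\{0,-(g_1+2g_2)\}$, the reflection of Lemma~7 gives $\{(t,-t-3),(0,1),(0,-3)\}$. The remaining adjacencies involving $-(g_1+g_2)$ are related to these by the axial automorphism $T$ of Lemma~8.

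To fix the exact centers, I would first solve the cover problem exactly for $t=3,4,5$ by finite enumeration, using Candidate locality to restrict candidates to within distance $2t$ of some failed center. I would then read off the affine dependence on $t$. The constraint $t\ge3$ in the statement, matching Lemma~7, suggests that some centers are offset by the constant $3$, as in that lemma.

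\textbf{Verification and main obstacle.} With the centers fixed, I would verify coverage one failed cell at a time, exactly as in Lemmas~6 and~7. For each cell $B_t(r)$, write its three strip inequalities and split it along a single axial transition, say $\{\text{coord}\le k\}$ versus $\{\text{coord}\ge k+1\}$ in one or two strip coordinates. Then check that each part satisfies the three shifted strip inequalities of the assigned replacement center. The boundary cases (a coordinate equal to its extreme value) use $t\ge3$, just as $3-t\le0$ was used in Lemma~7.

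The main obstacle is finding a consistent cyclic assignment in which every center works for both of its cells simultaneously. The pairwise templates from Lemmas~6 and~7 do not automatically glue, because each uses its own interface positions. If the $4$-cycle pattern fails, I would instead try a star pattern: one central ball near the origin serving three cells, plus three outer balls. In that case the origin's cell would need only its portion not covered by a neighbor's ball.
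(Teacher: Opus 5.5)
There is a genuine gap in each half: the lower-bound count rests on a false premise, and the upper bound is a search plan with no centers.

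\textbf{Lower bound.} The counting step needs the four failed cells to be pairwise disjoint. For the coordinates fixed in the statement they are not. Indeed $d_{\mathrm{EJ}}((0,0),(-t,1))=\max\{t,1,t-1\}=t$, so the center $(-t,1)$ itself lies in $B_t(0)$, and $|B_t(0)\cap B_t(-t,1)|=t^2+3t$. The same holds for the cells at $(-1,-t-1)$ and $(t-1,-t-2)$, whose centers differ by $(t,-1)$.

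Your identities $(-t,1)=g_2$, etc., are correct, but they cannot yield disjointness. The lattice spanned by $g_1=(t+1,t)$ and $g_2=(-t,1)$ has index $t^2+t+1<3t^2+3t+1$, so radius-$t$ balls centred at its points cannot be pairwise disjoint. The paper's own results also presuppose overlap: $C_t\subset F_t$, and five balls cover the six cells of $F_t$, which is impossible for six disjoint cells of size $3t^2+3t+1$.

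Concretely, $|U(C_3)|=98<111=3\cdot 37$. In general $|U(C_t)|$ grows like $8t^2$, while three balls have $9t^2+O(t)$ vertices. So cardinality only gives $\rho^{(4)}_{\mathrm{EJ}}(t,C_t)\ge 3$.

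The missing idea is a strict $2t$-packing, which is what the paper uses. The points $(-2t,1)$, $(-1,-2t-1)$, $(0,t)$, $(2t-1,-t-2)$ lie in the four failed cells respectively. Their pairwise EJ distances are $2t+2$, $3t-1$, $4t-1$, $3t+2$, $3t-1$, $2t+2$, all larger than $2t$. Since a radius-$t$ ball has diameter at most $2t$, no ball contains two of them. Hence four balls are needed, with no disjointness assumption.

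\textbf{Upper bound.} What you give is a search strategy, not a proof. No centers are exhibited, yet the substance of this half is exactly an explicit affine family together with the symbolic strip check for every $t\ge3$. The paper takes $S_t=\{(-t-1,2),(-1,-t-2),(1,0),(t,-t-2)\}$.

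Its incidence pattern is not the $4$-cycle you anticipate. $B_t(-t,1)$ and $B_t(0,0)$ each genuinely need the three balls $S_1,S_2,S_3$. $B_t(-1,-t-1)$ needs $S_1,S_2,S_4$. Only $B_t(t-1,-t-2)$ is split between two balls, $S_2$ and $S_4$.

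Once these centers are fixed, your verification scheme does go through. For instance, $B_t(0,0)\setminus B_t(1,0)$ consists of the points of $B_t(0,0)$ with $x=-t$ or $x+y=-t$. All of these lie in $B_t(-t-1,2)$ except $(-1,1-t)$ and $(0,-t)$. Those two lie in $B_t(-1,-t-2)$ exactly when $t\ge 3$, which is where the hypothesis $t\ge3$ is used.
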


\begin{proof}
First consider the four replacement centers
\begin{equation}
S_t=\{(-t-1,2),\ (-1,-t-2),\ (1,0),\ (t,-t-2)\}.
\label{eq:four-fault-repair}
\end{equation}
Using the three-strip representation of an EJ ball, direct interval substitution gives the following containment relations:
\begin{align}
B_t(-t,1)&\subseteq B_t(S_1)\cup B_t(S_2)\cup B_t(S_3),\nonumber\\
B_t(-1,-t-1)&\subseteq B_t(S_1)\cup B_t(S_2)\cup B_t(S_4),\nonumber\\
B_t(0,0)&\subseteq B_t(S_1)\cup B_t(S_2)\cup B_t(S_3),\nonumber\\
B_t(t-1,-t-2)&\subseteq B_t(S_2)\cup B_t(S_4),
\label{eq:four-cover-audit}
\end{align}
where $S_1,\ldots,S_4$ are listed in the order of \eqref{eq:four-fault-repair}.  For example, a point in $B_t(t-1,-t-2)$ that is not contained in $B_t(S_2)$ violates exactly the left axial strip relative to $S_2$; the same strip inequality places it in $B_t(S_4)$.  For the first containment in \eqref{eq:four-cover-audit}, the left extreme point $p=(-2t,1)$ of $B_t(-t,1)$ is covered by $S_1=(-t-1,2)$ because
\begin{equation}
 d_{\rm EJ}(p,S_1)=\max\{t-1,1,t\}=t.
\end{equation}
The opposite transition boundary of this same failed cell is handled by the $S_2$ and $S_3$ strips: substituting a boundary point $(-t+u,1+v)$ with $|u|,|v|,|u+v|\le t$ shows that failure of the $S_1$ right strip leaves the point in the $S_3$ interval, while failure of the $S_1$ lower diagonal strip leaves it in the $S_2$ interval.  These are the same two shifted inequalities used in the displayed containment, and they cover the remaining boundary arcs.  The other containments are analogous three-strip interval calculations.  Hence $S_t$ covers the four failed cells and $\rho_{\mathrm{EJ}}^{(4)}(t,C_t)\le4$.

For the lower bound, define the four-point demand packing
\begin{equation}
Z_t=\{(-2t,1),\ (-1,-2t-1),\ (0,t),\ (2t-1,-t-2)\}.
\label{eq:four-demand-set}
\end{equation}
These points lie in the four failed cells of \eqref{eq:four-fault-cluster}, respectively.  Their pairwise EJ distances are
\begin{equation}
\begin{array}{c|ccc}
 & z_2&z_3&z_4\\
\hline
z_1&2t+2&3t-1&4t-1\\
z_2&     &3t+2&3t-1\\
z_3&     &    &2t+2
\end{array}
\end{equation}
all of which are strictly larger than $2t$ for $t\ge3$.  Since every radius-$t$ EJ ball has diameter at most $2t$, one replacement ball can meet at most one point of this strict $2t$-packing.  Four distinct replacement balls are necessary, so $\rho_{\mathrm{EJ}}^{(4)}(t,C_t)\ge4$, proving the theorem.
\end{proof}

The exact optimizer repeatedly returned a six-fault pattern whose coordinates are independent of $t$ when written in the resource-lattice basis $g_1,g_2$.  In axial coordinates the family is
\begin{equation}
\begin{split}
F_t=\{&(-t,1),\ (-1,-t-1),\ (0,0),\\
      &(t-1,-t-2),\ (t,-1),\ (2t,-2)\}.
\end{split}
\label{eq:six-fault-cluster}
\end{equation}
Equivalently, in $(m,n)$ resource-lattice coordinates with $D=mg_1+ng_2$, the same pattern is
\begin{equation}
    \{(0,1),\ (-1,-1),\ (0,0),\ (-1,-2),\ (0,-1),\ (0,-2)\}.
\end{equation}
Thus the shape is a fixed dense two-column cluster in the resource lattice.  Figure~\ref{fig:ej-six-fault-cluster} illustrates this cluster, the five replacement balls, and the packing points for $t=3$.

\begin{figure}[H]
\centering
\includegraphics[width=0.65\linewidth]{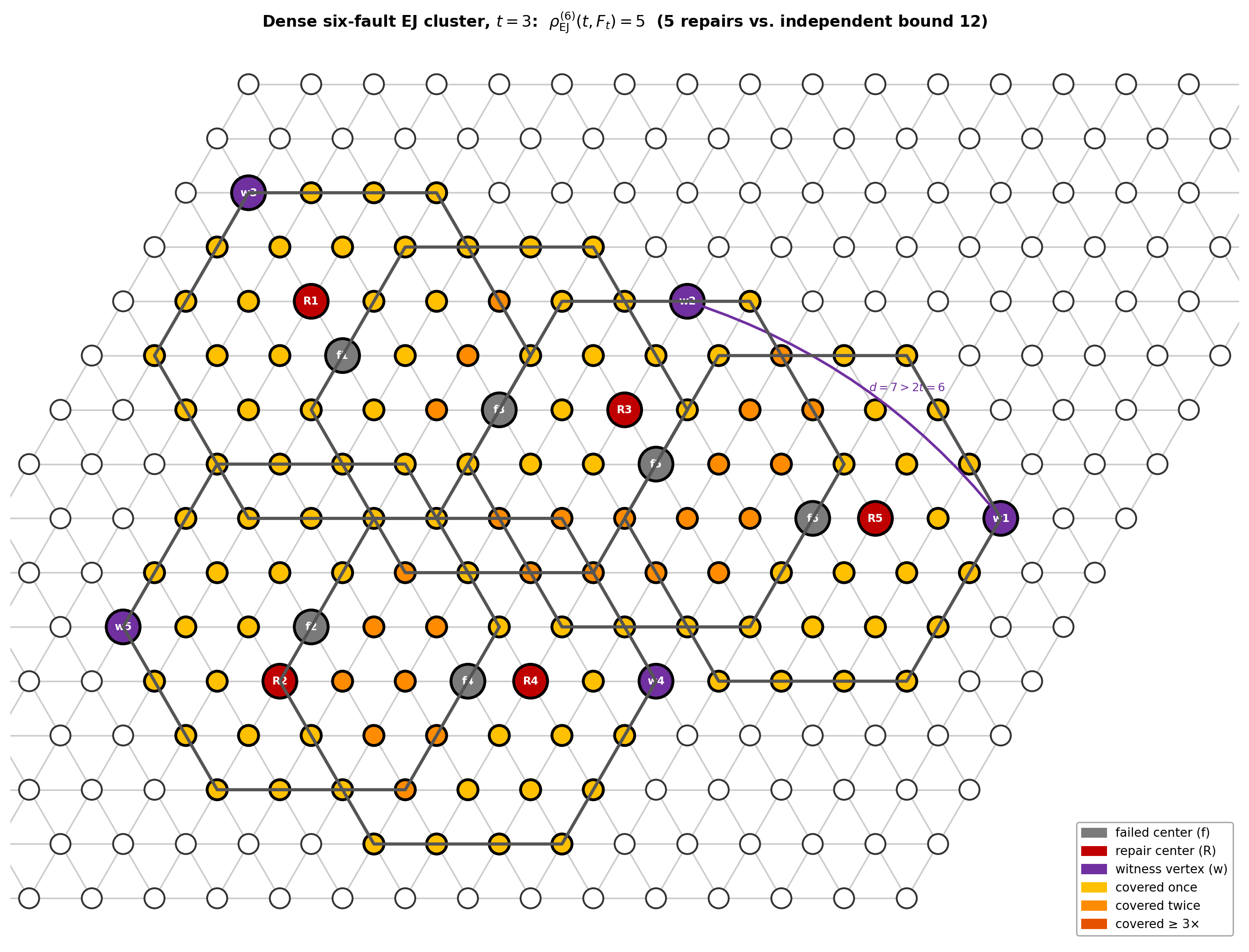}
\caption{Dense six-fault EJ cluster for $t=3$.  The six failed resource cells in $F_t$ are repaired by the five centers in $R_t$, while the marked packing points are pairwise farther than $2t$ and therefore give the demand-packing lower bound.  This is the proof-supporting visualization for the exact subadditivity theorem $\rho_{\mathrm{EJ}}^{(6)}(t,F_t)=5$.}
\label{fig:ej-six-fault-cluster}
\end{figure}

\begin{lemma}[Five-ball cover of the six-fault cluster]
For every $t\ge3$, the failed region generated by $F_t$ in \eqref{eq:six-fault-cluster} is covered by the five replacement centers
\begin{equation}
\begin{split}
R_t=\{&(-t-1,2),\ (-1,-t-2),\ (2,t-3),\\
      &(t,-t-2),\ (2t+1,-2)\}.
\end{split}
\label{eq:six-fault-five-repair}
\end{equation}
Therefore $\rho_{\mathrm{EJ}}^{(6)}(t,F_t)\le5$.
\end{lemma}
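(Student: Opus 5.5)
We prove the five-ball cover cell by cell, using the three-strip membership test \eqref{eq:three-strip-membership} as in Lemmas~6 and~7 and in the proof of the four-fault cluster theorem. Write $R_t=\{T_1,\dots,T_5\}$ in the order of \eqref{eq:six-fault-five-repair}. Note that $T_1=(-t-1,2)$, $T_2=(-1,-t-2)$ and $T_4=(t,-t-2)$ coincide with $S_1,S_2,S_4$ of \eqref{eq:four-fault-repair}. The first four failed centers of $F_t$ are exactly $C_t$.

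\textbf{Step 1: reuse the four-fault containments.} The plan is to reuse the containments \eqref{eq:four-cover-audit} wherever they do not depend on $S_3=(1,0)$. These are
\begin{align*}
B_t(-1,-t-1)&\subseteq B_t(T_1)\cup B_t(T_2)\cup B_t(T_4),\\
B_t(t-1,-t-2)&\subseteq B_t(T_2)\cup B_t(T_4).
\end{align*}
I would re-derive them explicitly by the ``violated strip'' argument. For a point of the failed cell not in the primary ball, identify the unique strip inequality it violates, which lies on one boundary layer of the cell. Then substitute the cell's other strip bounds to place the point in the secondary ball's intervals.

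\textbf{Step 2: redo the cells that used $S_3$.} The cells $B_t(-t,1)$ and $B_t(0,0)$ previously used $S_3=(1,0)$, which is now replaced by $T_3=(2,t-3)$. For each of these cells I would first locate the largest axial ``corner region'' covered by $T_1$ or $T_2$. This region is cut out by one or two strip inequalities, for instance $x\le -t$ together with $x+y\le\cdot$ for $T_1$ relative to $(-t,1)$. The uncovered remainder is then a union of at most two thin boundary bands of the failed cell. Each band should fit inside $B_t(T_3)$ or $B_t(T_5)$ once the cell's own bounds are substituted. The lower bound $t\ge3$ should enter through comparisons of the form $1-t\ge 3-2t$, exactly as in Lemma~7.

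\textbf{Step 3: the two new cells.} For $B_t(t,-1)$ and $B_t(2t,-2)$, the cell centers are offset from $T_5=(2t+1,-2)$ by $(t+1,-1)$ and $(1,0)$, respectively. For $B_t(2t,-2)$ the offset is $(1,0)$, so by Lemma~2 and Proposition~2, $T_5$ covers everything except the layers $x=2t-t$ and $x+y=2t-2-t$ (one step $-(1,0)$). Those layers must then be shown to lie in $B_t(T_4)\cup B_t(T_3)$. For $B_t(t,-1)$, I expect a three-ball split among $T_3$ (upper part), $T_4$ (lower part) and $T_5$ (right part). Each region is defined by one strip threshold, and membership in the assigned ball is checked by three interval inclusions.

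\textbf{Main obstacle.} I expect the main difficulty to be the cells needing three balls, namely $B_t(0,0)$ and $B_t(t,-1)$. There the remainder after the primary ball is not a single strip, so the case split must be chosen so that every subcase reduces to exactly one violated inequality. I would first try splitting by which of the three strips of the nearest center fails, since in each subcase the failed inequality pins one coordinate to a boundary value. I would also check small $t$, say $t=3,4$, against the split before writing it symbolically.
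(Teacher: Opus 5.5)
Your method, checking one containment per failed cell by substituting the three strip inequalities, is the paper's method; its proof just records the six containments in \eqref{eq:six-cover-audit} and traces one boundary vertex. Your Steps~1 and~3 agree with those containments. Step~2, however, would fail as planned, for two reasons.

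\textbf{$T_5$ is useless here.} $B_t(T_5)$ has $x$-strip $[t+1,3t+1]$. Every vertex of $B_t(0,0)$ has $x\le t$, and every vertex of $B_t(-t,1)$ has $x\le 0$. So $T_5$ is disjoint from both cells and cannot absorb any band there.

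\textbf{$T_1,T_2,T_3$ do not cover the rest.} For $B_t(0,0)$, the part left after $T_1\cup T_2$ is roughly half the hexagon, not thin bands, and $T_3$ does not finish it. The vertices $(x,-2-x)$ with $0\le x\le t-2$ lie in $B_t(0,0)$, yet each one violates:
\begin{itemize}
\item the $x$-strip of $T_1$, since $x\ge 0>-1$;
\item the $(x+y)$-strip of $T_2$, since $-2>-3$;
\item the $(x+y)$-strip of $T_3$, since $-2<-1$.
\end{itemize}
The vertex $(0,-2)$ among them also lies in $B_t(-t,1)$ for $t\ge 3$. The only replacement covering these vertices is $T_4=(t,-t-2)$. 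So both $B_t(-t,1)$ and $B_t(0,0)$ are four-ball cells needing $T_1,T_2,T_3,T_4$, exactly as the first and third lines of the paper's audit state. Your ``main obstacle'' paragraph therefore misplaces the difficulty: $B_t(0,0)$ is not a three-ball cell, and $B_t(-t,1)$ is not closed by $T_3$ alone.

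\textbf{The fix for $B_t(0,0)$.} Take $(x,y)\in B_t(0,0)$ with $t\ge 3$. Inside this cell the replacement balls reduce to:
\begin{itemize}
\item $T_1$ on $\{x\le -1,\ y\ge 2-t,\ x+y\le 1\}$;
\item $T_3$ on $\{x\ge 2-t,\ y\ge -3,\ x+y\ge -1\}$, where $y\le t\le 2t-3$ uses $t\ge 3$;
\item $T_4$ on $\{x\ge 0,\ y\le -2\}$.
\end{itemize}
If $x\ge 0$ and the vertex is not in $B_t(T_3)$, then $y\le -4$ or $x+y\le -2$. Either way $y\le -2$, so it lies in $B_t(T_4)$. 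If $x\le -1$ and the vertex is not in $B_t(T_1)$, there are two cases:
\begin{itemize}
\item $x+y\ge 2$: this forces $y\ge 3$ and $x\ge 2-t$, so the vertex is in $B_t(T_3)$;
\item $y\le 1-t$: together with $x+y\ge -t$ this forces the single vertex $(-1,1-t)$, which lies in $B_t(T_2)$ because $1-t\le -2$ and $-t\le -3$.
\end{itemize}

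\textbf{The fix for $B_t(-t,1)$.}
\begin{itemize}
\item $T_1$ covers $\{x\le -1,\ y\ge 2-t\}$.
\item $T_2$ covers the layer $y=1-t$ with $-t\le x\le -1$.
\item $T_3$ covers $(0,y)$ for $-1\le y\le 1$.
\item $T_4$ covers $(0,y)$ for $1-t\le y\le -2$.
\end{itemize}
With Step~2 replaced by these splits, the rest of your plan goes through.
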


\begin{proof}
Membership in an EJ ball is the three-strip condition
\begin{equation}
    |x-c_1|\le t,\qquad |y-c_2|\le t,
    \qquad |x+y-c_1-c_2|\le t.
    \label{eq:strip-membership-multi}
\end{equation}
Substituting the six failed centers in \eqref{eq:six-fault-cluster} and the five repair centers in \eqref{eq:six-fault-five-repair} gives the following strip-containment relations:
\begin{align}
B_t(-t,1)&\subseteq B_t(R_1)\cup B_t(R_2)\cup B_t(R_3)\cup B_t(R_4),\nonumber\\
B_t(-1,-t-1)&\subseteq B_t(R_1)\cup B_t(R_2)\cup B_t(R_4),\nonumber\\
B_t(0,0)&\subseteq B_t(R_1)\cup B_t(R_2)\cup B_t(R_3)\cup B_t(R_4),\nonumber\\
B_t(t-1,-t-2)&\subseteq B_t(R_2)\cup B_t(R_4),\nonumber\\
B_t(t,-1)&\subseteq B_t(R_3)\cup B_t(R_4)\cup B_t(R_5),\nonumber\\
B_t(2t,-2)&\subseteq B_t(R_3)\cup B_t(R_4)\cup B_t(R_5),
\label{eq:six-cover-audit}
\end{align}
where $R_1,\ldots,R_5$ denote the five centers in the order listed in \eqref{eq:six-fault-five-repair}.  Each line of \eqref{eq:six-cover-audit} is obtained by expanding \eqref{eq:strip-membership-multi}.  For example, the failed cell $B_t(t-1,-t-2)$ is shared by $R_2=(-1,-t-2)$ and $R_4=(t,-t-2)$.  A concrete boundary trace is the vertex
\begin{equation}
    p=(2t-1,-t-2)\in B_t(t-1,-t-2).
\end{equation}
It is not in $B_t(R_2)$ because $|p_x+1|=2t>t$, but it is in $B_t(R_4)$ since
\begin{equation}
    |p_x-t|=t-1,
    \qquad |p_y+t+2|=0,
    \qquad |p_x+p_y+2|=t-1.
\end{equation}
Thus the same three strip inequalities that exclude $p$ from the left repair ball place it inside the right repair ball.  The symbolic interval calculation in \eqref{eq:six-cover-audit} applies this interval check to every vertex of each failed cell.  Taking the union of the six containments proves that $R_t$ covers the whole failed region.
\end{proof}

\begin{lemma}[Five-point demand packing]
For every $t\ge3$, the failed region $U(F_t)$ contains a five-point demand packing
\begin{equation}
\begin{split}
W_t=\{&(3t,-2),\ (t-1,t-1),\ (-2t,t+1),\\
      &(2t-1,-t-2),\ (-t-1,-t-1)\}
\end{split}
\label{eq:five-packing-points}
\end{equation}
whose pairwise distances are all greater than $2t$.
\end{lemma}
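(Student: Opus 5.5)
The plan is a direct two-part verification using only the three-strip distance formula \eqref{eq:ej-distance-translated}, just as for the four-point packing $Z_t$ in the four-fault theorem. First I would show $W_t\subseteq U(F_t)$ by assigning each packing point to one failed center of \eqref{eq:six-fault-cluster}. Then I would list all $\binom52=10$ pairwise differences and bound each EJ distance from below by $2t+1$.

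For membership, write $w_1,\dots,w_5$ in the order of \eqref{eq:five-packing-points}. The intended assignment is
\begin{equation}
w_1\in B_t(2t,-2),\quad w_2\in B_t(t,-1),\quad w_3\in B_t(-t,1),\quad w_4\in B_t(t-1,-t-2),\quad w_5\in B_t(-1,-t-1).
\end{equation}
The corresponding differences are $(t,0)$, $(-1,t)$, $(-t,t)$, $(t,0)$ and $(-t,0)$. In each case the three strip quantities $|\Delta x|,|\Delta y|,|\Delta x+\Delta y|$ are at most $t$, so every $w_i$ lies on the boundary of a failed cell. Note that $w_2$ cannot be placed in $B_t(0,0)$, since $x+y=2t-2>t$ for $t\ge3$. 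This is why it is assigned to $B_t(t,-1)$.

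For the packing property, I would compute each difference $w_i-w_j=(\Delta x,\Delta y)$ and use $d_{\EJ}=\max\{|\Delta x|,|\Delta y|,|\Delta x+\Delta y|\}$. It suffices to exhibit one strip coordinate exceeding $2t$. The witnesses are:
\begin{itemize}
\item $w_1-w_2=(2t+1,-t-1)$, witness $2t+1$;
\item $w_1-w_3=(5t,-t-3)$, witness $5t$;
\item $w_1-w_4=(t+1,t)$, whose sum gives $2t+1$;
\item $w_1-w_5=(4t+1,t-1)$, witness $4t+1$;
\item $w_2-w_3=(3t-1,-2)$, witness $3t-1$;
\item $w_2-w_4=(-t,2t+1)$, witness $2t+1$;
\item $w_2-w_5=(2t,2t)$, whose sum gives $4t$;
\item $w_3-w_4=(1-3t,2t+3)$, witness at least $2t+3$;
\item $w_3-w_5=(1-t,2t+2)$, witness $2t+2$;
\item $w_4-w_5=(3t,-1)$, witness $3t$.
\end{itemize}
Every entry is strictly greater than $2t$ for $t\ge3$; where $t\ge3$ is used, it is only to make terms like $3t-1>2t$ hold. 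So no two points of $W_t$ lie in a common radius-$t$ ball, which has diameter at most $2t$.

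The main obstacle is only bookkeeping. The three tight pairs, $(w_1,w_2)$, $(w_1,w_4)$ and $(w_2,w_4)$, sit at distance exactly $2t+1$. For these the decisive strip must be chosen correctly: for $(w_1,w_4)$ it is the diagonal strip $x+y$, not either axial coordinate alone. I would present the ten computations as a distance table, as in the four-fault theorem, so each can be checked at a glance.
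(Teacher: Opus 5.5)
Your proposal is correct and follows the paper's proof: you use the same assignment of each $w_i$ to a failed cell, check the ten pairwise distances, and apply the same argument that a radius-$t$ ball has diameter at most $2t$. Two entries are off, though neither affects the conclusion. First, $w_3-w_4=(1-4t,\,2t+3)$, not $(1-3t,\,2t+3)$, so that distance is $4t-1$. Second, $d_{\mathrm{EJ}}(w_1,w_5)=5t$, coming from the diagonal strip, rather than $4t+1$. Your witnesses $2t+3$ and $4t+1$ still exceed $2t$, but if you present these as a distance table, use the exact values $4t-1$ and $5t$.
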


\begin{proof}
Each demand point lies in the failed region.  Specifically,
\begin{align}
(3t,-2)&\in B_t(2t,-2),&
(t-1,t-1)&\in B_t(t,-1),\nonumber\\
(-2t,t+1)&\in B_t(-t,1),&
(2t-1,-t-2)&\in B_t(t-1,-t-2),\nonumber\\
(-t-1,-t-1)&\in B_t(-1,-t-1).\nonumber
\end{align}
For any two vertices $u,v$, a radius-$t$ ball can cover both only if $d_{\mathrm{EJ}}(u,v)\le2t$.  The pairwise distances among the five demand points are
\begin{equation}
\begin{array}{c|cccc}
 & w_2&w_3&w_4&w_5\\
\hline
w_1&2t+1&5t&2t+1&5t\\
w_2&     &3t-1&2t+1&4t\\
w_3&     &    &4t-1&2t+2\\
w_4&     &    &    &3t
\end{array}
\end{equation}
where $w_1,\ldots,w_5$ are ordered as in \eqref{eq:five-packing-points}.  For $t\ge3$ every displayed distance is strictly larger than $2t$; the binding entry is $3t-1$, which already exceeds $2t$ for every $t>1$.  The construction also covers the boundary case $t=3$, where the third repair center in \eqref{eq:six-fault-five-repair} becomes $(2,0)$ and all displayed strip inequalities remain valid.  Since the diameter of a radius-$t$ EJ ball is at most $2t$, any single replacement ball can cover at most one point of this packing.  Therefore at least five replacement balls are necessary.
\end{proof}

\begin{theorem}[Exact dense six-fault subadditivity]
For the six-fault cluster $F_t$ in \eqref{eq:six-fault-cluster}, and every $t\ge3$,
\begin{equation}
    \rho_{\mathrm{EJ}}^{(6)}(t,F_t)=5.
\end{equation}
Consequently, the independent bound $2q=12$ can overestimate the optimum by seven replacements.
\end{theorem}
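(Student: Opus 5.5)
The theorem follows by combining the two lemmas just proved, so the plan is a two-sided squeeze followed by a one-line arithmetic comparison with the independent bound.

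For the upper bound, I will invoke the five-ball cover lemma. The five centers $R_t$ in \eqref{eq:six-fault-five-repair} cover $U(F_t)$ for every $t\ge3$, so $\rho_{\mathrm{EJ}}^{(6)}(t,F_t)\le5$. For completeness I would briefly recheck the containments \eqref{eq:six-cover-audit} cell by cell, using the three-strip test \eqref{eq:strip-membership-multi}. The check splits each failed hexagon along one or two axial supporting lines, as in the proofs of Lemmas~6 and~7. On each piece, the strip inequalities of the failed cell imply the shifted strip inequalities of one designated repair center.

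For the lower bound, I will use the five-point demand packing $W_t$ from \eqref{eq:five-packing-points}. Each $w_k$ lies in $U(F_t)$, and the pairwise distances are all $>2t$ for $t\ge3$. A radius-$t$ ball has diameter at most $2t$ by the triangle inequality, so no replacement ball can contain two of the $w_k$. Any valid repair must therefore contain at least five balls, giving $\rho_{\mathrm{EJ}}^{(6)}(t,F_t)\ge5$. Here I would verify a few of the distance entries directly from \eqref{eq:ej-distance-translated}. For example, $w_1-w_2=(2t+1,-t-1)$ gives $\max\{2t+1,t+1,t\}=2t+1$.

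Combining the two bounds gives equality. The consequence follows because the independent-repair theorem gives $2q=12$ for $q=6$, and $12-5=7$.

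The main obstacle is the upper bound. The cover lemma's verification in the text is only sketched, with one traced vertex per cell, so a fully rigorous version needs the piecewise strip argument for all six cells. The hardest pieces are the cells $B_t(-t,1)$ and $B_t(0,0)$, which need four repair balls. For these I would first split the cell by the $x$- and $(x+y)$-supporting lines that bound $B_t(R_1)$. I would then assign the remaining arcs to $R_2$, $R_3$, and $R_4$ according to which single strip inequality fails, checking the boundary case $t=3$ separately.
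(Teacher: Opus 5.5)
Your proposal is correct and follows the paper's argument exactly: the upper bound comes from the five-ball cover lemma for $R_t$, the lower bound from the five-point packing $W_t$ (all pairwise distances exceed $2t$), and $12-5=7$ gives the consequence. The strip-by-strip recheck you outline does go through for all $t\ge3$; for instance, at $t=3$ the vertex $(0,-2)$ of $B_t(-t,1)$ is covered only by $R_4=(t,-t-2)$, not by $R_2$.
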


\begin{proof}
The five-ball construction above gives $\rho_{\mathrm{EJ}}^{(6)}(t,F_t)\le5$.  The five-point demand-packing lemma gives the reverse inequality.  Therefore the exact value is five.
\end{proof}

The six-fault theorem is included not as a numerical curiosity but as a structural result: dense EJ clusters can share repair balls across several adjacent failed service cells.  This behavior is impossible in the separated regime and is much stronger than the two-fault drop from four to three.

\begin{remark}[Role of the $t\le12$ multi-fault audit]
The dense four-fault and six-fault cluster theorems above are symbolic results valid for all $t\ge3$; they do not depend on extrapolating the finite audit.  The audit is used in the same role as in the Gaussian local-repair paper: it validates the tested finite search space, validates the implementation of the coverage and overlap identities, and exposes additional patterns for future closed-form classification.  The main mathematical claims in this section are the universal $2q$ upper bound, separated additivity, the exact four-fault cluster, the exact six-fault cluster, and the inclusion--exclusion overlap identity.
\end{remark}

\section{Exact Multi-Fault Overlap Accounting}

For multiple failures, the replacement number is not the only useful quantity.  Independent local repairs may interact, creating vertices of the failed region that are covered two, three, or more times by replacement balls.  The following identity is unconditional: it applies to any valid repair set in any failed region.

For $z\in U(F)$, define the replacement multiplicity
\begin{equation}
    \mu_R(z)=|\{r\in R:z\in B_t(r)\}|.
\end{equation}
The true overlap mass inside the failed region is
\begin{equation}
    O(R)=\sum_{z\in U(F)}(\mu_R(z)-1).
    \label{eq:true-overlap-mass}
\end{equation}
For $j\ge2$, define
\begin{equation}
    P_j(R)=\sum_{z\in U(F)}\binom{\mu_R(z)}{j}.
\end{equation}
Thus $P_2(R)$ is the total pairwise common-intersection mass, $P_3(R)$ is the total triple-intersection mass, and so on.

\begin{lemma}[Vertexwise binomial identity]
For every integer $m\ge1$,
\begin{equation}
    m-1=\sum_{j=2}^{m}(-1)^j\binom{m}{j}.
\end{equation}
\end{lemma}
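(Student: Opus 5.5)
The plan is to derive the identity from the binomial expansion of $(1-1)^m$. For every integer $m\ge1$,
\begin{equation}
    0=(1-1)^m=\sum_{j=0}^{m}(-1)^j\binom{m}{j}=1-m+\sum_{j=2}^{m}(-1)^j\binom{m}{j},
\end{equation}
where the $j=0$ and $j=1$ terms, $\binom{m}{0}=1$ and $-\binom{m}{1}=-m$, have been separated out. Rearranging gives $m-1=\sum_{j=2}^{m}(-1)^j\binom{m}{j}$, which is the claim.

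The only point needing care is the boundary case $m=1$. There the sum on the right is empty and equals $0$, which matches $m-1=0$. The expansion above remains valid in this case, since its tail sum is simply empty.

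The reason for stating the identity this way is its later use. Applied to $m=\mu_R(z)\ge1$ for each $z\in U(F)$, validity of the repair guarantees $\mu_R(z)\ge1$. We may also extend the sum to all $j\ge2$, because $\binom{m}{j}=0$ for $j>m$. Summing over $z$ then yields $O(R)=\sum_{j\ge2}(-1)^jP_j(R)$.

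I expect no real obstacle. The one thing to state explicitly is the empty-sum convention in the case $m=1$.
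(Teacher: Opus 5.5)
Your proposal is correct and is essentially the paper's own proof: both expand $(1-1)^m=\sum_{j=0}^{m}(-1)^j\binom{m}{j}=0$, separate the $j=0$ and $j=1$ terms to get $1-m+\sum_{j=2}^{m}(-1)^j\binom{m}{j}=0$, and rearrange. Your remarks on the empty sum at $m=1$ and on why $\mu_R(z)\ge1$ for a valid repair are correct additions that the paper leaves implicit.
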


\begin{proof}
The binomial identity $\sum_{j=0}^{m}(-1)^j\binom{m}{j}=0$ gives
\begin{equation}
    1-m+\sum_{j=2}^{m}(-1)^j\binom{m}{j}=0,
\end{equation}
which is equivalent to the stated formula.
\end{proof}

\begin{theorem}[Exact multi-failure overlap identity]
For any valid repair $R$ of any failed resource set $F$,
\begin{equation}
    O(R)=\sum_{j=2}^{M}(-1)^jP_j(R),
    \label{eq:multifault-ie}
\end{equation}
where
\begin{equation}
    M=\max_{z\in U(F)}\mu_R(z).
\end{equation}
\end{theorem}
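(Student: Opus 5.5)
The plan is to reduce the identity to the vertexwise binomial identity of Lemma~10 and then sum over the failed region, exchanging the order of summation.

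First, validity of $R$ means $U(F)\subseteq\bigcup_{r\in R}B_t(r)$. Hence $\mu_R(z)\ge1$ for every $z\in U(F)$, so Lemma~10 applies with $m=\mu_R(z)$ at each vertex. This is the only place validity is used, and it is the one point needing care: without it, an uncovered vertex would contribute $-1$ to $O(R)$ but $0$ to every $P_j(R)$. The identity would then fail, so the hypothesis is genuinely needed.

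Second, since $\mu_R(z)\le M$ for every $z$, and $\binom{m}{j}=0$ for $j>m$, I would rewrite each vertex's expansion with a common upper limit $M$:
\begin{equation}
\mu_R(z)-1=\sum_{j=2}^{M}(-1)^j\binom{\mu_R(z)}{j}.
\end{equation}
Summing over the finite set $U(F)$ and swapping the two finite sums then gives
\begin{equation}
O(R)=\sum_{z\in U(F)}\sum_{j=2}^{M}(-1)^j\binom{\mu_R(z)}{j}=\sum_{j=2}^{M}(-1)^jP_j(R),
\end{equation}
which is \eqref{eq:multifault-ie}.

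Third, I would note the degenerate case $M=1$. There the sum is empty and $O(R)=0$ agrees with it, consistent with Lemma~10 at $m=1$.

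None of the steps is a real obstacle. The argument is finite bookkeeping, and the only subtlety is the appeal to validity that ensures $\mu_R\ge1$ on $U(F)$.
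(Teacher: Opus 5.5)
Your proposal is correct and follows essentially the same route as the paper: apply the vertexwise binomial identity with $m=\mu_R(z)$ at each $z\in U(F)$, sum over the failed region, and exchange the finite sums. The paper leaves two details implicit that you make explicit, and both are right: validity guarantees $\mu_R(z)\ge1$, which the lemma requires, and $\binom{m}{j}=0$ for $j>m$ lets every vertex use the common upper limit $M$.
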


\begin{proof}
Apply the vertexwise identity to $m=\mu_R(z)$ at each $z\in U(F)$ and sum over all failed-region vertices.  Exchanging the finite sums gives exactly \eqref{eq:multifault-ie}.
\end{proof}

\begin{corollary}[Triple-core specialization]
If a valid repair set $R$ satisfies $\max_{z\in U(F)}\mu_R(z)\le3$, then
\begin{equation}
    O(R)=P_2(R)-P_3(R).
\end{equation}
Moreover, if $C_3(R)$ is the number of failed-region vertices covered by exactly three replacement balls, then $P_3(R)=C_3(R)$ and
\begin{equation}
    O(R)=P_2(R)-C_3(R).
\end{equation}
\end{corollary}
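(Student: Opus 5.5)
The plan is to specialize the exact multi-failure overlap identity of the preceding theorem to the case $M\le 3$, and then to identify $P_3$ with a count of vertices.

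First, I would handle the degenerate cases of $M$. Since $R$ is a valid repair, every $z\in U(F)$ satisfies $\mu_R(z)\ge1$, so $M\ge1$.

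If $M=3$, the identity gives
\begin{equation}
O(R)=(-1)^2P_2(R)+(-1)^3P_3(R)=P_2(R)-P_3(R)
\end{equation}
directly.

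If $M\le2$, the sum in the identity stops before $j=3$. However, each $P_j(R)$ with $j>M$ vanishes, because $\binom{m}{j}=0$ whenever $m<j$. Appending the zero terms $P_3(R)$ (and $P_2(R)$ when $M=1$) therefore yields the same formula. Equivalently, I could bypass $M$ entirely. For each $m\in\{1,2,3\}$ one checks that $m-1=\binom{m}{2}-\binom{m}{3}$, since the three cases give $0=0-0$, $1=1-0$ and $2=3-1$. Summing this over $z\in U(F)$ with $m=\mu_R(z)$ proves $O(R)=P_2(R)-P_3(R)$ without any case split. I would present this vertexwise version as the cleanest route.

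Second, for $P_3$, note that $\binom{\mu_R(z)}{3}$ equals $1$ when $\mu_R(z)=3$ and equals $0$ when $\mu_R(z)\le2$, using the hypothesis $\mu_R\le3$. Summing over $U(F)$ gives $P_3(R)=C_3(R)$. Substituting this into the first identity gives $O(R)=P_2(R)-C_3(R)$.

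There is no real obstacle here. The only point requiring care is that the theorem's sum runs to $M$ rather than to $3$, and the vanishing of the binomial coefficients for $j>m$ (or the direct vertexwise check above) removes that issue.
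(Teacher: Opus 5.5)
Your proposal is correct and follows the paper's argument: the paper specializes the exact overlap identity by noting that every $P_j$ with $j\ge4$ vanishes when all multiplicities are at most three, and it observes that $\binom{\mu_R(z)}{3}$ is the indicator of $\mu_R(z)=3$. Your extra handling of the case $M\le2$, and the direct vertexwise check $m-1=\binom{m}{2}-\binom{m}{3}$ for $m\in\{1,2,3\}$ (which correctly uses validity of $R$ to exclude $m=0$), are harmless refinements of that same proof.
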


\begin{proof}
When the maximum multiplicity is at most three, all terms $P_j$ with $j\ge4$ vanish.  Also $\binom{\mu_R(z)}{3}$ is one exactly at vertices with multiplicity three and zero elsewhere.
\end{proof}

For an independent canonical repair of $q$ failed EJ cells, the additive one-cell overlap prediction is
\begin{equation}
    A_q=q t^2.
\end{equation}
The extra dense-core overlap is
\begin{equation}
    \Omega_{\mathrm{extra}}(R)=O(R)-A_q.
\end{equation}
Combining this definition with the exact multi-failure overlap identity gives the general correction formula
\begin{equation}
    \Omega_{\mathrm{extra}}(R)=\sum_{j=2}^{M}(-1)^jP_j(R)-qt^2.
\end{equation}
If a repair has $M\le3$, this reduces to
\begin{equation}
    \Omega_{\mathrm{extra}}(R)=P_2(R)-C_3(R)-qt^2.
\end{equation}

\begin{example}[Multiplicity-four overlap accounting]
The full formula is needed in EJ dense clusters.  One certified audit row has $t=4$, three failed resources
\begin{equation}
    F=\{(-4,1),\ (-1,-5),\ (0,0)\},
\end{equation}
and four replacement centers
\begin{equation}
    R=\{(-5,2),\ (-1,-6),\ (1,-1),\ (1,0)\}.
\end{equation}
The computed multiplicity profile has maximum multiplicity $M=4$ and
\begin{equation}
    P_2=95,\qquad P_3=20,\qquad P_4=1.
\end{equation}
Therefore the true repeated-coverage mass is
\begin{equation}
    O(R)=P_2-P_3+P_4=95-20+1=76.
\end{equation}
This example also shows why the triple-core shortcut is not universal.  The same row has sixteen vertices of exact multiplicity three and one vertex of multiplicity four; the multiplicity-four vertex contributes $\binom{4}{3}=4$ to $P_3$ and $\binom{4}{4}=1$ to $P_4$.  Thus replacing $P_3$ by only the number of exactly triple-covered vertices would give the wrong overlap.  The additive one-fault prediction for this row is $3t^2=48$, so the dense-core excess is $76-48=28$.
\end{example}

\section{Exact Validation and Audit}
\label{sec:validation}

The theorems above are geometric and do not rely on simulation, but the implementation was audited in the same spirit as the Gaussian local-repair study.  The one-fault audit exhaustively enumerated all candidates in $B_{2t}(0)\setminus\{0\}$ for $1\le t\le20$, minimized repair count and then overlap, and confirmed $\rho_{\mathrm{EJ}}(t)=2$, $\Omega_{\mathrm{EJ}}(t)=t^2$, and the $3t$ canonical optimum-pair count for every tested radius.  The targeted two-fault audit enumerated all neighboring displacements $D=mg_1+ng_2$ with $d_{\mathrm{EJ}}(0,D)\le4t$ for $2\le t\le20$ and solved 832 exact set-cover instances: 78 have $K=3$, 754 have $K=4$, and no row is unresolved.  The stable $K=3$ rows are the infinite families $D=\pm g_2$ and $D=\pm(g_1+2g_2)$ proved above; the additional $K=3$ rows occur only at $t=2$ and $t=4$ and are recorded as finite short-corridor exceptions.

The multi-fault optimizer solved 19,400 exact instances for $2\le t\le12$ and $3\le q\le6$ in clustered, line, pair-plus, triangle, and random-local modes.  The optimum replacement counts show strong subadditivity: maximum savings relative to independent repair were 2, 4, 5, and 7 for $q=3,4,5,6$, respectively, and maximum replacement multiplicity was 4.  The smooth increase in average saving across $q$ reflects the balanced audit design and the growing number of dense local sharing opportunities, not a proved asymptotic law.  Every audit row records failed centers, replacement centers, uncovered vertices, multiplicity counts, higher-order coverage masses, and inclusion--exclusion residuals; the residual is zero in all rows.  The audit is therefore a reproducibility and pattern-discovery layer, while the universal mathematical claims are the one-fault formulas, the symbolic two-fault mechanisms, separated additivity, the dense four- and six-fault cluster theorems, and the overlap identity.

\section{Conclusion}

This paper introduced local fault repair for perfect resource placements in dense Eisenstein--Jacobi networks.  After one resource fails, the uncovered region is exactly the former hexagonal service cell, and every useful replacement lies within distance $2t$ of the failed resource.  A single nonfailed replacement cannot cover the failed cell, while two explicitly placed replacements always do.  Hence $\rho_{\mathrm{EJ}}(t)=2$ for every $t\ge1$.  Among all minimum-size repairs, every two-ball cover has at least $t^2$ repeated vertices inside the failed cell, and the canonical axial repair attains this bound exactly.  Thus $\Omega_{\mathrm{EJ}}(t)=t^2$.

For two failed resources, EJ repair is not universally additive.  Four replacements always suffice by independent repair, but two infinite neighboring displacement families admit three-replacement repairs.  Additive behavior is proved algebraically for endpoint-rigid pairs, for the diagonal-corridor pair $D=\pm(g_1+g_2)$, and for separated failed cells.  A targeted exact optimizer for $2\le t\le20$ classified 832 neighboring displacement cases, finding 78 non-additive $K=3$ cases and 754 additive $K=4$ cases with no unresolved rows, in agreement with the symbolic mechanisms and the listed small-radius exceptions.  This separates the EJ theory from the Gaussian local-repair paper, where the two-fault value is additive for all displacements.

For multiple failed resources, the paper proves both an additive regime and a dense-cluster subadditive regime.  Independent canonical repair gives a $2q$ upper bound for $q$ failed resources, and this bound is exact whenever failed cells are pairwise more than $4t$ apart.  In contrast, a fixed four-fault dense cluster has exact repair number four and a fixed six-fault dense cluster has exact repair number five for every $t\ge3$, giving savings of four and seven replacements relative to independent repair.  The exact multi-fault audit for $2\le t\le12$ confirms that this subadditivity is common in clustered EJ failures.  Finally, repeated coverage inside any failed region is governed by an exact inclusion--exclusion identity over replacement multiplicities.  Together these results show that local repair in Eisenstein--Jacobi placements has a distinct hexagonal theory: one-fault overlap is quadratic, two-fault repair can be non-additive, and dense multi-fault clusters can reuse replacement balls across several failed service cells.

\section*{Acknowledgments}
The author thanks the Department of Computer Science, Faculty of Science, Kuwait University, for its support and research environment.  This work did not receive a specific grant from any funding agency in the public, commercial, or not-for-profit sectors.

\section*{Declaration of competing interest}
The author declares that he has no known competing financial interests or personal relationships that could have appeared to influence the work reported in this paper.

\section*{Data availability}
The validation data and audit data generated during the current study are available from the author upon reasonable request.

\section*{Declaration of generative AI and AI-assisted technologies in the writing process}
During the preparation of this work, the author used an AI-assisted language tool, to support manuscript wording refinement, and LaTeX preparation. The author reviewed and edited all AI-assisted output and takes full responsibility for the content of the published article.

\end{document}